\documentclass{autart}     
\usepackage{float}
\usepackage{natbib}
\usepackage{graphicx} 
    \usepackage{titlesec}
\usepackage{amsmath}
\usepackage{enumerate}
\usepackage{tikz}
\usepackage{algorithmic}
\usepackage{amssymb,amsfonts,comment}
\usepackage{algorithm}
\usepackage{tikz}
\usepackage{subcaption}
\usetikzlibrary{arrows.meta,positioning,fit,backgrounds}
\usepackage{cases}
\DeclareMathOperator*{\argmin}{arg\,min}
\definecolor{myred}{RGB}{200,30,30}
\definecolor{myyellow}{RGB}{200,200,30}
\definecolor{mygreen}{RGB}{30,200,30}

\newcommand{\sz}[1]{{\color{red}{#1}}}

\usepackage{hyperref}
\hypersetup{
	colorlinks = true,
	citecolor = cyan,
}
\usepackage{wrapfig}

\newtheorem{remark}{Remark}

\newtheorem{assumption}{Assumption}
\newtheorem{example}{Example}
\newtheorem{definition}{Definition}
\newtheorem{lemma}{Lemma}
\newtheorem{corollaryx}{Corollary}
\newtheorem{theorem}{Theorem}

\newtheorem{problem}{Problem}

\makeatletter
\renewenvironment{pf}
{\par\noindent\textbf{Proof.}\ }
  {\hfill$\Box$\par}
\makeatother

\begin{document}
	\begin{frontmatter}
		
		\title{Stochastic Adaptive Control for   Systems with  Nonlinear Parameterization: Almost Sure Stability  and  Tracking 
        } 
		\thanks[footnoteinfo]
        {This work is supported by The Wallenberg AI, Autonomous Systems
and Software Program (WASP) which is funded by the Knut and Alice
Wallenberg Foundation.}
		\author[KTH]{Lantian Zhang
        }\ead{lantian@kth.se},    
		\author[EECS]{Bo Wahlberg}\ead{bo@kth.se},              
		\author[KTH]{Silun Zhang
        }\ead{silunz@kth.se}  
		
		\address[KTH]{Department of Mathematics,  KTH Royal Institute of Technology, Stockholm,  Sweden}
		
		\address[EECS]{Division of Decision and Control Systems, KTH Royal Institute of Technology, Stockholm, Sweden}
		
		\begin{keyword}                           
             Stochastic nonlinear systems, Adaptive control,  
System identification, Adaptive tracking.
		\end{keyword}                             

		\begin{abstract} 
    This paper concerns the adaptive control problem for a class of nonlinear stochastic systems in which the state update is given by a nonlinear function of linear dynamics plus additive stochastic noise. Such systems arise in a wide range of applications, including recurrent neural networks, social dynamics, and signal processing.
    Despite their importance, adaptive control for these systems remains relatively unexplored in the literature. This gap is primarily due to the inherently nonconvex dependence of the system dynamics on unknown parameters, which significantly complicates both controller design and analysis. To address these challenges, we propose an online nonlinear weighted least-squares (WLS)–based parameter estimation algorithm and establish the global strong consistency of the resulting parameter estimates. In contrast to most existing results, our consistency analysis does not rely on restrictive assumptions such as persistent excitation conditions of the trajectory data, making it applicable to stochastic adaptive control settings.
Building on the proposed estimator, we further develop an adaptive control algorithm with an attenuating excitation signal that can effectively combine
adaptive estimation and feedback control. Finally,
we are able to show that the resulting closed-loop system is globally stable and that the system trajectory can track, in a long-run average sense, the reference trajectory generated with the true system parameters. The proposed methods and theoretical results are finally validated through simulations in two nonlinear interaction network applications.      
		\end{abstract}
	\end{frontmatter}
	
	\section{INTRODUCTION}
This paper considers the adaptive control problem for a class of nonlinear stochastic systems in the form
\begin{equation}\label{system}
\begin{aligned}
\begin{array}{ll}
			x_{t+1}&=f(A^{*}x_{t}+B^{*}u_{t})+w_{t+1}
            ,\; t\geq 0,
		\end{array}
\end{aligned}
 \end{equation}
where $x_t\in \mathbb{R}^n$
is the system state with initial condition $x_0 \in \mathbb{R}^n$, $u_t\in \mathbb{R}^m$ and $w_{t+1}\in \mathbb{R}^n$  represent the input and random noise process, respectively,  $f:\mathbb{R}^{n}\to \mathbb{R}^{n}$ is a known nonlinear map, and matrices $A^{*}\in \mathbb{R}^{n\times n}$ and $B^{*}\in \mathbb{R}^{n\times m}$ are unknown parameters. This class of nonlinear dynamics occurs in many emerging applications, including, e.g., 
Elman recurrent neural networks (RNNs) in machine learning. Therein, the  model \eqref{system} is also known as a {\it nonlinear recurrent dynamics} with the activation map $f(\cdot)$,
and common choices for the nonlinear activation maps include the Heaviside, ReLU, and logistic functions (see, e.g., \citet{Mienye:2024,rnn, elman:1990}). Another example of the nonlinear model \eqref{system} arises in the dynamics of social networks, where the nonlinear map $f$ reflects the saturation and binarization of opinions as individuals interact with their social peers, while the input $u_t$ models external interventions from influential actors or opinion leaders in the network
(see, e.g., \cite{ dur2010, su, ancona})
Additional examples of the nonlinear model \eqref{system} include neuronal dynamics in biomedical engineering (see \cite{jd1989} and Sect. 2 in \cite{wu2001introduction}) and overflow oscillations of digital filters in signal processing (see, e.g., \cite{Ebert1969, Ooba2003}). 

Adaptive control is a framework for designing feedback controllers that continuously learn from and adjust to unknown or time-varying dynamics online and update their control actions in real time. The primary objective of adaptive control is to ensure closed-loop stability and optimal performance in the presence of uncertainties, variations, and disturbances.
A key idea to adaptive control is to use the
{\it certainty-equivalence principle}, which estimates the unknown parameters online using observed data and designs the controller by treating the latest estimates as the true parameters. Although
 substantial theoretical advances and practical implementations have been achieved for linear stochastic adaptive control \citep{goodwin, kumar, g1995, g1999, liunian}, these results do not extend easily to nonlinear stochastic systems. This is due to several fundamental challenges: first, estimating unknown parameters in a nonlinear model generally leads to nonconvex optimization problems; second, unlike linear systems, the stabilization of nonlinear systems has no ``one-size-fits-all" method, and their behavior can vary drastically depending on the specific nonlinearities involved, such as multiple equilibria, finite escape time, limit cycles, and even bifurcations; finally, nonlinear operations distort the distribution of the process noise and make the joint distribution of the state trajectory nontrivial, which significantly complicates system identification and adaptive control for the nonlinear dynamics \citep{Lennart Ljung}. Motivated by these challenges, this paper investigates the adaptive control for an important class of nonlinearly parameterized stochastic systems given in \eqref{system} when the parameter matrices $A^{*}$ and $B^{*}$ are unknown. 
\subsection{Related literature}
Over the past decades, adaptive control of linear stochastic systems, i.e., the case $f(x)=x$ in \eqref{system}, has been extensively studied. 
Following the landmark work in \cite{str}, rigorous convergence theories for self-tuning regulators have been established by combining least-squares (LS) estimation with minimum-variance control 
(see, e.g., \cite{kumar, g1995}).
In addition, based on the self-convergence property of the weighted least-squares (WLS) algorithm, the convergence theories of adaptive linear–quadratic–Gaussian (LQG) control have been developed in \cite{lg1996, g1999}. More recent advances can be found in~\cite{jiang, Dorfler, liunian, Chiuso} and the references therein. 
Moreover, substantial progress has also been made on adaptive control for linear systems with nonlinear observations $y_t=h(x_t)$, commonly referred to as Wiener systems. Under the assumption that the system matrix $A^{*}$ is stable, several adaptive regulation and tracking methods have been developed (see, e.g., \cite{yanlong2013, lan2023, ZZ2022, bo, wenxiao2023} and the references therein). 

Although extensive theories have been developed for linear stochastic adaptive control, extending these results to nonlinear stochastic systems faces intrinsic challenges. Existing studies largely focus on nonlinear stochastic systems with linearly parameterized uncertainties of the form $x_{t+1}=\theta^{*\top}f(x_{t}, u_{t})+w_{t+1}$, where parameter estimation can still be performed using linear techniques such as LS. However, even when estimation is tractable, ensuring closed-loop stability is considerably more challenging due to nonlinear dynamics, especially when $f$ exhibits superlinear growth.
For example, when $f(x)=O(\|x\|^{b})$ and the unknown parameter $\theta^{*}$ is scalar, \cite{g1997} have shown that adaptive control systems based on LS are globally stabilizable if $b<4$ but unstabilizable when $b\ge 4$. The adaptive stabilization problem based on LS with multiple unknown parameters has later been investigated in~\cite{li2013}. More recent related works can be found, for example, in~\cite{wuquan, liu2025}.

The aforementioned works are primarily concerned with linear-in-parameter systems. In contrast, for nonlinearly parameterized systems, the estimation loss is generally highly nonconvex. As a result, even constructing a consistent parameter estimator is challenging, let alone establishing developing nonlinear adaptive controllers.
Recently, motivated by neural-network–type parameterizations, a growing body of work has investigated parameter estimation for nonlinearly parameterized systems \eqref{system} under a class of nonconvex objectives (see, e.g., \citep{foster, oymak, Kowshik, sattar, lan2025}). These studies typically analyze estimation error using data collected from trajectories of an already stabilized dynamical system. Such trajectory data can often be shown to satisfy persistence of excitation(PE) conditions, which are necessary to guarantee consistent parameter estimation. However, adaptive control is inherently more challenging than purely parameter estimation  for the nonlinearly parameterized system \eqref{system}, which remains unresolved, especially when the closed-loop trajectory fails to satisfy the PE conditions.
Due to these analytical challenges, only a few works have been devoted to adaptive control for nonlinearly parameterized stochastic systems. \cite{hongbin} has studied the adaptive bounded-input–bounded-output (BIBO) stabilization problem using a weighted least-squares (WLS)-type algorithm, where the unknown component belongs to a given set which has a finite number of members.
Furthermore, \cite{li2015} has studied the adaptive stabilization problem for a class of nonlinearly parameterized uncertain systems with a scalar-valued unknown parameter. In the scalar case, they showed that the required PE conditions for parameter estimation actually hold. However, as pointed out by the authors in \cite{li2015}, ``for the multiple-parameter case, the analysis will in general be extremely difficult".
\subsection{Main contributions}
This paper studies online identification and adaptive control for the nonlinearly parameterized stochastic system \eqref{system} with an unknown multidimensional parameter $[A,B]\in\mathbb{R}^{n\times(n+m)}$.  Inspired by \cite{lg1996, g1999}, we propose a novel nonlinear WLS-type parameter estimation algorithm and develop an adaptive control method that incorporates an attenuating exploration signal into the certainty-equivalence controller. The main contributions are threefold:
\begin{itemize}
    \item We propose a novel online parameter estimation algorithm for recovering the parameter $[A^{*}, B^{*}]$ from the trajectory data $\{x_t, u_t\}_{t\geq 0}$.
Unlike the parameter estimation results reported in \citep{foster, Kowshik, oymak, sattar}  
for system \eqref{system}, our estimation algorithm achieves strong consistency without requiring the data to be  persistently exciting (See
Remark \ref{re1}).
    \item We design an adaptive controller and show that, under a stabilizability condition on the system, the closed-loop system induced by the proposed adaptive controller is globally stable.  In particular, global stability is also guaranteed when the parameter estimates do not converge to the true parameter. Compared with existing results,
    our results are applicable to a class of nonlinear stochastic systems with multiple unknown parameters, which are not linearly parameterized.
    \item We further prove that, if the exploration signal introduced in the adaptive controller provides sufficient excitation, then the parameter estimates are strongly consistent along the adaptive closed-loop trajectory. At the same time, the closed-loop trajectory generated by the proposed adaptive controller tracks the reference trajectory induced by the corresponding controller designed with full knowledge of the true parameter.
    This adaptive tracking problem has received relatively limited attention in the existing literature on adaptive
control of nonlinear stochastic systems.
\end{itemize}

The remainder of the paper is organized as follows. Section \ref{sec2} provides the preliminaries and formulates the problem. Section \ref{sec3} presents the results on parameter estimation, including the proposed online parameter estimation algorithm and its convergence properties. Section \ref{sec4} addresses the adaptive control problem, focusing on closed-loop stability under adaptive nonlinear control and the resulting tracking performance. Section \ref{sec5} illustrates two application scenarios with numerical simulations. Finally, Section \ref{sec6} concludes the paper.

\noindent {\bf Notations.} We use $\|\cdot\|$ to denote the Euclidean norm for vectors and the induced operator norm for matrices. For matrices $M, N\in\mathbb{R}^{p\times q}$, the Euclidean/Frobenius inner product is defined as
$\langle M, N\rangle = \operatorname{tr}(M^{\top}N)$.
For a square symmetric matrix $M$, we denote its largest and smallest eigenvalues by $\lambda_{\max}(M)$ and $\lambda_{\min}(M)$, respectively, and use $\operatorname{tr}[M]$ to denote the trace of $M$.
We use the shorthand notation $[n]=\{1,2,\dots,n\}$ for any positive integer $n$. Finally, $I_n$ denotes the $n\times n$ identity matrix.
\\ \noindent {\bf Regularity of Functions.} 
A mapping $f: \mathbb{R}^{n}\to \mathbb{R}^{n}$ is said to be 
 {\it $\alpha-$strongly monotone} on 
$K$ if there exists a constant $\alpha>0$ such that $(x-y)^{\top}(f(x)-f(y))\geq \alpha\|x-y\|^{2}, \; \forall x, y\in K.$
The mapping $f$ is said to be \emph{Lipschitz continuous} on a set $K$ if there exists a constant $\nu>0$ such that $\|f(x)-f(y)\|\leq \nu \|x-y\|,\; \forall x, y\in K.$
In this case, we also say that $f$ is \emph{$\nu$-Lipschitz continuous}.

\section{Preliminaries and Problem Formulation}\label{sec2}
In this paper, we consider the stochastic nonlinear system \eqref{system}. The matrices $A^{*}$ and $B^{*}$ are unknown, while the nonlinear function $f(\cdot)$ is known and the trajectory data $\{x_\tau, u_\tau\}_{0 \le \tau \le t}$ are observable at each time $t\geq 0$. 
Let $\{\mathcal{F}_{t}\}$ be a filtration, i.e., a nondecreasing family of $\sigma$-algebras, defined by $\mathcal{F}_{t}=\sigma(x_0, u_i, w_{i}, 0\leq i \leq t).$
We assume that the nonlinear function $f(\cdot)$ and noise process $\{w_t\}_{t\geq 0}$ satisfy the following two assumptions.
\begin{assumption}\label{assum_nonlinear}
There exist a nonincreasing function $\alpha: \mathbb{R}_{\ge 0} \to \mathbb{R}_{>0}$ and a nondecreasing function $\beta: \mathbb{R}_{\ge 0} \to \mathbb{R}_{>0}$ with $\inf_{z \in \mathbb{R}^n} \alpha(\|f(z)\|) > 0$ and  $\sup_{z \in \mathbb{R}^n} \beta(\|f(z)\|) < \infty$, such that, for any $c\geq 0$ and all 
$\|x\|, \|y\|\leq c,$
\begin{equation}\label{e3}
\begin{aligned}
&\|f(x)-f(y)\|\leq \beta(c)\|x-y\|,\; \text{and} \\
(x&-y)^{\top}(f(x)-f(y))\geq \alpha(c)\|x-y\|^{2}.
\end{aligned}
\end{equation}
\end{assumption}
\begin{assumption}\label{assum_noise}
 The sequence $\{w_{t}, \mathcal{F}_{t}\}_{t\geq 0}$ is a martingale difference sequence, and there exist constants $\gamma>2$ and $\sigma>0$ such that, for all $t\ge 0$,
\begin{equation}\nonumber
    \sup_{t\geq 0}
    \mathbb E \left[\left\|w_{t+1}\right\|^{\gamma}\mid \mathcal{F}_t\right]<\infty,\; 
    \mathbb E \left[w_{t+1}w_{t+1}^{\top}\mid \mathcal{F}_t\right]\geq \sigma I_{n}.
\end{equation}
Moreover, if $\sup_{x\in\mathbb{R}^n}\|f(x)\|<\infty$, we additionally assume that $\{w_t\}_{t\geq 0}$ is almost surely bounded.
\end{assumption}
Next, two
sufficient conditions for Assumption \ref{assum_nonlinear} are given in Lemma \ref{lemsec2} and
further illustrated by the examples given in Subsection
2.1, which shows that this assumption holds in a wide range of stochastic nonlinear systems. Note that nonlinear systems in which the noise term $w_{t+1}$ enters inside the nonlinearity $f(\cdot)$ can also be transformed into the form \eqref{system} when $f(\cdot)$ is monotone and bounded and $\{w_t\}$ is i.i.d.\ Gaussian; see Example~\ref{ex1}.
\begin{lemma}\label{lemsec2}
Suppose $f$ is a component-wise mapping of the form $f(z)=(f_{1}(z_1),\cdots,f_{n}(z_n))^{\top}$, where each $f_{i}(\cdot):\mathbb{R}\rightarrow \mathbb{R}$ acts on the $i$-th component of $z$. Then Assumption~\ref{assum_nonlinear} holds in either of the following two cases:
\begin{itemize}
\item Case (i): For each $i\in [n]$, the function $f_i$ is $\alpha_i$-strongly monotone and $\beta_i$-Lipschitz continuous on $\mathbb{R}$, for some constants $\alpha_i>0$ and $\beta_i>0$. Then $f$ satisfies Assumption \ref{assum_nonlinear} with $\alpha(\cdot)\equiv \min_{i\in [n]}\alpha_i$ and $\beta(\cdot)\equiv \max_{i\in [n]}\beta_i$;
\item Case (ii): For each $i\in [n]$, the function $f_i$ is bounded, differentiable, and for any $c\geq 0$, 
\begin{equation}\label{e4}
\inf_{i\in [n], |x|\leq c} \frac{d}{dx}f_{i}(x)>0,\; \sup_{i\in [n], |x|\leq c} \frac{d}{dx}f_{i}(x)<\infty.
\end{equation}
In this case, $f$ satisfies Assumption~\ref{assum_nonlinear} with
\begin{equation}\label{e5}
\begin{aligned}
\alpha(c)&=\inf\limits_{i\in [n], |x|\leq c} \frac{d}{dx}f_{i}(x), c\geq 0, \;\text{and}\\
\beta(c)&=\sup\limits_{i\in [n], |x|\leq c} \frac{d}{dx}f_{i}(x), c\geq 0.
\end{aligned}
\end{equation}
\end{itemize}
\end{lemma}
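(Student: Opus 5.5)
The proof works componentwise: since $f$ acts coordinate by coordinate, both inequalities in \eqref{e3} reduce to scalar inequalities for each $f_i$, which we then sum over $i\in[n]$. For $\|x\|,\|y\|\le c$, every coordinate satisfies $|x_i|,|y_i|\le c$. So it suffices to show, for each $i$,
\[
|f_i(x_i)-f_i(y_i)|\le \beta(c)|x_i-y_i| \quad\text{and}\quad (x_i-y_i)(f_i(x_i)-f_i(y_i))\ge \alpha(c)|x_i-y_i|^2 .
\]
Squaring and summing the first inequality gives $\|f(x)-f(y)\|\le\beta(c)\|x-y\|$. Summing the second gives $(x-y)^\top(f(x)-f(y))\ge\alpha(c)\|x-y\|^2$.

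\textbf{Case (i).} The scalar inequalities hold on all of $\mathbb{R}$ with constants $\alpha_i\ge\min_j\alpha_j$ and $\beta_i\le\max_j\beta_j$. So the constant functions $\alpha(\cdot)\equiv\min_i\alpha_i>0$ and $\beta(\cdot)\equiv\max_i\beta_i<\infty$ work. Constants are trivially nonincreasing and nondecreasing, and the requirements $\inf_z\alpha(\|f(z)\|)>0$ and $\sup_z\beta(\|f(z)\|)<\infty$ hold automatically.

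\textbf{Case (ii).} Define $\alpha(c)$ and $\beta(c)$ by \eqref{e5}; by \eqref{e4} they are positive and finite. Since the sets $\{|x|\le c\}$ grow with $c$, the infimum $\alpha$ is nonincreasing and the supremum $\beta$ is nondecreasing. For $|x_i|,|y_i|\le c$, the mean value theorem gives $f_i(x_i)-f_i(y_i)=f_i'(\xi)(x_i-y_i)$ for some $\xi$ between $x_i$ and $y_i$. Hence $|\xi|\le c$, so $\alpha(c)\le f_i'(\xi)\le\beta(c)$, which yields both scalar inequalities.

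The remaining point, which is the only nonroutine step, is the uniform condition. Boundedness of each $f_i$ gives $M:=\sup_z\|f(z)\|<\infty$. Monotonicity then gives $\alpha(\|f(z)\|)\ge\alpha(M)>0$ and $\beta(\|f(z)\|)\le\beta(M)<\infty$, again using \eqref{e4} at $c=M$. This is exactly where the boundedness hypothesis in Case (ii) is needed.
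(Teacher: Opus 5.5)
Your proof is correct and follows essentially the same route as the paper. Both reduce the two inequalities in \eqref{e3} to scalar inequalities via the coordinatewise structure, take constant functions in Case (i), and in Case (ii) combine the mean value theorem, monotonicity of the infimum/supremum in $c$, and boundedness of $f$. Your version is slightly more careful than the paper's: it uses $\max_i\beta_i$ (the paper's proof writes $\min_i\beta_i$, a typo), and it justifies the uniform condition explicitly by evaluating \eqref{e4} at $c=M=\sup_z\|f(z)\|$.
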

The proof of Lemma~\ref{lemsec2} is given in Appendix~\ref{ap6}.
\subsection{Examples}\label{subsec:examples}
In the following, we present three representative examples of the nonlinear system~\eqref{system} in which $f$ satisfies Assumption~\ref{assum_nonlinear}.

\begin{example}[Nonlinear Opinion Dynamics]\label{ex3}
Consider a nonlinear opinion dynamical system with external noise
 (see, e.g., \citet{dur2010, ancona,Bizyaeva} for background on nonlinear opinion formation)
\begin{equation}\label{example1}
x_{t+1} = \,a\tanh(A^{*}x_{t}+B^{*}u_{t}
)+w_{t+1},
\end{equation}
where $x_t=[x_{1,t},\cdots, x_{n,t}]\in \mathbb{R}^{n}$ represents the opinion states of the $n$ agents at time $t$; $A^{*}\in\mathbb{R}^{n\times n}$ and $B^{*}\in\mathbb{R}^{n}$ denote the unknown adjacency matrices and the connections of influencers of the underlying social network;
$a>0$ is the maximum sensitivity of agents to social influence; $u_{t}\in \mathbb{R}^m$ is a pinning (or influence) input, potentially generated by a leader, that drives the opinion dynamics of the entire network. 
The hyperbolic tangent function $tanh(\cdot)$ is defined as $\tanh(z)=\frac{e^{z}-e^{-z}}{e^{z}+e^{-z}}, \; z\in \mathbb{R},$
and is applied elementwise to vectors in \eqref{example1}. 
This nonlinear function captures the saturation effects in opinion updates arising from interactions among social peers.
  By Case (2) of Lemma \ref{lemsec2}, the function $a\tanh(\cdot)$ satisfies Assumption~\ref{assum_nonlinear} with the following functions $\alpha(\cdot)$ and $\beta(\cdot)$:
 \begin{align*}\nonumber
 \alpha(x)&=a\inf_{|z|\leq x}\frac{d}{dz}tanh(z)=\frac{4a}{(e^{x}+e^{-x})^{2}},\\
 \beta(x)&=a\sup_{|z|\leq x}\frac{d}{dz}tanh(z)=a,\;\forall x\geq 0.
 \end{align*}
\end{example}
\begin{example}[RNN Dynamics]\label{ex2}
Consider the classical Elman recurrent neural networks (RNN) state equation with external
noise (See., e.g., Equation $(14)$ of \cite{Mienye:2024}, p.~634 of \cite{rnn}, \cite{elman:1990}),
\begin{equation}\label{e14}
x_{t+1}=\sigma(A^{*}x_{t}+B^{*}u_{t})+w_{t+1},
\end{equation}
where $x_t$ and $u_t$ are
the state vector and 
the input data at timestamp $t$. $A^{*}\in \mathbb{R}^{n\times n}$ and $B^{*}\in \mathbb{R}^{n\times m}$ are unknown state weight matrix and input weight matrix to be trained. The nonlinear function $\sigma:\mathbb{R}^{n}\to\mathbb{R}^{n}$ is an activation function of the form $\sigma(\tau_{1},\dots,\tau_{n})=[\sigma_{1}(\tau_{1}),\dots,\sigma_{n}(\tau_{n})]^{\top}$.
Classical examples of activation functions include: 

1) The sigmoid function, i.e., $\sigma_{i}(z)=\frac{1}{1+e^{-z}}, \forall z\in \mathbb{R}, i\in [n].$ In this case, by Case (2) of Lemma \ref{lemsec2}, Assumption~\ref{assum_nonlinear} is satisfied with $\alpha(x)=\frac{e^{-x}}{(1+e^{-x})^{2}}$ and $\beta(x)=\frac{1}{4}, \forall x\in \mathbb{R}$. 

2) The leaky ReLU (LReLu) function, i.e., $\sigma_{i}(z)=\mathrm{LReLU}(z)=\max(b z, z),\; b \in (0,1), \forall z\in \mathbb{R}, i\in [n]$. In this case, by Case (1) of Lemma \ref{lemsec2}, Assumption~\ref{assum_nonlinear} holds with  constant functions $\alpha(x)\equiv b$ and $\beta(x)\equiv 1, \forall x\in \mathbb{R}$.
\end{example}
\begin{example}[Neuronal Dynamics]\label{ex1}
To study neuronal behaviors, the following neuronal dynamics model has been proposed and investigated (see, e.g., Equation $(1)$ in \cite{yx2022}, \cite{jd1989}, Sect. 2 in \cite{wu2001introduction}):
\begin{equation}\label{e17}
x_{i,t+1}= I\!\left(\sum_{j=1}^{n}a_{i,j}^{*}x_{j,t}+c_{i,t}+\eta_{i,t}>0\right), i\in [n], t\geq 0,
\end{equation}
where $x_{i,t}\in \{0,1\}$ denotes the state of neuron $i\in [n]$ at time $t$, indicating whether neuron $i$ is activated, and can be detected through experimental methods; the function $I(\cdot>0)$ represents the McCulloch–Pitts neuron model, defined as $I(x>0)=1$ if $x> 0$ and $I(x>0)=0$ otherwise; the parameter $a_{ij}^{*}= \gamma_{i}t_{ij}/n$ represents the total interaction strength between agents $i$ and $j$, determined by the synaptic strength $\gamma_{i}$ and the link strength $t_{ij}$ between them. The term $c_{i,t}$ represents a bias or an external intervention applied to neuron $i$ at time $t$.
 The noise term $\eta_{i,t}$ denotes the disturbance affecting neuron $i$ at time $t$,
and is assumed to be independent and identically distributed (i.i.d.) standard Gaussian. 

For model \eqref{e17}, note that $\mathbb{E}[x_{i,t+1}\mid \mathcal{F}_{t}]=1-F(-\sum_{j=1}^{n}a_{i,j}^{*}x_{j,t}+c_{i,t}),$ where $F(\cdot)$ denotes the cumulative distribution function (cdf) of the standard Gaussian distribution. Let $w_{i,t+1}=x_{i,t+1}-\mathbb{E}[x_{i,t+1}\mid \mathcal{F}_{t}]$, then $\{w_{i,t}, \mathcal{F}_t\}$ forms a martingale difference sequence. Consequently, system~\eqref{e17} can be rewritten as
\[
x_{i,t+1} = 1-F\!\left(-\sum_{j=1}^{n}a_{i,j}^{*}x_{j,t}+c_{i,t}\right)+w_{i,t+1}.
\]
By Case (2) of Lemma \ref{lemsec2}, one can verify that the function $1-F(\cdot)$ satisfies Assumption \ref{assum_nonlinear} with functions $\alpha(x)=\frac{1}{\sqrt{2\pi}}exp(-\frac{1}{2}x^{2})$ and $\beta(x)=\frac{1}{\sqrt{2\pi}}, \forall x\in \mathbb{R}.$
\end{example}


\subsection{Problem formulation}
Denote the system parameter by $\theta^{*}=\left[A^{*}, B^{*}\right]^{\top}\in \mathbb{R}^{(n+m)\times n}$, and assume that the true parameter $\theta^{*}\in \Theta$, where $\Theta$ is a compact convex set in $\mathbb{R}^{(n+m)\times n}$. We first consider the 
parameter estimation problem for the unknown 
parameter $\theta^{*}$.
\begin{problem}
At each time $t\geq 0$, given the trajectory observations $\{x_\tau, u_{\tau-1}\}_{1\leq \tau\leq t}$, construct an online estimate $\hat \theta_t$ of the unknown parameter $\theta^*$, and determine conditions on the trajectory data $\{x_\tau, u_{\tau-1}\}
$ under which
\begin{equation*}
\lim_{t\to \infty}\left\|\hat{\theta}_{t}- \theta^* \right\|^{2}=0,\, a.s.
\end{equation*}
\end{problem}
With the estimation algorithm provided in Problem 1, we further solve the adaptive control problem. 

Let the space of admissible feedback control laws be $$\mathcal U =\{\mathbf u: \mathbb R^n \to \mathbb R^m \mid \mathbf u\; \text{ is Lipschitz continuous}\}.$$   
Let $\pi_{\cdot}$ be a given control design mechanism, which defines a mapping from $\Theta$ to $\mathcal U$, and let $\{x_{t}^{*}\}$ denote the corresponding reference state trajectory generated by system \eqref{system} under the  feedback controller $u_t^{*}=\pi_{\theta^{*}}(x_t)$, i.e.,
\begin{equation}\label{reference_trajectory}
x_{t+1}^{*}=f(A^{*}x_{t}^{*}+B^{*}\pi_{\theta^{*}}(x_{t}^{*}))+w_{t+1},\;x_{0}^{*}=x_0,\;\text{a.s.}
\end{equation}
We will study the following adaptive control problem.
\begin{problem}
Given a control design mechanism $\pi_{\cdot}$, without knowing the true parameter $\theta^{*}$, design an adaptive feedback control algorithm together with the online parameter estimation algorithm 
   \begin{align*}
    u_t &= \mathcal G (x_t, \hat \theta_t),\\
    \hat \theta_t &=\mathcal H (\{u_\tau\}_{\tau\leq t-1}, \{x_\tau\}_{\tau\leq t}),\quad\forall t\ge 0,
   \end{align*}
where $\{x_t\}$ is the closed-loop trajectory of system \eqref{system} with adaptive feedback control $\{u_t\}$, such that the resulting trajectory $\{x_t, u_t\}$ starting from any initial
state $x_0$ has the following two properties:
\begin{itemize}
\item Adaptive stabilization: the closed-loop system is globally stable, i.e.,
$$\limsup_{t\to \infty} \frac{1}{t}\sum_{\tau=0}^{t}\|x_\tau\|^{2} <\infty,\;\text{a.s.}$$
\item Asymptotic tracking: the averaged
tracking error converges to zero almost surely, i.e.,
\begin{equation*}
     \lim_{t\to \infty}\frac{1}{t}\sum_{\tau=0}^{t}\left(\left\|x_{\tau}-x_{\tau}^{*}\|^{2}+ \|u_{\tau}- u_{\tau}^{*}\right\|^{2}\right)=0,\;\text{a.s}.,
    \end{equation*}
    where $\{x_{t}^{*}\}_{t\geq 0}$ and $\{u_{t}^{*}\}_{t\geq 0}$ are given in \eqref{reference_trajectory}.
\end{itemize}
\end{problem}
These two problems outlined above will be addressed in Section \ref{sec3} and Section \ref{sec4}, respectively.
\section{Parameter estimation}\label{sec3}
In this section, we  study the estimation problem of the unknown system parameter $\theta^*=(A^{*}, B^*)$. The following assumption on $\theta^{*}$ is needed.
\begin{assumption}\label{assum_parameter}
There exists a constant $\epsilon>0$ such that 
$\{\theta\in\mathbb{R}^{(n+m)\times n}:\|\theta-\theta^{*}\|\le \epsilon\}$ is contained in a known compact convex set $\Theta$.
\end{assumption}
In the following, we construct the parameter estimation algorithm and present the corresponding estimation error analysis in Sections \ref{sec3.1} and \ref{sec3.2}, respectively.
\subsection{Parameter estimation algorithm}\label{sec3.1}
Denote by $\phi_{t}=\left[x_{t}^{\top}, u_{t}^{\top}\right]^{\top} \in \mathbb{R}^{n+m}$ the regression vector. To construct an online estimation algorithm, for each $t \ge 0$, we consider the weighted least-squares loss function given by
\begin{equation*}
S_{t}(\theta)=\frac{1}{\mu_{t}}\mathbb{E}\left[\left\|x_{t+1}-f(\theta^{\top}\phi_{t})\right\|^{2}\mid \mathcal{F}_t\right],\;\theta\in \Theta,
\end{equation*}
where $\{\mu_{t}\}_{t\geq 0}$ is a weight sequence to be designed and is assumed to be adapted to the filtration $\{\mathcal{F}_t\}_{t\geq 0}$. The loss function is minimized at $\theta^*$, and $\min_{\theta\in\Theta} S_t(\theta)= S_t(\theta^*)=\mu_{t}^{-1}\mathbb{E}[\left\|w_{t+1}\right\|^{2}\mid \mathcal{F}_t], \forall t\geq 0$. 

To estimate $\theta^{*}$ by minimizing $S_t(\theta)$, we develop an online WLS-type algorithm, summarized in  Algorithm~1.
\begin{algorithm}[h]\label{alg11}
 \caption{Online WLS-type estimator for $\theta^{*}$}
{\bf Initialization}:  
Set the initial estimate as $\hat{\theta}_{0}=[\hat{A}_{0}, \hat{B}_{0}]^{\top}\in \Theta$,  
initialize the matrix $P_{0}=I_{n+m}$, and choose a positive real $\delta>0$.  
\\{\bf Update flows}: Given the observation $(u_t, x_{t+1})$ for each time $t>0$, the estimate  $\hat{\theta}_{t+1}$ is recursively updated as follows: 
 \begin{align}
&\hat{\theta}_{t+1}=\Pi_{\Theta}^{t+1}\{\hat{\theta}_{t}+\frac{1}{\mu_{t}}d_{t}P_{t+1}\phi_{t}[x_{t+1}-f(\hat{\theta}_{t}^{\top}\phi_{t})]^{\top}\},\label{sec3_eq9}\\
 &P_{t+1}=P_{t}-a_{t}d_{t}^{2}P_{t}\phi_{t}\phi_{t}^{\top}P_{t},\label{eq6}\\
&a_{t}=\left(\mu_{t}+d_{t}^{2}\phi_{t}^{\top}P_{t}\phi_{t}\right)^{-1},\label{sec3_alg11}
 \end{align}
 where $ \phi_{t}=\left[x_{t}^{\top}, u_{t}^{\top}\right]^{\top}, \mu_{t}=(1+\log r_{t})^{1+\delta}+d_{t}\overline{g}_{t}^{2}\phi_{t}^{\top}P_{t}\phi_{t}, r_{t}=1+\sum_{\tau=0}^{t}\|\phi_{\tau}\|^{2},$ 
 $d_{t}=\frac{1}{2}\alpha(\|\hat{\theta}_{t}^{\top}\phi_{t}\|+\max\limits_{\theta\in \Theta}\|\theta^{\top}\phi_{t}\|),$ $\overline{g}_{t}=\beta(\|\hat{\theta}_{t}^{\top}\phi_{t}\|+\max\limits_{\theta\in \Theta}\|\theta^{\top}\phi_{t}\|),$ and functions $\alpha$ and $\beta$ are defined in Assumption \ref{assum_nonlinear}. The projection operator $\Pi_{\Theta}^{t+1}$ is defined as $$\Pi_{\Theta}^{t+1}(x)=
\begin{cases}
x, & \text{if } x \in \Theta, \\[4pt]
\argmin\limits_{y\in \Theta_{\epsilon}}\operatorname{tr}[(x-y)^{\top}P_{t+1}^{-1}(x-y)], & \text{if } x \notin \Theta,
\end{cases}
$$
where $\Theta_{\epsilon}=\{\theta\in \Theta: \|\theta-\theta^{*}\|\leq \epsilon\}$, and $\epsilon$ is given in Assumption \ref{assum_parameter}.
 \end{algorithm}
 
We now give an intuitive and constructive interpretation of Algorithm $1$. Since $\phi_{t}$ and $\mu_{t}$ are $\mathcal{F}_t-$measurable and $\mathbb{E}[w_{t+1}\mid \mathcal{F}_t]=0$, it follows that
\begin{equation}\label{sec3_eq9}
\begin{aligned}
S_{t}(\theta)-S_{t}(\theta^*)
=\mu_{t}^{-1}\|f(\theta^{*\top}\phi_{t})-f(\theta^{\top}\phi_{t})\|^{2}.
\end{aligned}
\end{equation}
By Assumption 1, the nonlinear map $f(\cdot)$ satisfies
\begin{equation}\label{sec3_eq10}
\begin{aligned}
&d_t\|\theta^{*\top}\phi_{t}-\theta^{\top}\phi_{t}\|^{2}\\
\leq &(\theta^{*\top}\phi_{t}-\theta^{\top}\phi_{t})^{\top}[f(\theta^{*\top}\phi_{t})-f(\theta^{\top}\phi_{t})],
\end{aligned}
\end{equation}
where $d_t$ depends on the monotonicity function $\alpha(\cdot)$ and is specified in Algorithm 1. 
On the other hand, applying the Cauchy–Schwarz inequality yields
\begin{equation}\label{sec3_eq11}
\begin{aligned}
&(\theta^{*\top}\phi_{t}-\theta^{\top}\phi_{t})^{\top}[f(\theta^{*\top}\phi_{t})-f(\theta^{\top}\phi_{t})]\\
\leq &\|\theta^{*\top}\phi_{t}-\theta^{\top}\phi_{t}\|\|f(\theta^{*\top}\phi_{t})-f(\theta^{\top}\phi_{t})\|.
\end{aligned}
\end{equation}
Squaring both sides of \eqref{sec3_eq11} and using \eqref{sec3_eq10}, we obtain
\begin{equation}\label{sec3-eq12}
\begin{aligned}
&\|f(\theta^{*\top}\phi_{t})-f(\theta^{\top}\phi_{t})\|^{2}\\
\geq &d_t(\theta^{*\top}\phi_{t}-\theta^{\top}\phi_{t})^{\top}[f(\theta^{*\top}\phi_{t})-f(\theta^{\top}\phi_{t})]
\\= &\langle \theta^{*}-\theta,  
d_{t}\phi_{t}[f(\theta^{*\top}\phi_{t})-f(\hat{\theta}_{t}^{\top}\phi_{t})]^{\top}\rangle.
\end{aligned}
\end{equation}
Then, combining \eqref{sec3_eq9} and \eqref{sec3-eq12}, we have
\begin{equation}\label{sec3_eq14}
\begin{aligned}
&S_{t}(\theta)-S_{t}(\theta^*)\\
\geq &\mathbb{E}\left[\langle \theta-\theta^{*}, -\mu_{t}^{-1}d_{t}\phi_{t}[x_{t+1}-f(\hat{\theta}_{t}^{\top}\phi_{t})]^{\top}\rangle\mid \mathcal{F}_t\right].
\end{aligned}
\end{equation}
Since $f(\cdot)$ may be non-differentiable, \eqref{sec3_eq14} motivates using $\mu_{t}^{-1}d_{t}\phi_{t}[x_{t+1}-f(\hat{\theta}_{t}^{\top}\phi_{t})]^{\top}$ as a gradient-like surrogate in the parameter update. Instead of directly forming a first-order gradient algorithm, we adopt a Newton-type second-order update, which relies on multiplying an inverse (or an approximation) of the Hessian.
Using the well-known  Sherman–Morrison formula \citep{hager}, 
\[
P_{t+1}^{-1}=P_{t}^{-1}+\frac{d_{t}^{2}}{\mu_t}\phi_{t}\phi_{t}^{\top}=P_{0}^{-1}+\sum_{\tau=0}^{t}\frac{d_{\tau}^{2}}{\mu_\tau}\phi_{\tau}\phi_{\tau}^{\top},
\]
the matrix $P_{t+1}$ provides a recursive Gauss–Newton type approximation of the inverse Hessian of the WLS loss
 $S_t(\theta)$. Accordingly, Algorithm 1 uses $\mu_{t}^{-1}d_{t}P_{t+1}\phi_{t}[x_{t+1}-f(\hat{\theta}_{t}^{\top}\phi_{t})]^{\top}$ as the update term for the online parameter estimate. The weight sequence $\{\mu_t\}$ is designed, following the idea in \cite{lg1996}, to ensure the self-convergence property of the algorithm (see Conclusion (i) of Theorem 1), which plays a key role in adaptive control design.

Since a Newton-type update step alone may produce an estimate outside the compact set $\Theta$, we introduce a projection operator in the algorithm. This time-varying projection is performed with respect to the matrix  $P_{t+1}^{-1}$ and the set $\Theta_{\epsilon}$. The reason for using this projection is to guarantee that it is invoked only finitely many times, that is, there exists a constant $T>0$ such that $\hat{\theta}_{t+1}=\hat{\theta}_{t}+\mu_{t}^{-1}d_{t}P_{t+1}\phi_{t}[x_{t+1}-f(\hat{\theta}_{t}^{\top}\phi_{t})]^{\top}, \forall t>T$. This property also follows from the specific choice of the weight sequence $\{\mu_t\}$. Details of this argument can be found in the proof of Theorem~\ref{thm1}. 
\subsection{Convergence analysis}\label{sec3.2}
In this subsection, we establish the strong consistency of the parameter estimates given by Algorithm $1$.
\begin{theorem}\label{thm1}
Suppose Assumptions $\ref{assum_nonlinear}$-$\ref{assum_parameter}$ hold. If the control sequence $\{u_t\}_{t\geq 0}$ satisfies 
that there exist constants $c_{1}>0$ and $c_{2}>0$ such that $\|u_{t}\|\leq c_{1}\|x_{t}\|+c_{2}$ for all $t\geq 0$, then the parameter estimates produced by Algorithm $1$ satisfy the following three properties:
\begin{itemize}
\item[(i)] There exists a real-valued random matrix $\bar{\theta}$, such that
 \begin{equation}\label{eqq9}
 \lim_{t\to \infty}\hat{\theta}_{t}=\bar{\theta},\;\;a.s.
 \end{equation}
\item[(ii)] 
The adaptive prediction error has the following asymptotic upper bound as $t\to \infty$:
\begin{equation}
\sum_{\tau=1}^{t}\mu_{\tau}^{-1}\left\|f(\theta^{*\top}\phi_{\tau})-f(\hat{\theta}_{\tau}^{\top}\phi_{\tau})\right\|^{2}=O(1),\;a.s.
\end{equation}
  \item[(iii)] The parameter estimation error has the following asymptotic upper bound as $t\to \infty$:
\begin{equation}\label{6}
\left\|\theta^{*}-\hat{\theta}_{t}\right\|^{2}=O\left(\frac{(\log r_t)^{1+\delta}}{\lambda_{t}}\right), a.s.,
\end{equation}
where $r_{t}$ and $\mu_t$ are defined in Algorithm 1, and $\lambda_t$ is given by
\begin{equation}\label{lam}
\lambda_{t}=\lambda_{\min}\left\{\sum\limits_{\tau=0}^{t}\frac{\phi_{\tau}\phi_{\tau}^{\top}}{1+\|\phi_{\tau}\|^{2}}\right\}.
\end{equation}
\end{itemize}
\end{theorem}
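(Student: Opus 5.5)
We follow the stochastic Lyapunov approach used for WLS in \cite{lg1996, g1999}, adapted to the nonlinear monotone setting.

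\textbf{Lyapunov function.} Set $\tilde\theta_t=\theta^*-\hat\theta_t$ and $V_t=\operatorname{tr}[\tilde\theta_t^{\top}P_t^{-1}\tilde\theta_t]$. First we treat the projection. For $t$ with $x\notin\Theta$, $\theta^*\in\Theta_\epsilon$ and $\Theta_\epsilon$ is convex, so the $P_{t+1}^{-1}$-weighted projection is nonexpansive with respect to $\theta^*$. Hence in all cases $V_{t+1}\le \operatorname{tr}[(\tilde\theta_t-\mu_t^{-1}d_tP_{t+1}\phi_t e_{t+1}^{\top})^{\top}P_{t+1}^{-1}(\cdots)]$, where $e_{t+1}=x_{t+1}-f(\hat\theta_t^{\top}\phi_t)=g_t+w_{t+1}$ and $g_t=f(\theta^{*\top}\phi_t)-f(\hat\theta_t^{\top}\phi_t)$.

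\textbf{One-step bound.} Expanding with $P_{t+1}^{-1}=P_t^{-1}+\mu_t^{-1}d_t^2\phi_t\phi_t^{\top}$ gives three kinds of terms: $d_t^2\mu_t^{-1}\|\tilde\theta_t^{\top}\phi_t\|^2$, the cross term $-2\mu_t^{-1}d_t\langle\tilde\theta_t^{\top}\phi_t,e_{t+1}\rangle$, and the quadratic term $\mu_t^{-2}d_t^2\phi_t^{\top}P_{t+1}\phi_t\|e_{t+1}\|^2$. The arguments $\theta^{*\top}\phi_t$ and $\hat\theta_t^{\top}\phi_t$ have norm at most $c_t:=\|\hat\theta_t^{\top}\phi_t\|+\max_\Theta\|\theta^{\top}\phi_t\|$. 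Assumption~\ref{assum_nonlinear} then gives
\[
\langle\tilde\theta_t^{\top}\phi_t,g_t\rangle\ge 2d_t\|\tilde\theta_t^{\top}\phi_t\|^2 \quad\text{and}\quad \|g_t\|\le\overline g_t\|\tilde\theta_t^{\top}\phi_t\|.
\]
The choice $\mu_t\ge d_t\overline g_t^2\phi_t^{\top}P_t\phi_t$ is what makes the quadratic term in $g_t$ absorbable into the negative part of the cross term, since $\phi_t^{\top}P_{t+1}\phi_t\le\phi_t^{\top}P_t\phi_t$. Combining these, I aim for
\[
V_{t+1}\le V_t-c\,\mu_t^{-1}d_t\langle\tilde\theta_t^{\top}\phi_t,g_t\rangle+\frac{d_t^2\phi_t^{\top}P_{t+1}\phi_t}{\mu_t^{2}}\|w_{t+1}\|^2+M_{t+1},
\]
with $c>0$ and $M_{t+1}$ a martingale difference term linear in $w_{t+1}$. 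Using $\|g_t\|^2\le\overline g_t\langle\tilde\theta_t^{\top}\phi_t,g_t\rangle$ together with the boundedness of $\sup\beta(\|f\|)$ and $\inf\alpha$, the decrease term dominates $\mu_t^{-1}\|g_t\|^2$ up to constants. If that ratio is not uniform, I would keep the form $d_t\mu_t^{-1}\|\tilde\theta_t^{\top}\phi_t\|^2$ and convert later.

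\textbf{Summability and conclusions.} The noise sum $\sum\mu_t^{-2}d_t^2\phi_t^{\top}P_{t+1}\phi_t\|w_{t+1}\|^2$ is where the weight $(1+\log r_t)^{1+\delta}$ in $\mu_t$ enters. Since $\mu_t^{-1}d_t^2\phi_t^{\top}P_{t+1}\phi_t\le 1$, the standard estimate $\sum_t \mu_t^{-1}d_t^2\phi_t^{\top}P_{t+1}\phi_t\,/\,(\log\det P_{t+1}^{-1})^{1+\delta}<\infty$ combined with $\log\det P_{t+1}^{-1}=O(\log r_t)$ yields a finite conditional expectation of the noise sum. The hypothesis $\|u_t\|\le c_1\|x_t\|+c_2$ is used here to control $\|\phi_t\|$, and hence $r_t$, polynomially. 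The martingale part is handled by the convergence theorem for martingales weighted by the decrease term. The Robbins–Siegmund nonnegative almost-supermartingale lemma then gives $V_t\to$ a finite limit a.s. and $\sum\mu_t^{-1}\|g_t\|^2<\infty$, which is (ii).

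For (iii), $\|\tilde\theta_t\|^2\le V_t/\lambda_{\min}(P_t^{-1})$. Since $d_\tau$ is bounded below by $\tfrac12\inf\alpha(\|f\|)$ only in the bounded-$f$ case, I must be careful and split into two cases: either $f$ is globally Lipschitz and monotone, so $d_\tau$ is constant, or $f$ is bounded, so $w$ and hence $x$ are bounded and $d_\tau$ is bounded below. In both cases $d_\tau^2/\mu_\tau\gtrsim(1+\|\phi_\tau\|^2)^{-1}(\log r_\tau)^{-(1+\delta)}$, giving $\lambda_{\min}(P_t^{-1})\gtrsim\lambda_t/(\log r_t)^{1+\delta}$.

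For (i), the self-convergence argument of \cite{lg1996} applies: $V_t$ converges, $P_t^{-1}$ is monotone, and the projection is active only finitely often because the $V_t$ bound keeps $\hat\theta_t$ eventually inside $\Theta$ when $\tilde\theta$ is small relative to $P_t$. Together these give convergence of $\hat\theta_t$ via the sum $\sum P_{t+1}\phi_t(\cdot)$ being convergent componentwise.

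\textbf{Main obstacle.} I expect the hardest step to be (i), namely showing convergence of $\hat\theta_t$ without persistent excitation. The subtle points are that the projection is invoked only finitely often and that the weighted sum of increments converges; both rely on the $\log^{1+\delta}$ weight.
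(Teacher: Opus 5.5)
Your treatment of (ii) and (iii) follows the paper's route. The common elements are:
\begin{itemize}
\item the same $V_t$ and the non-expansiveness of the $P_{t+1}^{-1}$-weighted projection onto $\Theta_\epsilon\ni\theta^*$;
\item the monotone/Lipschitz bounds at radius $c_t$;
\item the $(1+\log r_t)^{1+\delta}$ weight plus a log-det bound for the noise sum;
\item the two-case argument for $\inf_t d_t>0$ and $\sup_t\overline g_t<\infty$;
\item the final bound $\|\tilde\theta_t\|^2\le V_t/\lambda_{\min}(P_t^{-1})$.
\end{itemize}
The two-case step is where $\|u_t\|\le c_1\|x_t\|+c_2$ is really used: it bounds $\phi_t$ when $f$ is bounded. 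It is not needed to control $r_t$. Using Robbins--Siegmund instead of the paper's martingale $o(\cdot)$ estimates is a harmless variation.

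The genuine gap is (i). You cite the self-convergence argument of \cite{lg1996}, but then describe a mechanism that does not work. The claim that ``the $V_t$ bound keeps $\hat\theta_t$ eventually inside $\Theta$'' fails without excitation. $V_t=O(1)$ only gives $\|\tilde\theta_t\|^2\le V_t/\lambda_{\min}(P_t^{-1})$, and $\lambda_{\min}(P_t^{-1})$ need not grow---precisely the non-PE regime that (i) is meant to cover. Likewise, convergence of $V_t$ and monotonicity of $P_t^{-1}$ are compatible with $\hat\theta_t$ wandering along a level set of the limiting quadratic form. The convergence of the increment series, which is the entire content of (i), is asserted rather than derived.

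The missing ingredient is the trace bound used in the paper. Since $\mu_t^{-1}P_{t+1}\phi_t=a_tP_t\phi_t$, the unprojected increments are $a_td_tP_t\phi_t(g_t+w_{t+1})^{\top}$, with $g_t$ the paper's $\psi_t$. Taking traces in \eqref{eq6} gives $\sum_t a_td_t^2\phi_t^{\top}P_t^2\phi_t\le\operatorname{tr}P_0$. From there:
\begin{itemize}
\item Cauchy--Schwarz together with $\sum_t a_t\|g_t\|^2\le\sum_t\mu_t^{-1}\|g_t\|^2<\infty$ (your (ii)) makes $\sum_t\|a_td_tP_t\phi_tg_t^{\top}\|$ finite.
\item Because $a_t\le 1$, the noise increments have summable conditional second moments, so Chow's theorem makes $\sum_t a_td_tP_t\phi_tw_{t+1}^{\top}$ converge.
\end{itemize}
These bounds hold whether or not the projection fires, so the logic runs opposite to yours: summability comes first. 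Then, by the Cauchy criterion, after a random time the tail is below $\epsilon/2$. Once a projection puts the iterate within $\epsilon/2$ of $\theta^*$ (take the target ball that small), the unprojected iterates never leave $\Theta$ again, and $\hat\theta_t$ converges.

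Separately, check the constants in your absorption step; the same caution applies to the paper's \eqref{com}. With $\mu_t\ge d_t\overline g_t^2\phi_t^{\top}P_t\phi_t$, the bounds you list only give a drift of $\mu_t^{-1}(d_t-3d_t^2)\|\tilde\theta_t^{\top}\phi_t\|^2$. This is not negative when $d_t\le 1/3$. For $f(z)=z/10$ (with $\alpha\equiv\beta\equiv 1/10$) and $\phi_t^{\top}P_t\phi_t\gg(1+\log r_t)^{1+\delta}$, the drift is indeed positive. Putting $\overline g_t^2\phi_t^{\top}P_t\phi_t$ into $\mu_t$ instead gives a drift of at most $-2\mu_t^{-1}d_t^2\|\tilde\theta_t^{\top}\phi_t\|^2$ and leaves the rest of the argument intact.

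Finally, $\|g_t\|^2\le\overline g_t\langle\tilde\theta_t^{\top}\phi_t,g_t\rangle$ is a co-coercivity bound that Assumption~\ref{assum_nonlinear} does not imply for non-componentwise $f$. Use $\|g_t\|^2\le\overline g_t^2\|\tilde\theta_t^{\top}\phi_t\|^2$ instead.
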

The proof of Theorem \ref{thm1} is provided in Appendix \ref{ap2}.

Notice that the condition imposed on the control sequence $\{u_t\}$ in Theorem~\ref {thm1} is not restrictive, which holds for any state feedback controller that is Lipschitz in the state. 
Specifically, Conclusion (i) of Theorem~\ref {thm1} ensures that the algorithm converges to a fixed point $\bar \theta$, though probably not $\theta^*$, without any excitation condition requirement.
This non-divergence property prevents the parameter estimates from varying excessively over time and the divergent propagation of the historical estimate error.
This enables us to use the estimate $\hat\theta_t$ as a surrogate parameter in the adaptive control design, as shown in Section \ref{sec4}. Furthermore, Conclusion~(ii) of Theorem~\ref{thm1} establishes an asymptotic upper bound on the accumulated prediction regret, which is instrumental in establishing the global stability of the adaptive closed-loop system, as shown in Theorem~\ref{thm3}.
\begin{remark}\label{re1}
From Conclusion~(iii), the parameter estimation error converges to zero almost surely provided that, almost surely,
\begin{equation}\label{data_condition}
\left(\log \left(\sum_{\tau=0}^{t}\|\phi_{\tau}\|^{2}\right)\right)^{1+\delta}=o\left(\lambda_{\min}\left\{\sum\limits_{\tau=0}^{t}\frac{\phi_{\tau}\phi_{\tau}^{\top}}{1+\|\phi_{\tau}\|^{2}}\right\}\right).
\end{equation}
In prior work on parameter estimation (e.g., \cite{foster, Kowshik, oymak, sattar}),
the system \eqref{system} is typically assumed to be asymptotically global exponential stable (see Definition $2$ in \cite{foster}), and the inputs and noises $\{u_t\}, \{w_t\}$ are taken to be independent and sub-Gaussian. Under these assumptions, the trajectory $\{\phi_t\}$  satisfies two key properties almost surely: 1) the PE condition, i.e.,  $t=O\left(\lambda_{\min}\left\{\sum\limits_{\tau=0}^{t}\phi_{\tau}\phi_{\tau}^{\top}\right\}\right), a.s.,$ (see, e.g., Theorem $4$ in \cite{foster}), and 2) bounded-moment condition, i.e., $\sum_{\tau=0}^{t}\|\phi_{\tau}\|^{2+\rho}=O(t)$ for some $\rho>0$. Together, these two conditions imply \eqref{data_condition}, and hence Algorithm~1 is strongly consistent under this commonly studied setting.

More importantly, \eqref{data_condition} indicates that strong consistency of Algorithm~1 may still hold even when the classical PE condition fails. For instance, if the trajectory satisfies $\sum_{\tau=0}^{t}\|\phi_{\tau}\|^{2+\rho}=O(t)$and, instead of PE, obeys a weaker non-PE condition $(\log t)^{1+\delta}=o\left(\lambda_{\min}\left\{\sum\limits_{\tau=0}^{t}\phi_{\tau}\phi_{\tau}^{\top}\right\}\right), a.s.$, then \eqref{data_condition} remains valid.
\end{remark}
\section{Nonlinear Adaptive Control}\label{sec4}
In Section~\ref{sec3}, we developed an online parameter estimation algorithm and established the strong consistency of the resulting estimates without requiring the PE conditions for the trajectory data. Building on these results, we investigate the adaptive control problem in this section.

Suppose given a control design mechanism $\pi_{\cdot} : \Theta \to \mathcal{U}$, where $\mathcal{U}$ denotes the space of all admissible feedback control laws defined as
$\mathcal U =\{\mathbf u: \mathbb R^n \to \mathbb R^m : \mathbf u\; \text{is Lipschitz continuous}\}.$ Accordingly, for each $\theta\in \Theta$, the policy $\pi_{\theta}(\cdot)\in \mathcal U$, and we assume that there exists a  constant $L>0$ such that for all $\theta \in \Theta$, $\pi_{\theta}(\cdot)$ is $L-$Lipschitz continuous. 

Let $\{x_{t}^{*}\}_{t\geq 0}$ denote the corresponding reference state trajectory generated by system \eqref{system} under the  feedback controller $u_t=\pi_{\theta^{*}}(x_t)$, i.e.,
\begin{equation}\label{sec4_trajectory}
x_{t+1}^{*}=f(A^{*}x_{t}^{*}+B^{*}\pi_{\theta^{*}}(x_{t}^{*}))+w_{t+1},\;\;\text{a.s.,}
\end{equation}
with initial condition $x_{0}^{*}=x_0$. The problem considered here is the following: without knowing the true parameter $\theta^{*}$, how do we design an adaptive controller such that the resulting closed-loop system is globally stable and the state trajectory tracks the reference trajectory $\{x_{t}^{*}\}$. 

{\bf Adaptive Control Law.} With the parameter estimates produced by Algorithm~1, a intuitive approach to design an adaptive controller is to apply the certainty-equivalence principle (CEP), whereby the unknown true parameter is replaced by its online estimate in the controller. Given a control design mechanism $\pi_{\cdot}$ and the current parameter estimate $\hat \theta_t=[\hat A_t, \hat B_t]$ available up to time t, we construct the adaptive control law as
\begin{equation}\label{control}
u_{t}=\pi_{\hat{\theta}_{t-1}}(x_{t})+v_{t},
\end{equation}
where $v_t=t^{-b}\epsilon_{t}$ for some constant b>0. Here $\{\epsilon_t\}_{t\geq0}$ is an i.i.d. random variable sequence, independent of $\{w_t\}_{t\geq0}$, satisfying
\begin{equation}\label{sec4eq23}
\mathbb{E}\left[\epsilon_{k}\right]=0, \;\;\mathbb{E}\left[\epsilon_{k}\epsilon_{k}^{\top}\right]=I,\;\; \|\epsilon_t\|\leq \bar{\epsilon}
\end{equation}
for some constant $\bar{\epsilon}>0$. The additive perturbation $v_t$ serves as a probing (exploration) signal that injects excitation into the closed-loop, which is essential for consistent parameter learning and, consequently, effective adaptation.

Under system \eqref{system} and the adaptive law \eqref{control}, the resulting closed-loop system is given by
\begin{equation}\label{sec4_eq21}
x_{t+1}=f(A^{*}x_{t}+B^{*}(\pi_{\hat{\theta}_{t-1}}(x_{t})+v_{t}))+w_{t+1}.
\end{equation}
To establish stability of the closed-loop system \eqref{sec4_eq21}, we impose a stabilizability requirement on the control design mechanism $\pi_{\cdot}$. To this end, we first recall the notion of finite-gain $L_q$ stability for nonlinear systems.
\begin{definition}[Finite-gain $L_q$-stability](see, e.g., \cite[Def.~1.2.1]{van})
The stochastic system $x_{t+1}=F(x_{t},v_{t},w_{t})$ with input $v_t\in \mathbb{R}^m$, state $x_t\in \mathbb{R}^n,$ and noise $w_t \in \mathbb{R}^n$ is said to be finite-gain $L_q$-stable (with respect to $\{v_{t}\}$ and $\{w_{t}\}$) for some $q>0$, if there exists a constant $k_{q}>0$ such that for 
every initial condition $x_0\in \mathbb{R}^n$ there exists a constant $b_{q}(x_0)>0$ with the property 
that
\[
\sum_{\tau=0}^{t-1}\|x_{\tau+1}\|^{q} \le k_{q}\sum_{\tau=0}^{t-1}(\|v_\tau\|^{q}+\|w_{\tau+1}\|^{q})+b_{q}(x_0), \text{a.s}.
\]
for any $t\geq 0$ and any random variable sequences $\{v_{t}\}$ and $\{w_{t}\}$.
\end{definition}
Consider now the following stabilizability notion.

{\bf Finite-gain $L_q$-stabilizability.} Given a system of the form \eqref{system}. We say that the control design mechanism $\pi_{\cdot}$ is {\it finite-gain $L_q$-stabilizing} if for each $\theta=[A, B]^{\top} \in \Theta$, the feedback control $\pi_{\theta}(\cdot)$  ensures the closed-loop  system \begin{align}\nonumber
    x_{t+1}(\theta)=f(Ax_{t}(\theta)+ B(\pi_{\theta}(x_{t}(\theta)) + v_t ))+w_{t+1},
\end{align}
to be finite-gain $L_q$-stable with respect to $\{v_t\}$ and $\{w_t\}$. 

When the system \eqref{system} is linear and $\pi_{\theta}(\cdot)$ is a linear state-feedback controller, i.e., $\pi_{\theta}(x_t)=K(\theta)x_t$, the finite-gain $L_q$ stability condition is equivalent to requiring that $A+BK(\theta)$ be Schur stable, which is a standard assumption in linear adaptive control (see, e.g., \cite{g1999, liunian}). Assumption~\ref{assum_stability} extends this stabilizability requirement to the nonlinear setting.
\begin{assumption}\label{assum_stability}
The control design mechanism $\pi_{\cdot}$ is finite-gain $L_\gamma$-stabilizing, where the constant $\gamma$ is given in Assumption \ref{assum_noise}. Moreover, there exist a constant $L_1>0$ such that for all  $\theta_1, \theta_2 \in \Theta$ and $x\in \mathbb{R}^n$,
\begin{equation}\label{assum4}
\|\pi_{\theta_1}(x)-\pi_{\theta_2}(x)\|\leq L_1\|\theta_1-\theta_2\|\|x\|. 
\end{equation}
\end{assumption}
\begin{remark}\label{rmk:weaker_Assumption}
Assumption~\ref{assum_stability} requires knowing a \emph{parameter-dependent} stabilizing controller $\pi_\theta(\cdot)$ for each $\theta\in\Theta$, rather than directly knowing a controller that stabilizes the true system. The former is more realistic: since the true parameter $\theta^*$ is unknown, it is generally difficult to directly specify a stabilizing controller for the true system (which depends on $\theta^*$); however, one can often characterize how to design a stabilizing controller $\pi_\theta(\cdot)$ as a function of $\theta$, should $\theta$ be known.
To illustrate, consider the linear case where \eqref{system} reduces to $x_{t+1} = Ax_t + Bu_t + w_{t+1}$. For each stabilizable pair $\theta=[A,B]^{\top}$, the LQG control design mechanism $\pi_\theta^{\mathrm{LQG}}(x_{t}) = -R^{-1}B^\top P(\theta)x_{t}$, where $P(\theta)$ solves the discrete-time algebraic Riccati equation associated with $\theta=[A,B]^{\top}$, renders the closed-loop system finite-gain $L_\gamma$-stable. In other words, while we may not know which specific controller stabilizes the true system, we do know \emph{how} to construct one given any stabilizable $\theta$.
\end{remark}
Assumption \ref{assum_stability} implies that the reference feedback law $u_t^{*}=\pi_{\theta^{*}}(x_{t})$ stabilizes the true system \eqref{system}. However, when $\theta^{*}$ is unknown, it is not obvious whether the designed adaptive controller \eqref{control}, i.e., $u_t=\pi_{\hat{\theta}_{t-1}}(x_t)+v_{t}$, still stabilizes the true system.
In what follows, Subsections~\ref{section_stable} and~\ref{section_tracking} establish the stability of the closed-loop system~\eqref{sec4_eq21} under adaptive control \eqref{control} and analyze the long-run tracking performance of the resulting state trajectory, respectively.
\subsection{Adaptive stabilization}\label{section_stable}
In this subsection, building on the asymptotic analysis of parameter estimation in Section~\ref{sec3}, we establish the global stability of the closed-loop system \eqref{sec4_eq21} and show that the parameter estimates $\hat{\theta}_t$ converge to the true parameter $\theta^{*}$ along the resulting closed-loop trajectory.

We first establish the following global stability result for the adaptive controller \eqref{control}.
\begin{theorem}\label{thm3}
Under Assumption $\ref{assum_nonlinear}$-$\ref{assum_stability}$, the system
\eqref{system} with the adaptive control $(\ref{control})$ is globally stable in the sense that, for any $x_0\in \mathbb R^n$ and $\hat\theta_0\in \mathbb R^{(m+n)\times n}$,
\begin{equation}
\limsup_{t\rightarrow \infty}\frac{1}{t}\sum_{\tau=0}^{t}\|x_{\tau}\|^{\gamma}<\infty,\, a.s.,
\end{equation}
where $\{x_t\}$ is the trajectory of the closed-loop system under adaptive control \eqref{control}, $\gamma$ is given in Assumption \ref{assum_noise}.
\end{theorem}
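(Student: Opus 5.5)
The plan is to compare the closed-loop system \eqref{sec4_eq21} with the frozen-parameter closed loop under $\pi_{\bar\theta}$. Here $\bar\theta$ is the almost sure limit of $\hat\theta_t$ from Theorem~\ref{thm1}(i). The difference between the two loops is treated as an extra input perturbation, and the prediction-error bound of Theorem~\ref{thm1}(ii) will show that this perturbation is negligible in the averaged $L_\gamma$ sense.

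First, we check the hypothesis of Theorem~\ref{thm1}. Since every $\pi_\theta$ is $L$-Lipschitz, Assumption~\ref{assum_stability} gives $\|\pi_\theta(x)\|\le L\|x\|+\|\pi_\theta(0)\|$, and $\pi_\theta(0)$ is bounded on the compact set $\Theta$. Together with $\|v_t\|\le \bar\epsilon$, this yields $\|u_t\|\le c_1\|x_t\|+c_2$. Theorem~\ref{thm1} therefore applies, so $\hat\theta_t\to\bar\theta\in\Theta$ almost surely and $\sum_\tau \mu_\tau^{-1}\|f(\theta^{*\top}\phi_\tau)-f(\hat\theta_\tau^\top\phi_\tau)\|^2<\infty$ almost surely.

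Next, we rewrite the dynamics. Since $\bar\theta\in\Theta$, Assumption~\ref{assum_stability} says that $\pi_{\bar\theta}$ makes the system with parameter $\bar\theta$ finite-gain $L_\gamma$ stable. The true closed loop can be written as
\[
x_{t+1}=f(\bar A x_t+\bar B(\pi_{\bar\theta}(x_t)+v_t))+w_{t+1}+e_{t+1},
\]
where
\[
e_{t+1}=f(\theta^{*\top}\phi_t)-f(\hat\theta_t^\top\phi_t)+f(\hat\theta_t^\top\phi_t)-f(\bar\theta^\top\bar\phi_t)
\]
and $\bar\phi_t$ uses $\pi_{\bar\theta}(x_t)$ in place of $\pi_{\hat\theta_{t-1}}(x_t)$. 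Because the stability definition holds for arbitrary noise sequences (pathwise, a.s.), we can absorb $e_{t+1}$ into the noise and obtain
\[
\sum_{\tau<t}\|x_{\tau+1}\|^\gamma\le k_\gamma\sum_{\tau<t}\bigl(\|v_\tau\|^\gamma+2^{\gamma-1}\|w_{\tau+1}\|^\gamma+2^{\gamma-1}\|e_{\tau+1}\|^\gamma\bigr)+b_\gamma(x_0).
\]
A martingale strong law based on the $\gamma$-moment condition in Assumption~\ref{assum_noise}, applied for instance to a slightly lower power $\gamma'<\gamma$ of the noise, gives $\sum\|w_{\tau+1}\|^\gamma=O(t)$. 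This step is delicate at exactly power $\gamma$, and we may need to work with a power $\gamma'\in(2,\gamma)$ or use Chow's theorem. The $v$-term is $O(t)$ trivially.

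It remains to show $\sum_{\tau<t}\|e_{\tau+1}\|^\gamma=o(t)+o(1)\sum_{\tau<t}\|x_\tau\|^\gamma$, so that the $e$-term can be absorbed into the left-hand side for large $t$. The second part of $e_{t+1}$ is bounded using Lipschitz continuity of $f$ (or boundedness of $f$ when $\beta$ is not uniform), together with \eqref{assum4} and $\hat\theta_t\to\bar\theta$. This gives $\|f(\hat\theta_t^\top\phi_t)-f(\bar\theta^\top\bar\phi_t)\|\le C(\|\hat\theta_t-\bar\theta\|+\|\hat\theta_{t-1}-\bar\theta\|)\|x_t\|$, which is $o(1)\|x_t\|$. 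The first part is the main obstacle, because Theorem~\ref{thm1}(ii) controls it only in squared form weighted by $\mu_\tau^{-1}$, and $\mu_\tau$ grows with $\phi_\tau^\top P_\tau\phi_\tau$ and $\log r_\tau$. We plan to handle it case by case. When $f$ is bounded, $e_{t+1}$ is bounded and $x_t$ itself is bounded a.s. (noise is bounded in that case), so the result is immediate. Otherwise $\beta$ is uniformly bounded and $\alpha$ uniformly positive, so $d_t,\bar g_t$ are bounded above and below and $\mu_t\le C(\log r_t)^{1+\delta}+C\phi_t^\top P_t\phi_t$.

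To deal with $\phi_t^\top P_t\phi_t$, we will use the standard estimate $\sum_t \frac{\phi_t^\top P_t\phi_t}{1+\phi_t^\top P_t\phi_t}=O(\log r_t)$ from the WLS literature \citep{lg1996, g1999}. The time steps where $\phi_t^\top P_t\phi_t$ is large are therefore sparse, and on the remaining steps $\mu_t=O((\log r_t)^{1+\delta})$. Combining this with $\|e\|\le C\|x\|+C$ through an interpolation between the squared bound and the crude linear bound, we aim for $\sum_\tau\|f(\theta^{*\top}\phi_\tau)-f(\hat\theta_\tau^\top\phi_\tau)\|^\gamma=o(r_t)+O((\log r_t)^{c})$. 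Since $r_t$ is dominated by $\sum\|x_\tau\|^2$, which by Hölder is at most $t^{1-2/\gamma}(\sum\|x_\tau\|^\gamma)^{2/\gamma}$, the error is absorbed by the left-hand side. Finally, a contradiction argument, assuming $S_t=\sum\|x_\tau\|^\gamma$ satisfies $S_t/t\to\infty$ along a subsequence, yields $\limsup_{t\to\infty} S_t/t<\infty$ almost surely.
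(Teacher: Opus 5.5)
\textbf{There is a genuine gap, in the step you yourself call the main obstacle.}

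Your skeleton is the paper's. You freeze the controller at the limit $\bar\theta$ from Theorem~\ref{thm1}(i), treat the discrepancy as a perturbation of the finite-gain $L_\gamma$-stable $\bar\theta$-loop, show it is negligible, and absorb it. Your rewriting pushes $e_{t+1}$ into the outer noise and keeps $v_t$ inside $\bar B(\cdot)$. That matches the stabilizability definition better than the paper's $\bar v_t$, which enters $f$ additively rather than through $\bar B$.

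\textbf{Sparsity does not help.} Knowing that only few steps have large $\phi_t^\top P_t\phi_t$ says nothing about how much of $\sum_\tau\|x_\tau\|^\gamma$ those steps carry. (For this weighted $P_t$, the log-det argument controls $\sum_t\frac{d_t^2}{\mu_t}\phi_t^\top P_t\phi_t$, not $\sum_t\frac{\phi_t^\top P_t\phi_t}{1+\phi_t^\top P_t\phi_t}$.) A handful of steps with huge $\|x_\tau\|$ can carry a fixed fraction of the sum. On those steps the crude bound $\|e\|\le C\|x\|+C$ only gives $C\sum_\tau\|x_\tau\|^\gamma$, which cannot be absorbed against $k_\gamma$.

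\textbf{The target scale is wrong.} On a step with $\|\phi_\tau\|=M$, the prediction error is at best $o(M)$, and its $\gamma$-th power can dwarf $M^2$. So $o(r_t)$ is not attainable in general. The attainable normalization is $\bar r_t=1+\sum_{\tau\le t}\|\phi_\tau\|^\gamma$, or equivalently $t+\sum_{\tau\le t}\|x_\tau\|^\gamma$.

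\textbf{The missing idea is that Theorem~\ref{thm1}(ii) already controls the large steps.} Let $s_\tau=\mu_\tau^{-1}\|f(\theta^{*\top}\phi_\tau)-f(\hat\theta_\tau^\top\phi_\tau)\|^2$. This sequence is summable, hence $s_\tau\to0$. Split, as the paper does, on whether $\bar\mu_\tau=d_\tau\bar g_\tau^2\phi_\tau^\top P_\tau\phi_\tau$ exceeds $(1+\log r_\tau)^{1+\delta}$.

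If it does, then $\mu_\tau\le 2\bar\mu_\tau=O(\|\phi_\tau\|^2)$ because $\|P_\tau\|\le 1$. The error is therefore $(s_\tau\mu_\tau)^{\gamma/2}=o(\|\phi_\tau\|^\gamma)$, and these steps contribute $o(\bar r_t)$.

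Otherwise $\mu_\tau\le 2(1+\log r_\tau)^{1+\delta}$. Then $\sum_\tau s_\tau^{\gamma/2}\le(\sum_\tau s_\tau)^{\gamma/2}$ gives $O((\log r_t)^{(1+\delta)\gamma/2})$, which is $o(\bar r_t)+O((\log t)^{(1+\delta)\gamma/2})$ by H\"older.

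A version of your ``interpolation'' that does work, with no splitting and no sparsity, is $\|e_\tau\|^\gamma\le s_\tau\,\mu_\tau\,(C\|x_\tau\|+C)^{\gamma-2}$ combined with $\mu_\tau\le(1+\log r_\tau)^{1+\delta}+C\|\phi_\tau\|^2$.

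\textbf{Your concern about the noise is justified.} The paper simply asserts $\sum_\tau\|w_{\tau+1}\|^\gamma=O(t)$. Bounded conditional $\gamma$-th moments do not imply it. Independent bursts with $\|w_\tau\|^\gamma\approx\tau\log\tau$ on events of probability $1/(\tau\log\tau)$ break it, and with unbounded $f$ they make $\frac1t\sum_\tau\|x_\tau\|^\gamma$ unbounded. Neither of your fixes works: Assumption~\ref{assum_stability} gives only $L_\gamma$-stability, and Chow's theorem needs moments of order above $\gamma$. The statement itself needs a slightly stronger moment condition.
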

We note that the result of Theorem~2 continues to hold even without a probing signal, i.e., when $v_t \equiv 0$ in \eqref{control}. In addition to this stability guarantee, the next theorem shows that, if the probing signal induces sufficient excitation, the  estimates $\hat \theta_t$ converge to the ground truth $\theta^{*}$.
\begin{theorem}\label{thm2}
Under Assumptions~\ref{assum_nonlinear}–\ref{assum_stability}, if $b\in [0, \frac{(\gamma-2)^{2}}{4\gamma(\gamma+2)})$, then the parameter estimates $\hat{\theta}_{t}$ generated by Algorithm 1 along the closed-loop trajectory induced by \eqref{sec4_eq21} are strongly consistent with the following convergence rate:
\begin{equation}\label{eee40}
\|\hat{\theta}_{t}-\theta^{*}\|^{2}=O\!\left(\frac{(\log t)^{1+\delta}}{t^{1-2b-\eta}}\right),\quad \text{a.s.,}
\end{equation}
where $\eta$ is any constant satisfying $\eta \in \left(\frac{8b}{\gamma-2}, \frac{2(\gamma-2)}{\gamma(\gamma+2)}\right)$.
\end{theorem}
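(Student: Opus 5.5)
The plan is to apply Theorem~\ref{thm1}(iii) along the closed-loop trajectory. Since $\pi_\theta$ is $L$-Lipschitz uniformly in $\theta\in\Theta$, we have $\|u_t\|\le c_1\|x_t\|+c_2$, so Theorem~\ref{thm1} applies. It gives $\|\hat\theta_t-\theta^*\|^2=O((\log r_t)^{1+\delta}/\lambda_t)$. The proof then reduces to two estimates.

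\emph{Upper bound on $r_t$.} By Theorem~\ref{thm3}, $\sum_{\tau\le t}\|x_\tau\|^\gamma=O(t)$ a.s. With $\gamma>2$ this gives $r_t=O(t)$, hence $\log r_t=O(\log t)$. In addition, $\max_{\tau\le t}\|\phi_\tau\|=O(t^{1/\gamma})$ a.s.

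\emph{Lower bound on $\lambda_t$.} I aim to show
\[
\lambda_{\min}\Big(\sum_{\tau\le t}\phi_\tau\phi_\tau^\top\Big)\ge c\,t^{1-2b-\eta/2}
\]
for large $t$ a.s. Combined with $1+\|\phi_\tau\|^2=O(t^{2/\gamma})$, this yields $\lambda_t\ge c\,t^{1-2b-\eta}$, provided $2/\gamma\le\eta/2$. If that inequality fails, I would instead use a truncation argument: discard the indices with $\|\phi_\tau\|^2> t^{\eta/2}$. By the $\gamma$-moment bound there are only $O(t^{1-\eta\gamma/4})$ such indices, and they contribute negligibly when $\eta$ exceeds the stated lower threshold. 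This is where the interval for $\eta$, and through it the constraint on $b$, should enter.

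To get the lower bound on the unnormalized sum, I fix a unit vector $z=[z_1^\top,z_2^\top]^\top$ and write
\[
z^\top\phi_\tau=z_1^\top x_\tau+z_2^\top\pi_{\hat\theta_{\tau-1}}(x_\tau)+z_2^\top v_\tau .
\]
I then split into two cases.
\begin{itemize}
\item[] \emph{Case $\|z_1\|$ not small.} Use $x_{\tau+1}=f(\theta^{*\top}\phi_\tau)+w_{\tau+1}$. Since $f(\theta^{*\top}\phi_\tau)$ and $v_{\tau+1}$ are independent of $w_{\tau+1}$ given $\mathcal F_\tau$, the component $z^\top\phi_{\tau+1}$ contains the innovation $z_1^\top w_{\tau+1}$, which has conditional variance at least $\sigma\|z_1\|^2$ by Assumption~\ref{assum_noise}.
\item[] \emph{Case $z_1\approx 0$.} Here $z^\top\phi_\tau\approx z_2^\top(\pi(x_\tau)+v_\tau)$. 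The term $v_\tau$ is independent of $\mathcal F_{\tau-1}\vee\sigma(w_\tau)$, so its conditional variance is $\tau^{-2b}\|z_2\|^2$.
\end{itemize}
In both cases $E[(z^\top\phi_\tau)^2\mid\mathcal G_\tau]\ge c\,\tau^{-2b}$ for a suitable $\sigma$-algebra $\mathcal G_\tau$ containing everything except the fresh innovation (the pair $w_\tau,\epsilon_\tau$, conditionally on $\mathcal F_{\tau-1}$). Summing gives $\sum_\tau E[(z^\top\phi_\tau)^2\mid\mathcal G_\tau]\ge c\,t^{1-2b}$.

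To pass from the conditional second moments to the actual sum $\sum(z^\top\phi_\tau)^2$, I apply a martingale strong law to
\[
(z^\top\phi_\tau)^2-E[\,\cdot\mid\mathcal G_\tau],
\]
using the $\gamma/2$-th moments. This gives a deviation of $O(t^{2/\gamma+\text{small}})$ times polynomial factors in $\max\|\phi\|$, which is $o(t^{1-2b-\eta/2})$ under the stated range. Uniformity over $z$ follows from a finite net on the unit sphere together with the upper bound $\|\sum\phi\phi^\top\|=O(t)$.

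\emph{Main obstacle.} I expect the hard part to be exactly this uniform-in-$z$ excitation lower bound. The regressor depends on $\hat\theta_{\tau-1}$ through $\pi$, and the noise enters $x_\tau$ only after passing through $f$. The key is that the fresh innovation is additive outside $f$, so I never need to invert $f$. The bookkeeping of exponents ($b$, $\eta$, $\gamma$) in the martingale deviation and in the normalization by $1+\|\phi_\tau\|^2$ is what produces the stated ranges.

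Finally, substituting into Theorem~\ref{thm1}(iii) gives
\[
\|\hat\theta_t-\theta^*\|^2=O\big((\log t)^{1+\delta}/t^{1-2b-\eta}\big),
\]
which tends to zero since $2b+\eta<1$ under the stated bounds.
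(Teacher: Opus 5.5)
Your skeleton is the paper's: Theorem~\ref{thm1}(iii), $\log r_t=O(\log t)$ from Theorem~\ref{thm3}, an excitation bound for $\sum_{\tau\le t}\phi_\tau\phi_\tau^\top$, then truncation to pass to $\lambda_t$. However, the excitation step, which you correctly flag as the crux, is argued incorrectly.

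In your first case you list $f(\theta^{*\top}\phi_\tau)$ and $v_{\tau+1}$ as the only terms besides $z_1^\top w_{\tau+1}$, and conclude a conditional variance of at least $\sigma\|z_1\|^2$. This overlooks that the input block $z_2^\top\pi_{\hat\theta_\tau}(x_{\tau+1})$ is itself driven by $w_{\tau+1}$. The true noise innovation of $z^\top\phi_{\tau+1}$ is $z_1^\top w_{\tau+1}+z_2^\top\kappa_{\tau+1}$, with $\kappa_{\tau+1}=\pi_{\hat\theta_\tau}(x_{\tau+1})-\mathbb{E}[\pi_{\hat\theta_\tau}(x_{\tau+1})\mid\mathcal F_\tau]$, and the two parts can cancel. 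For linear feedback $\pi_\theta(x)=Kx$ and $z_1=-K^\top z_2$ the innovation is identically zero, although $\|z_1\|$ need not be small. Your Case 1 never uses $v$, so taken literally it would give linear growth of $\lambda_{\min}$ in that direction even with $v_t\equiv 0$. That is false, since then every $\phi_t$ lies in the range of $[I_n, K^\top]^\top$.

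The per-step bound $c\tau^{-2b}$ you assert is true, but the split has to be on $\|z_2\|$. If $\|z_2\|\ge c_0$, the probing term alone gives conditional variance at least $c_0^2\tau^{-2b}$, for every $z_1$. If $\|z_2\|<c_0$, you must use $\|\pi_{\hat\theta_\tau}(x_{\tau+1})-\pi_{\hat\theta_\tau}(f(\theta^{*\top}\phi_\tau))\|\le L\|w_{\tau+1}\|$. This shows $z_2^\top\kappa_{\tau+1}$ is too small to cancel $z_1^\top w_{\tau+1}$ once $c_0L(\sup_t\mathbb{E}[\|w_{t+1}\|^2\mid\mathcal F_t])^{1/2}$ is small compared with $\sqrt{\sigma}$. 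This is exactly what the paper's Lemma~\ref{lem77} does: it splits at $\|z_2\|\le\bar\sigma/4c_3$, using $\sum_{\tau\le t}\|w_\tau\kappa_\tau^\top\|\le c_3t$.

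Two further repairs are needed.

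\emph{Uniformity in $z$.} A fixed finite net fails for $b>0$. The discretization error is $2\varepsilon\|\sum_{\tau\le t}\phi_\tau\phi_\tau^\top\|=O(\varepsilon t)$, which is not $o(t^{1-2b})$. So the net would have to refine with $t$, and per-$z$ almost-sure limits do not control a growing net. Instead, apply the martingale strong law entrywise to the matrix $\sum_{\tau\le t}(\phi_\tau\phi_\tau^\top-\mathbb{E}[\phi_\tau\phi_\tau^\top\mid\mathcal F_{\tau-1}])$. Chow's theorem with $\sum_{\tau\le t}\|\phi_\tau\|^\gamma=O(t)$ makes this $o(t^{\max(2/\gamma,1/2)+\iota})=o(t^{1-2b})$ for small $\iota>0$. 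Do not introduce powers of $\max_\tau\|\phi_\tau\|$ here; they would break the bound for small $\gamma$. Then $\lambda_{\min}(\sum_{\tau\le t}\phi_\tau\phi_\tau^\top)\ge\sum_{\tau\le t}\lambda_{\min}(\mathrm{Cov}(\phi_\tau\mid\mathcal F_{\tau-1}))-o(t^{1-2b})\ge ct^{1-2b}$, uniformly in $z$.

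\emph{Truncation bookkeeping.} Keep this full $t^{1-2b}$ rather than weakening it to $t^{1-2b-\eta/2}$. Your direct route needs $\eta\ge4/\gamma$, which never holds in the stated range. With your truncation at $t^{\eta/2}$, the relevant quantity is the discarded mass, not the number of discarded indices. That mass satisfies $\sum_{\tau\le t}\|\phi_\tau\|^2I(\|\phi_\tau\|^2>t^{\eta/2})\le t^{-\eta(\gamma-2)/4}\sum_{\tau\le t}\|\phi_\tau\|^\gamma$, which does not beat $t^{1-2b-\eta/2}$ when $\gamma\le4$. Truncating at $t^{\eta}$ against $ct^{1-2b}$ needs only $\eta(\gamma-2)/2>2b$, which $\eta>8b/(\gamma-2)$ guarantees. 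The paper does the analogous truncation at $\tau^\eta$ with a H\"older bound.

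With these changes, your conditional-variance-plus-martingale-SLLN route is a valid alternative to the paper's. The paper instead decomposes $\phi_\tau$ into the $\mathcal F_{\tau-1}$-measurable part $\phi_\tau-\zeta_\tau$ and the martingale difference $\zeta_\tau$.
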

Proofs of Theorems \ref{thm3} and \ref{thm2} are provided in Appendix~\ref{ap3}.

Theorems~\ref{thm3} and~\ref{thm2} show that the adaptive controller  \eqref{control}, together with the estimator $\hat \theta_t$, can simultaneously stabilize the system and ensure convergence of the parameter estimates. 
Moreover, it is natural to ask whether, the resulting closed-loop trajectory can achieve a performance level comparable to that of a reference trajectory generated by 
the reference controller $u_t^{*} = \pi_{\theta^{*}}(x_t)$ which presumes knowledge of the true parameter $\theta^{*}$. This question is addressed in the next subsection.
\subsection{Adaptive tracking }\label{section_tracking}
In this subsection, we establish an adaptive tracking result under the adaptive control \eqref{control}. The performance is measured by the following average tracking error:
\begin{equation}\label{eg}
J_{t}=\frac{1}{t}\sum_{\tau=0}^{t}\left(\left\|x_{\tau}-x_{\tau}^{*}\|^{2}+ \|u_{\tau}- u_{\tau}^{*}\right\|^{2}\right).
    \end{equation}
In the sequel, we show that, if the reference controller $\pi_{\theta^*}(\cdot)$ exponentially stabilizes any equilibrium of \eqref{system}, then the adaptive controller \eqref{control}, together with the estimator $\hat{\theta}_t$ given in Algorithm~1, guarantees that the tracking error $J_t$ in \eqref{eg} vanishes asymptotically.

To this end, we introduce the following assumption on the reference trajectory $\{x_t^{*}\}_{t\geq 0}$, which requires that, under the true parameter $\theta^{*}$, the closed-loop behavior induced by $\pi_{\theta^*}(\cdot)$ is memoryless with respect to the initial condition.
\begin{assumption}\label{assum_optimality}
There exists constants $M_{0}>0$ and $\rho_{0}\in (0,1)$ such that, 
for any noise sequence $\{w_{t}\}\subset \mathbb{R}^{n}$ satisfying Assumption \ref{assum_noise} and any initial states $x_0, x_0'\in \mathbb{R}^{n}$, 
\begin{equation}\label{reference}
\|x_{t}^{*}(x_0)-x_{t}^{*}(x_0')\|\leq M_{0}\rho^{t}_{0}\|x_0-x_0'\|,\;\text{a.s.,}
\end{equation}
where $x_{t}^{*}(x_0)$ and  $x_{t}^{*}(x_0')$ are trajectories of system $\eqref{sec4_trajectory}$  
under the same noise sequence $\{w_{t}\}$
with initial conditions $x_{0}$ and $x_0'$, respectively. 
\end{assumption}
\begin{remark}
 Assumption \ref{assum_optimality} requires that the reference trajectory is exponentially contractive with respect to the initial condition. This condition ensures that transient estimation errors at the early stage do not affect the long-run average performance.  For a linear stochastic system, i.e., $f(x)=x$, if the reference trajectory is generated under the feedback control $u_{t}^{*}=K^{*}x_{t}$ such that $A^{*}+B^{*}K^{*}$ is Schur stable, then there exist constants 
$M_0>0$ and $\rho_{0}\in (0,1)$ such that
$\|(A^{*}+B^{*}K^{*})^{t}\|\leq M_0 \rho_0^{t}$ for all $t\geq 0$.
In this case, \eqref{reference}  follows immediately. 
\end{remark}

Now, we are ready to give the convergence result of the tracking error $J_t$ in the next theorem.
\begin{theorem}\label{thm4}
Suppose that Assumptions \ref{assum_nonlinear}-\ref{assum_optimality} hold, 
then for any given $b\in \left[0, \frac{(\gamma-2)^{2}}{4\gamma(\gamma+2)}\right)$ and $x_{0}\in \mathbb{R}^n$, the closed-loop trajectory under adaptive control $(\ref{control})$ 
guarantees that the average tracking error $J_t$ defined in \eqref{eg} vanishes 
with the following convergence rate:
\begin{equation}\label{eq:regret_converge_rate}
J_{t}=O\left(t^{-2b}+t^{\frac{(2b+\eta-1)(\gamma-2)}{\gamma}}(\log t)^{\frac{(1+\delta)(\gamma-2)}{\gamma}}\right),\; \text{a.s.,}
\end{equation}
where 
$\delta>0$ is given in Algorithm 1, $\gamma>2$ is given in Assumption \ref{assum_noise}, $\eta$ is any constant satisfying $\eta \in \left(\frac{8b}{\gamma-2}, \frac{2(\gamma-2)}{\gamma(\gamma+2)}\right)$. 
\end{theorem}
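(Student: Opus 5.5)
The plan is to compare the closed-loop trajectory $\{x_t\}$ with the reference trajectory $\{x_t^*\}$, using a one-step decomposition and the contraction in Assumption~\ref{assum_optimality}. Let $\Phi_{t,s}(z)$ denote the state at time $t$ of the reference closed-loop system \eqref{sec4_trajectory}, started at time $s$ from state $z$ and driven by the same noise $\{w_{\tau}\}$. Then $x_t^*=\Phi_{t,0}(x_0)$, and $x_t=\Phi_{t,t}(x_t)$. Telescoping gives
\[
x_t-x_t^*=\sum_{s=0}^{t-1}\bigl[\Phi_{t,s+1}(x_{s+1})-\Phi_{t,s+1}(\Phi_{s+1,s}(x_s))\bigr].
\]
Since Assumption~\ref{assum_optimality} holds for any admissible noise sequence, it can be applied to the time-shifted noise. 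This yields
\[
\|x_t-x_t^*\|\le M_0\sum_{s=0}^{t-1}\rho_0^{t-s-1}\|e_s\|,
\]
where $e_s=x_{s+1}-\Phi_{s+1,s}(x_s)$ is the one-step discrepancy.

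\textbf{Step 1: bounding the one-step discrepancy.} The noise cancels, so
\[
e_s=f\bigl(A^*x_s+B^*(\pi_{\hat\theta_{s-1}}(x_s)+v_s)\bigr)-f\bigl(A^*x_s+B^*\pi_{\theta^*}(x_s)\bigr).
\]
A global Lipschitz bound on $f$ would be convenient here. When $f$ is bounded, $\sup_z\beta(\|f(z)\|)<\infty$ gives only local control. If that is unavailable, I would fall back on the trivial bound $\|e_s\|\le C(1+\|x_s\|)$ and split into cases. With a Lipschitz bound, \eqref{assum4} gives
\[
\|e_s\|\le C\bigl(\|\hat\theta_{s-1}-\theta^*\|\,\|x_s\|+s^{-b}\bar\epsilon\bigr).
\]

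\textbf{Step 2: convolution and Hölder.} Young's inequality for the geometric convolution gives
\[
\sum_{\tau\le t}\|x_\tau-x_\tau^*\|^2\le C\sum_{s\le t}\|e_s\|^2\le C\sum_{s\le t}\bigl(\|\tilde\theta_{s-1}\|^2\|x_s\|^2+s^{-2b}\bigr),
\]
where $\tilde\theta_s=\hat\theta_s-\theta^*$. The second sum contributes $O(t^{1-2b})$, which after dividing by $t$ gives the $t^{-2b}$ term. For the first sum, the norm is not uniformly bounded, so I apply Hölder with exponents $\gamma/2$ and $\gamma/(\gamma-2)$:
\[
\sum_s\|\tilde\theta_{s-1}\|^2\|x_s\|^2\le\Bigl(\sum_s\|x_s\|^\gamma\Bigr)^{2/\gamma}\Bigl(\sum_s\|\tilde\theta_{s-1}\|^{2\gamma/(\gamma-2)}\Bigr)^{(\gamma-2)/\gamma}.
\]
Theorem~\ref{thm3} bounds the first factor by $O(t^{2/\gamma})$. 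Theorem~\ref{thm2} gives
\[
\|\tilde\theta_s\|^{2\gamma/(\gamma-2)}=O\bigl(((\log s)^{1+\delta}s^{2b+\eta-1})^{\gamma/(\gamma-2)}\bigr).
\]
Summing this in $s$ and raising to $(\gamma-2)/\gamma$ gives roughly $t^{(\gamma-2)/\gamma}\,t^{(2b+\eta-1)}(\log t)^{1+\delta}$, since the exponent $(2b+\eta-1)\gamma/(\gamma-2)>-1$ or is handled similarly. Dividing by $t$ produces a rate. I would then check whether a cruder Hölder split, using only $\|\tilde\theta\|^2\le$ a bounded diameter times $\|\tilde\theta\|^{2(\gamma-2)/\gamma}$, reproduces exactly the stated exponent $\frac{(2b+\eta-1)(\gamma-2)}{\gamma}$. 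That split is likely the intended route. Specifically, write
\[
\|\tilde\theta\|^2\|x\|^2\le\|\tilde\theta\|^{2}\cdot\|x\|^2,
\]
and use Hölder as $(\sum\|x\|^\gamma)^{2/\gamma}\bigl(\sum\|\tilde\theta\|^{2\gamma/(\gamma-2)}\bigr)^{(\gamma-2)/\gamma}$. Bound $\|\tilde\theta\|^{2\gamma/(\gamma-2)}\le C\|\tilde\theta\|^2$ using boundedness of $\Theta$. This gives $\bigl(t\cdot t^{2b+\eta-1}(\log t)^{1+\delta}\bigr)^{(\gamma-2)/\gamma}\,t^{2/\gamma}/t$, which matches \eqref{eq:regret_converge_rate}.

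\textbf{Step 3: the control term.} Assumption~\ref{assum_stability} and $L$-Lipschitzness give
\[
\|u_\tau-u_\tau^*\|\le L_1\|\tilde\theta_{\tau-1}\|\,\|x_\tau\|+L\|x_\tau-x_\tau^*\|+\|v_\tau\|.
\]
Hence the control term is bounded by the same quantities as the state term.

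\textbf{Main obstacle.} The hardest step is Step 1: the global Lipschitz property of $f$ along the two arguments. Assumption~\ref{assum_nonlinear} gives only a local constant $\beta(c)$. For bounded $f$, I would use the bounded-noise clause of Assumption~\ref{assum_noise}, which keeps $x_s$ bounded. Then the arguments of $f$ stay in a ball whose radius is controlled by $\|x_s\|$, and $\beta$ is evaluated on the corresponding argument.
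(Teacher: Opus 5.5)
Your argument is correct, but it propagates the tracking error differently from the paper.

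The paper fixes a block length $h$ with $\rho_1=M_0\rho_0^h<\tfrac12$ and restarts the reference dynamics from $x_t$ at time $t$. It bounds the $h$-step divergence $\|x_{t+h}-\bar x_{t+h}\|^2$ by iterating Lipschitz bounds for $f$ and $\pi_{\theta^*}$ across the block. This gives the recursion $\|x^*_{t+h}-x_{t+h}\|^2\le 2\rho_1\|x^*_t-x_t\|^2+2M_2\sum_{\tau=t}^{t+h}\|\tilde\theta_{\tau-1}\|^2\|x_\tau\|^2+2M_2t^{-2b}$, which it sums using $2\rho_1<1$.

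Your telescoping over the flow $\Phi_{t,s}$, plus the geometric-convolution bound, uses the decay of Assumption~\ref{assum_optimality} at every lag rather than at a single lag $h$. In return, you need only a one-step Lipschitz estimate of $f$ at arguments built from the actual state $x_s$. You never propagate errors through the reference closed-loop map, you need not choose $h$, and you get the explicit constant $M_0^2/(1-\rho_0)^2$. Both routes rely on the same pathwise, time-shifted reading of Assumption~\ref{assum_optimality} with random initial states.

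Your Lipschitz case split is right, and more explicit than the paper's bare appeal to ``the Lipschitz property of $f$''. For unbounded $f$, monotonicity of $\beta$ together with $\sup_z\beta(\|f(z)\|)<\infty$ forces $\sup_{c\ge 0}\beta(c)<\infty$. For bounded $f$, the bounded-noise clause gives $\sup_s\|x_s\|<\infty$ a.s.\ and hence an a.s.\ finite random constant. This is the same dichotomy the paper uses for \eqref{b9}. Your fallback $\|e_s\|\le C(1+\|x_s\|)$ would give no rate and is unnecessary.

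Your final H\"older step with $\|\tilde\theta\|^{2\gamma/(\gamma-2)}\le C\|\tilde\theta\|^2$ is exactly the paper's \eqref{eq:e13}. The sharper variant you tried first is also valid and only improves the rate. For example, when $\gamma\le 6$ the exponent $(2b+\eta-1)\gamma/(\gamma-2)$ is below $-1$ for every admissible $b,\eta$, so that sum is bounded.

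In Step 3, you split at $x_\tau$ where the paper splits at $x_\tau^*$. Yours is slightly cleaner, since Theorem~\ref{thm3} controls $\sum\|x_\tau\|^\gamma$ directly, and you correctly retain the $v_\tau$ term.
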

The proof of Theorem \ref{thm4} is provided in Appendix \ref{ap5}.

As indicated by Theorem \ref{thm4}, the performance gap between the adaptive trajectory and the reference trajectory consists of two sources: one is induced by the probing signal $v_t$ injected through the control law \eqref{control}, and the other is the estimation error $\|\hat \theta_t-\theta^*\|$ at each time step. The decay rates of these two components correspond, respectively, to the two terms in \eqref{eq:regret_converge_rate}. 

Theorem~\ref{thm4} focuses on the classical mean-square cost. In some scenarios, however, one may be interested in a more general nonlinear stage cost $c:\mathbb{R}^{n+m}\to \mathbb{R}$ and the associated trajectory tracking metric $\frac{1}{t}\sum_{\tau=0}^{t}|c(x_{t}, u_{t})-c(x_{t}^{*}, u_{t}^{*})|^2.$ In this case, the corresponding convergence result is presented in the following corollary.
\begin{corollaryx}\label{co1}
Under the conditions of Theorem~\ref{thm4}, with \eqref{reference} replaced by  
\begin{equation}\label{cost_optimality}
\begin{aligned}
&|c(x_{t}^{*}(x_{0}), u_{t}^{*}(x_{0}))-c(x_{t}^{*}(x_{0}'), u_{t}^{*}(x_{0}'))|\\
\leq &M_{0}\rho^{t}_{0}\|x_{0}-x_{0}'\|,\;\text{a.s.},
\end{aligned}
\end{equation}
and assuming that $c(\cdot)$ is Lipschitz continuous, we have that for any $b\in \left[0, \frac{(\gamma-2)^{2}}{4\gamma(\gamma+2)}\right)$ and any $x_{0}\in \mathbb{R}^n$, 
\begin{equation}\nonumber
\begin{aligned}
&\frac{1}{t}\sum_{\tau=0}^{t}|c(x_{t}, u_{t})-c(x_{t}^{*}, u_{t}^{*})|^2\\
=&O\left(t^{-2b}+t^{\frac{(2b+\eta-1)(\gamma-2)}{\gamma}}(\log t)^{\frac{(1+\delta)(\gamma-2)}{\gamma}}\right),\; \text{a.s.,}
\end{aligned}
\end{equation}
where $\delta, \gamma$, and $\eta$ are as in Theorem~\ref{thm4}.
\end{corollaryx}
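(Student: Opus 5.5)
The plan is to rerun the proof of Theorem~\ref{thm4} with the squared state/input deviation replaced by the squared cost deviation. Only the final step changes: the exponential contraction \eqref{reference} is used in its cost form \eqref{cost_optimality}, and Lipschitz continuity of $c$ turns per-step state/input perturbations into cost perturbations. I take as given the intermediate objects built in the proof of Theorem~\ref{thm4}: the closed-loop trajectory $\{x_t\}$, the reference trajectory $\{x_t^*\}$ under the same noise, and the rate of Theorem~\ref{thm2} for $\|\hat\theta_t-\theta^*\|^2$. I also use the moment bound of Theorem~\ref{thm3}, namely $\frac1t\sum\|x_\tau\|^\gamma=O(1)$.

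\textbf{Step 1 (telescoping).} For each $s\le t$, let $z^{(s)}_t$ denote the trajectory that follows the adaptive closed loop \eqref{sec4_eq21} up to time $s$ and then follows the reference feedback $\pi_{\theta^*}$, driven by the same noise. Then $z^{(0)}_t=x^*_t$ and $z^{(t)}_t=x_t$. This gives
\[
c(x_t,u_t)-c(x_t^*,u_t^*)=\sum_{s} \Bigl[c\bigl(z_t^{(s+1)},u_t^{(s+1)}\bigr)-c\bigl(z_t^{(s)},u_t^{(s)}\bigr)\Bigr].
\]
In the $s$-th term, both trajectories start the reference dynamics from time $s+1$, with initial states that differ by
\[
\Delta_{s+1}=f(A^*x_s+B^*u_s)-f\bigl(A^*x_s+B^*\pi_{\theta^*}(x_s)\bigr).
\]
By \eqref{cost_optimality}, applied to the shifted noise sequence (which still satisfies Assumption~\ref{assum_noise}), each term is at most $M_0\rho_0^{t-s-1}\|\Delta_{s+1}\|$. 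For the final term, where the input is the adaptive one, I use the Lipschitz property of $c$ directly.

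\textbf{Step 2 (per-step bound).} By Assumption~\ref{assum_nonlinear} and \eqref{assum4},
\[
\|\Delta_{s+1}\|\le \bar\beta\|B^*\|\bigl(L_1\|\hat\theta_{s-1}-\theta^*\|\|x_s\|+s^{-b}\bar\epsilon\bigr),
\]
where $\bar\beta$ is a bound on $\beta$. If $f$ is unbounded, $\beta$ is constant. If $f$ is bounded, the noise is bounded by Assumption~\ref{assum_noise}, so the states stay bounded and a uniform local Lipschitz constant is available. This is the same bound used in the proof of Theorem~\ref{thm4}.

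\textbf{Step 3 (summation).} Squaring and applying Cauchy--Schwarz with the geometric weights $\rho_0^{t-s}$ gives
\[
|c(x_t,u_t)-c(x_t^*,u_t^*)|^2\le C\sum_s\rho_0^{t-s}\|\Delta_{s+1}\|^2 .
\]
Averaging over $t$ therefore reduces the corollary to bounding $\frac1t\sum_s\|\Delta_{s+1}\|^2$. That quantity is exactly the one controlled in Theorem~\ref{thm4}: the probing part contributes $O(t^{-2b})$. For the estimation part, apply H\"older with exponents $\gamma/2$ and $\gamma/(\gamma-2)$ to $\|\hat\theta_s-\theta^*\|^2\|x_s\|^2$, then use Theorem~\ref{thm3} together with the rate of Theorem~\ref{thm2}. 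This yields $t^{(2b+\eta-1)(\gamma-2)/\gamma}(\log t)^{(1+\delta)(\gamma-2)/\gamma}$.

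\textbf{Main obstacle.} Step 1 is the delicate part. Condition \eqref{cost_optimality} is stated for trajectories started at time $0$ under a noise sequence satisfying Assumption~\ref{assum_noise}, but I need it for the reference dynamics restarted at time $s+1$ from the random, $\mathcal F_{s+1}$-measurable states $z^{(s)}_{s+1}$ and $z^{(s+1)}_{s+1}$. I would handle this as the paper presumably does in Theorem~\ref{thm4}: the assumption holds for all initial states and all admissible noise sequences pathwise almost surely, so it applies to the shifted noise $\{w_{s+1+k}\}_k$ conditionally on $\mathcal F_{s+1}$. A countable union over $s$ of null sets keeps the statement almost sure.
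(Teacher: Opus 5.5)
Your proposal is correct, but it follows a different route from the paper. The paper states the corollary without a separate proof; its template is the proof of Theorem~\ref{thm4}. That proof fixes a horizon $h$ with $\rho_1=M_0\rho_0^{h}<\frac12$ and restarts the reference dynamics from $x_t$ at time $t$. It then closes a block recursion $\|x^*_{t+h}-x_{t+h}\|^2\le 2\rho_1\|x^*_t-x_t\|^2+(\text{in-block error})$, which is summed over blocks.

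You instead restart at every step and telescope over the hybrid trajectories $z^{(s)}$. Each one-step deviation $\Delta_{s+1}$ then enters exactly once, with weight $M_0\rho_0^{t-s-1}$. Cauchy--Schwarz with geometric weights reduces everything to $\frac1t\sum_s\|\Delta_{s+1}\|^2$, which is bounded exactly as in \eqref{eq:e13}.

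What your route buys is that it only needs contraction of the measured quantity with respect to the initial state, which is precisely what \eqref{cost_optimality} provides. The block recursion does not transfer verbatim. Under \eqref{cost_optimality}, the cost gap at time $t+h$ is controlled by the state gap $\|x^*_t-x_t\|$, not by the cost gap at time $t$. So nothing closes the recursion once \eqref{reference} is genuinely dropped. (If \eqref{reference} were kept, the corollary would follow from Theorem~\ref{thm4} and the Lipschitz property of $c$ alone.) The paper's approach, for its part, uses the contraction only at the single horizon $h$, plus a finite-horizon Lipschitz estimate within each block.

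Three small points to tighten:
\begin{enumerate}
\item In Step 3 you must carry the final term $L_c\|u_t-\pi_{\theta^*}(x_t)\|\le L_c(L_1\|\hat\theta_{t-1}-\theta^*\|\|x_t\|+\bar\epsilon t^{-b})$. It has the same form as $\|\Delta_{t+1}\|$, so it is harmless.
\item For unbounded $f$, Assumption~\ref{assum_nonlinear} makes $\beta$ bounded, not constant. Boundedness is all you use.
\item For the restart issue, the cleanest justification is continuity. For a fixed noise realization, the map $x_0\mapsto c(x^*_t(x_0),\pi_{\theta^*}(x^*_t(x_0)))$ is continuous. So requiring \eqref{cost_optimality} on a countable dense set of initial pairs, and for countably many shifts $s$, gives one common null set. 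After that, random $\mathcal F_{s+1}$-measurable initial states are admissible. This is the same step the paper takes implicitly in Theorem~\ref{thm4}.
\end{enumerate}
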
   
\section{Numerical Simulation}\label{sec5}
We present two application scenarios together with numerical simulations in Sections \ref{sec51} and \ref{sec52}, respectively, to verify the proposed methods.
\subsection{Pining control in opinion dynamics with a budget}\label{sec51} 
\vspace{-2mm}
Opinion dynamics typically involves nonlinearities in the form of ``S-shaped” functions, such as the hyperbolic tangent and sigmoid (see, e.g., \cite{ancona, fontan2025collective}). Here, we consider a nonlinear discrete-time opinion dynamics 
\vspace{-2mm}
\begin{equation}\label{e79}
 x_{t+1}=a\tanh\left((\alpha^* I+\tilde{A}^*)x_{t}+B^{*}u_{t}\right)+w_{t+1},
\end{equation}
 where $x_{t}=[x_{1,t}, \cdots, x_{n,t}]\in \mathbb{R}^n$ denotes the opinion state of $n$ agents at time $t$. The constant $a>0$ is the maximum sensitivity of agents,
 $\alpha^*>0$ determines the extent to which each agent reinforces its own opinion, 
and  $\tilde{A}^*\in\mathbb{R}^{n\times n}$ is the adjacency matrix of an interaction graph $\mathcal{G}$ without self-loops. The noise sequence 
$\{w_{t}\}$ is i.i.d. with $\sup_{t}\|w_{t}\|<\bar{w}$ and $\mathbb{E}[w_{t}w_{t}^{\top}]>\sigma I_n$ for some constants $\bar{w}>0$ and $\sigma>0$.
One way to intervening in the asymptotic behavior of the above opinion dynamics is through \textit{pinning control} $u_t$, whereby a selected subset of nodes is influenced by an external leader whose opinion remains constant over time.
Given a budget $M\in\mathbb{Z}^+$, let $D\subset \{1, 2,\cdots, n\}$ with $|\mathcal D|=M$
denote the set of pinned nodes. The pinning vector is defined as $B^{*}=[b_1, b_2,\cdots, b_n]^{\top}\in\mathbb{R}^{n}$, where $b_{i}=I(i\in D)$. Assume $\alpha^{*}a>1$, so that the decoupled single-agent dynamics $x_{t+1}=a\tanh(\alpha^{*}x_t)$ admits two nonzero equilibria. The desired leader opinion $x_L\in\mathbb{R}$ is chosen as one of these equilibria.

Define the system parameter as $\theta^{*}=[\alpha^{*} I_n+\tilde{A}^{*}]^{\top}$. Given $x_L$ and a parameter $\theta \in \mathbb{R}^{2n\times n}$, we can employ the pinning control law
\begin{equation}\label{eq:controller:example_1}
u_{t}=\kappa(\theta) x_L,
\end{equation}
where $\kappa(\theta)>0$  is the control gain which depends on the graph topology $\tilde{A}^*$ and reinforcement factor $\alpha^*$. Under this pinning control design, the number of nodes whose opinions eventually agree in sign with the leader's opinion is maximized under certain conditions (see \cite[Section~IV-A]{ancona} for details on the construction of $\kappa(\theta)$ and the corresponding results). Here, ``sign'' refers to the binary-opinion setting.

When the graph topology and reinforcement factor, i.e., the parameter $\theta^{*}$, is unknown, the following numerical example illustrates that our proposed method can simultaneously learn $\theta^{*}$ and synthesize an adaptive control $\hat u_{t}$, such that the resulting closed-loop opinion trajectory converges to that generated by the oracle control $u^{*}_{t}$. Here, $u^{*}_{t}$ is the controller by setting $\theta=\theta^*$ in \eqref{eq:controller:example_1}.
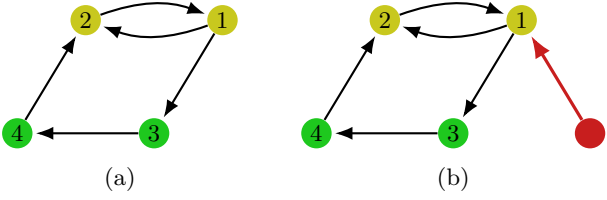
\begin{figure}[tbp]
\centering

\begin{minipage}[t]{0.48\linewidth}
\centering
\begin{tikzpicture}[
  >=Latex,
  node1/.style={circle,fill=myyellow,minimum size=4mm,inner sep=0pt,font=\small},
  node2/.style={circle,fill=mygreen,minimum size=4mm,inner sep=0pt,font=\small},
  every edge/.style={-Latex,thick},
  shorten >=1pt
]
\node[node1] (n2) at (0,0) {2};
\node[node1] (n1) at (1.8,0) {1};
\node[node2] (n3) at (0.9,-1.5) {3};
\node[node2] (n4) at (-0.9,-1.5) {4};

\draw[->,thick] (n1) to[bend left=20] (n2);
\draw[->,thick] (n2) to[bend left=20] (n1);
\draw[->,thick] (n1) to (n3);
\draw[->,thick] (n3) -- (n4);
\draw[->,thick] (n4) to (n2);
\end{tikzpicture}

\vspace{1mm}
{\small (a)}
\end{minipage}
\hfill
\begin{minipage}[t]{0.48\linewidth}
\centering
\begin{tikzpicture}[
  >=Latex,
  node1/.style={circle,fill=myyellow,minimum size=4mm,inner sep=0pt,font=\small},
  node2/.style={circle,fill=mygreen,minimum size=4mm,inner sep=0pt,font=\small},
  pin/.style ={circle,fill=myred,minimum size=4mm,inner sep=0pt},
  every edge/.style={-Latex,thick},
  shorten >=1pt
]
\node[node1] (n2) at (0,0) {2};
\node[node1] (n1) at (1.8,0) {1};
\node[node2] (n3) at (0.9,-1.5) {3};
\node[node2] (n4) at (-0.9,-1.5) {4};
\node[pin]  (p)  at (2.7,-1.5) {};

\draw[->,thick] (n1) to[bend left=20] (n2);
\draw[->,thick] (n2) to[bend left=20] (n1);
\draw[->,thick] (n1) to (n3);
\draw[->,thick] (n3) -- (n4);
\draw[->,thick] (n4) to (n2);

\draw[myred,->,very thick] (p) to (n1);
\end{tikzpicture}

\vspace{1mm}
{\small (b)}
\end{minipage}

\caption{Topology of the four agents. Yellow and green represent different initial opinions of the agents; the red node denotes the leader (external controller), and the red arrow indicates the pinning input applied to agent 1.}
\label{fig:1}
\end{figure}
\begin{example}
Consider the nonlinear opinion dynamics \eqref{e79} on a four-agent network, as illustrated in Fig.~\ref{fig:1}(a). The noise sequence $\{w_t\}_{t\geq 0}$ is i.i.d. and uniformly distributed over $[-0.1, 0.1]^4$. Let $ a=2$, $\alpha^{*} = 0.7$, and let the corresponding adjacency matrix $\tilde{A}^{*}$ be
\[
\tilde{A}^{*}=\begin{bmatrix} 
0 & 0.4 & 0 & 0\\ 
0.5 & 0 & 0 & 0.1\\
0.3 & 0 & 0 & 0\\
0 & 0 & 0.5 & 0\\
 \end{bmatrix}.
\] 
The desired leader opinion in this example is calculated as $x_L=1.63$. Additionally,  we set $M=1$ and $B^{*}=[1,0, 0, 0]^{\top}$, meaning that agent $1$ is the pinned node, as shown in Fig.
\ref{fig:1}(b). 
Let $\Theta=\{\theta \in \mathbb{R}^{(n+m)\times n}: \|\theta\|\leq 5 \}$ and initialize $\hat{\theta}_0=0$. The sequence ${\hat{\theta}_t}$ is generated using Algorithm~1, and the adaptive controller is implemented as $\hat u_{t}=\kappa(\hat \theta_t) x_L+t^{-\frac{1}{8}}\epsilon_t.$,
where $\kappa(\cdot)$ is designed according to \cite[Section
IV.A]{ancona}, and the probing signal
 $\{\epsilon_t\}_{t\geq 0}$ is i.i.d. and uniformly distributed on $[-1, 1]$.  
By Theorem \ref{thm2} and Corollary \ref{co1}, the parameter estimates are almost surely consistent, and the trajectory error associated with the sign function satisfies that
\begin{equation}\label{opinion_regret}
\limsup_{t\to \infty}\frac{1}{t}\sum_{\tau=0}^{t}\sum_{i=1}^{4}\|sgn(x_{i,\tau})\!-\!sgn(x_{i,\tau}^{*})\|\!=\!0,\,a.s.,
\end{equation}
where $x_{i,t}$ and $x_{i,t}^{*}$ denote the opinion trajectory under adaptive control $\hat u_t$ and oracle control $u^*_t$, respectively.
\vspace{-1mm}

In this setup, the parameter estimation error converges to zero, as shown in Fig.\ref{fig_1}(a).
The regret trajectory defined in \eqref{opinion_regret} is also plotted in Fig.~\ref{fig_1}(b). As can be seen, under the adaptive control $\hat{u}_t$, the time-averaged opinion of each agent converges to that under the oracle control $u^{*}_t$ associated with the true parameter $\theta^{*}$.
\begin{figure}[!t]
\centering
\includegraphics[width=3.5in]{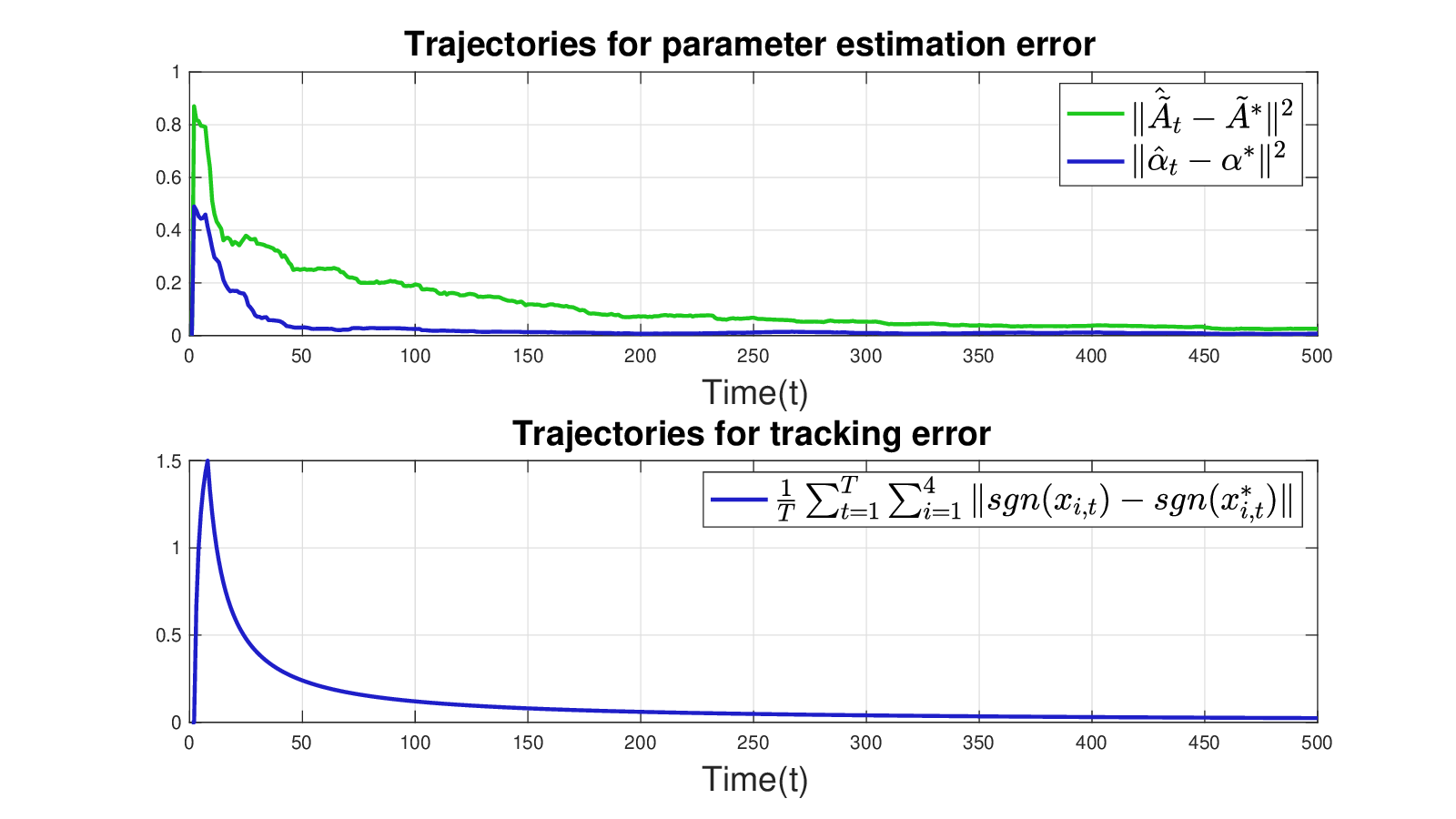}
\caption{Trajectories for the parameter estimation error and average performance tracking error.}
\label{fig_1}
\end{figure}
\end{example}

\subsection{Epidemic models with multi-community}\label{sec52} 
Consider a susceptible-infective (SI) model for the epidemic evolution across $n$ networks of individuals, 
\begin{align}
    \begin{cases}
        x_{i,t+1}\! =\!\!\! \left[ (1+\beta s_{i,t}) x_{i,t} +\sum\limits_{ j\not=i}\alpha s_{i,t} x_{j,t}-\bar{u}_{i,t}+\eta_{i,t+1}\right]_+ \\
s_{i,t} = N-x_{i,t}, \label{eq:SI_model_1}
    \end{cases}
\end{align}
where state $x_{i,t}$ and $s_{i,t}$ represent the levels of infective and susceptible in network $i$ at time $t$, respectively. The constant $N$ is the number of individuals in each network, and $\alpha, \beta \in \mathbb R$ are the intra-network and cross-network infection rates. For each network $i\in[n]$, the action $\bar{u}_{i,t}$ represents a controlled removal rate, and the noise sequence $\{\eta_{i, t}\}$ is i.i.d. Gaussian with distribution $\mathcal{N}(0,\sigma^{2})$. Additionally, the operator $[\cdot]_{+}: \mathbb{R}\to \mathbb{R}$ is defined by $[z]_{+}=z$ if $0\leq z \leq N$, $[z]_{+}=N$ if $z>N$, and $[z]_{+}=0$ otherwise.

Without loss of generality, we consider the case $n=2$ with a complete graph.\footnote{A general connection topology with $n>2$ can be handled in a similar manner.}, and define $x_{t}=[x_{1,t}, x_{2,t}]^{\top},$ $u_{t}=[x_{1,t}^{2}, x_{2,t}^{2}, x_{1,t}x_{2,t}, \bar{u}_{1,t}, \bar{u}_{2,t}]^{\top}$, $\eta_{t}=[\eta_{1,t}, \eta_{2,t}]^{\top}.$ Let
\[
A^{*}=\left[\begin{matrix} 
\beta N & \alpha N \\ 
\alpha N & \beta N \\
 \end{matrix}\right], \;\; 
 B^{*}=\left[\begin{matrix} 
-\beta & 0 & -\alpha & -1 & 0\\ 
0 & -\beta & -\alpha & 0 & -1\\
 \end{matrix}\right].
 \]
Then the dynamics \eqref{eq:SI_model_1} can be rewritten as $x_{t+1}=[A^{*}x_{t}+B^{*}u_{t}+\eta_{t+1}]_{+}$, where the operator $[\cdot]_{+}$ is applied elementwise. Accordingly, we define the system parameter as
$\theta^{*}=[A^{*}, B^{*}]^{\top}$, which encodes the graph topology, the infection rates, and the community sizes.

By an argument similar to that in Example~\ref{ex1}, noise entering inside the nonlinear map can be recast as an equivalent additive external noise. Specifically, the system \eqref{eq:SI_model_1} can be rewritten as
\begin{equation}\label{positive}
\begin{aligned}
x_{t+1}=h(A^{*}x_{t}+B^{*}u_{t})+w_{t+1},
\end{aligned}
\end{equation}
where $h:\mathbb{R}^n\to\mathbb{R}^n$ is defined componentwise by $h(z)=[h_1(z_1),\ldots,h_n(z_n)]^{\top}$, with 
$h_i(z_i)=\mathbb{E}\!\left[[z_i+\eta_{i,t+1}]_{+}\right]=N-z_iG(-z_i)-(N-z_i)G(N-z_i)+\sigma^{2}[g(-z_i)-g(N-z_i)]$. Here, $G(\cdot)$ and $g(\cdot)$ denote, respectively, the cumulative distribution function and the probability density function of the Gaussian distribution  $\mathcal{N}(0, \sigma^2)$.
The induced external noise 
$w_{t+1}=x_{t+1}-h(A^{*}x_t+B^{*}u_t)
$ is a martingale difference sequence. It is straightforward to verify that the map $h$ and $\{w_t\}$ satisfy Assumptions \ref{assum_nonlinear} and \ref{assum_noise}.

When the parameter $\theta^{*}$ is known,  a stabilizing feedback control can be designed in the form  $\bar{u}_{t}^{*}=\bigl(R + P^{*}\bigr)^{-1} P^{*} A^{*}x_t^{*}$, where $P^{*}\in \mathbb R^{2\times 2}$ is the solution to the discrete-time Riccati equation $P^{*} = A^{*\top} P^{*} A^{*} - A^{*\top} P^{*} (R + P^{*})^{-1} P^{*} A^{*} + Q,$ and $Q>0, R>0$ are given weighting matrices defining the quadratic objective to be minimized (see \cite{bloem} for details). The following example demonstrates that the proposed adaptive controller attains the same asymptotic performance without prior knowledge of $\theta^*$.
\begin{example}
In \eqref{positive},
 we set the parameters $\alpha^{*}=0.15$, $\beta^{*}=0.3$ and $N=10$. Define the feasible set $\Theta=\left\{[A, B]^{\top}: \|A\|\leq 15, \|B\|\leq 5, A\in  \mathbb{R}^{2\times 2}, B\in \mathbb{R}^{2\times 5} \right\}$, and let
$\Theta_{\epsilon}=\left\{\theta : 2\theta\in \Theta \right\}$. The parameter estimates  $\hat{\theta}_t$ are generated by Algorithm 1. Following the adaptive control structure in \eqref{control}, we apply the adaptive control  $u_{t}=[x_{1,t}^{2}, x_{2,t}^{2}, x_{1,t}x_{2,t}, \bar{u}_{1,t}, \bar{u}_{2,t}]^{\top}$ with $[\bar{u}_{1,t}, \bar{u}_{2,t}]^\top= \bigl(I + \hat{P}_{t}\bigr)^{-1} \hat{P}_{t} \hat{A}_{t}x_t+t^{-\frac{1}{8}}\epsilon_t.$ Here $\epsilon_t=[\epsilon_{1,t}, \epsilon_{2,t}]^{\top}$, with $\{\epsilon_{i,t}\}$ i.i.d. and uniformly distributed on $[-1,1]$. The matrix $\hat{P}_{t}$ is the unique solution to the Riccati equation $\hat{P}_{t} = \hat{A}_{t}^{\top} \hat{P}_{t} \hat{A}_{t} - \hat{A}_{t}^{\top}\hat{P}_{t} (I + \hat{P}_{t})^{-1} \hat{P}_{t}\hat{A}_{t} + I,$ where we choose $Q=R=I_{2}$. 

In Fig.\ref{fig_2}, we plot the trajectories of the parameter estimation error under different noise standard deviations $\sigma=1, 5, 10$. In all three cases, the estimation error decreases as $t$ increases, which is consistent with Theorem \ref{thm2}. This is because larger noise inject stronger excitation during the estimation process. Moreover, the trajectory of tracking error is also shown in Fig.~\ref{fig_2}, which shows that both the state trajectory and the control signal track the reference ones asymptotically. 
\begin{figure}[!t]
\centering
\includegraphics[width=3.4in]{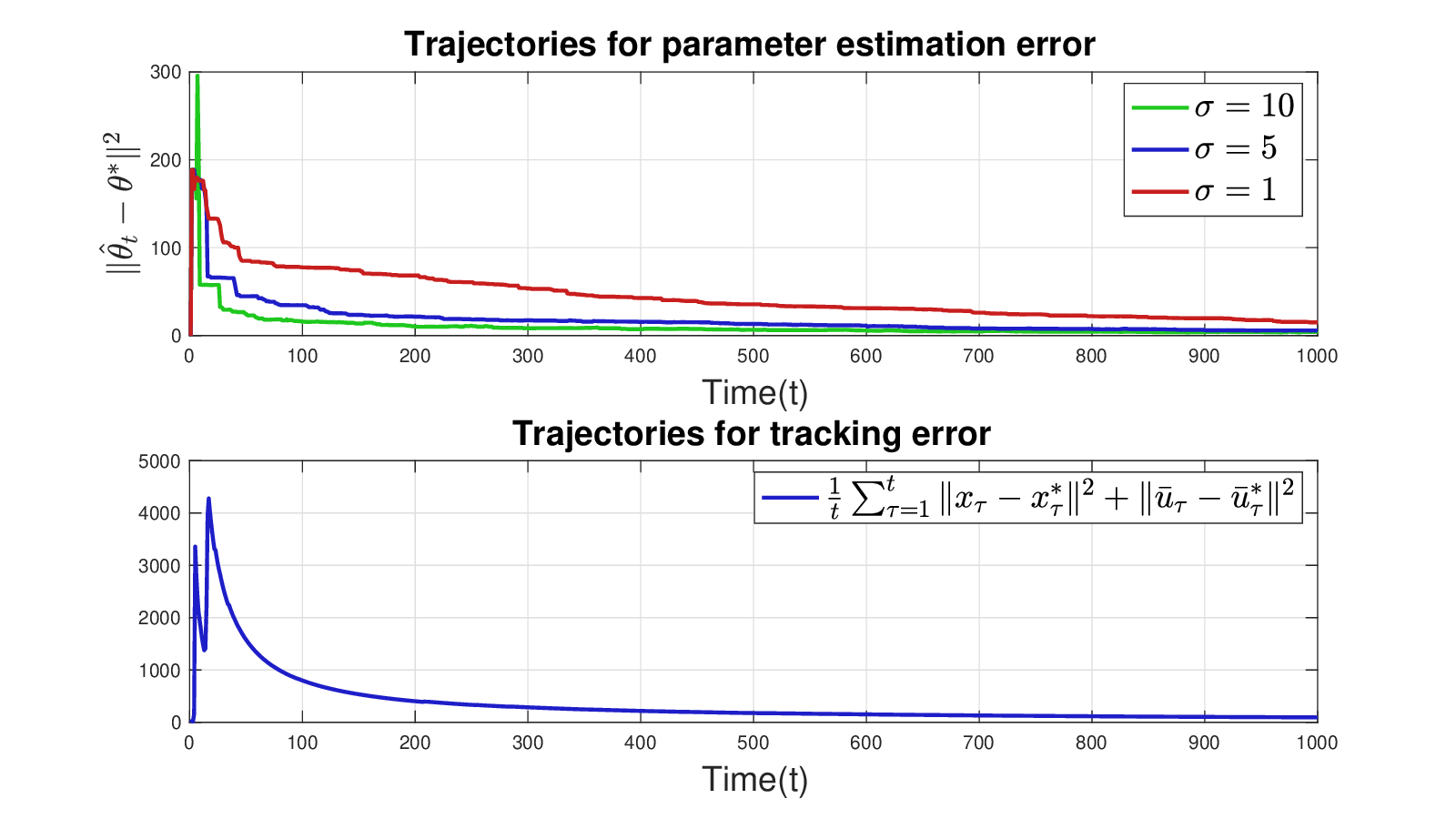}
\caption{Trajectories of the parameter estimation error for different noise standard deviations $\sigma$ and the average tracking performance error. 
}
\label{fig_2}
\end{figure}
\end{example}
\section{Conclusions}\label{sec6}	
This paper investigated the online identification and adaptive control for nonlinear stochastic systems in which the state update is given
by a nonlinear function of linear dynamics with additive random noise. A new online identification algorithm is proposed to estimate the unknown parameters, and is shown that will converge to the true parameter almost surely under
a general non-PE condition on the trajectory data. These identification results further allow us to establish
global stability and asymptotic tracking optimality for the resulting adaptive closed-loop system.
We note that the analysis can be extended to other settings, including the case where Gaussian noise enters inside the nonlinear map $f(\cdot)$, provided that $f(\cdot)$ is bounded and monotone, as illustrated in Example \ref{ex1}. However, many interesting problems remain open for future investigation, such as the design and analysis of adaptive control with general observation matrices, and extensions to more complex nonlinear systems including multi-layer networks.
\appendix
\setlength{\abovedisplayskip}{5pt}   
\setlength{\belowdisplayskip}{5pt}   
\setlength{\abovedisplayshortskip}{1.5pt}
\setlength{\belowdisplayshortskip}{1.5pt}
    \setlength{\jot}{3pt}  
\setlength{\parindent}{0.85em}  
\setlength{\parskip}{0.5\baselineskip} 
\section{Proof of Theorem \ref{thm1}}\label{ap2}
For notational simplicity, define
\begin{equation}
\psi_{t}=f(\theta^{*\top}\phi_{t})-f(\hat{\theta}_{t}^{\top}\phi_{t}), \quad t\geq 0.
\end{equation}
Since both $\phi_{t}$ and $\hat{\theta}_{t}$ are $\mathcal{F}_{t}-$measurable, it follows that $\psi_{t}$ is also  $\mathcal{F}_{t}-$measurable. As a key step toward Theorem~\ref{thm1}, we first establish the following lemma.
\begin{lemma}\label{lem6}
	Under the conditions of Theorem~\ref{thm1}, the parameter estimates produced by Algorithm~1 have the following property as $t\to \infty$:	\begin{equation}\nonumber
		\begin{aligned}
\operatorname{tr}[(\theta^{*}-\hat{\theta}_{t})^{\top}P_{t}^{-1}(\theta^{*}-\hat{\theta}_{t})]+\sum_{\tau=0}^{t}a_{\tau}\|\psi_{\tau}\|^{2}= O\left(1\right),\;\; \text{a.s.,}
		\end{aligned}
	\end{equation}
    where $\{a_{\tau}\}_{\tau\ge 0}$ is defined in \eqref{sec3_alg11}.
\end{lemma}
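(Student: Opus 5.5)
We follow the standard Lyapunov analysis for weighted least squares, in the style of \cite{lg1996, g1999}. Set $\tilde{\theta}_t=\theta^{*}-\hat{\theta}_t$ and $V_t=\operatorname{tr}[\tilde{\theta}_t^{\top}P_t^{-1}\tilde{\theta}_t]$.

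\textbf{Removing the projection.} Let $\bar{\theta}_{t+1}=\hat{\theta}_{t}+\mu_{t}^{-1}d_{t}P_{t+1}\phi_{t}[x_{t+1}-f(\hat{\theta}_{t}^{\top}\phi_{t})]^{\top}$ be the unprojected update. If $\bar{\theta}_{t+1}\in\Theta$, the projection does nothing. Otherwise $\hat{\theta}_{t+1}$ is the $P_{t+1}^{-1}$-weighted projection onto the convex set $\Theta_\epsilon$, and $\theta^{*}\in\Theta_\epsilon$. The projection theorem in the inner product $\langle X,Y\rangle_{P^{-1}}=\operatorname{tr}[X^{\top}P_{t+1}^{-1}Y]$ then gives
\[
\operatorname{tr}[(\theta^{*}-\hat{\theta}_{t+1})^{\top}P_{t+1}^{-1}(\theta^{*}-\hat{\theta}_{t+1})]\le \operatorname{tr}[(\theta^{*}-\bar{\theta}_{t+1})^{\top}P_{t+1}^{-1}(\theta^{*}-\bar{\theta}_{t+1})].
\]
It therefore suffices to bound $\bar V_{t+1}$, the right-hand side of this inequality.

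\textbf{Expanding $\bar V_{t+1}$.} Write $x_{t+1}-f(\hat{\theta}_t^{\top}\phi_t)=\psi_t+w_{t+1}$ and use $P_{t+1}^{-1}=P_t^{-1}+\mu_t^{-1}d_t^2\phi_t\phi_t^{\top}$. Then
\[
\bar V_{t+1}=V_t+\mu_t^{-1}d_t^2\|\tilde{\theta}_t^{\top}\phi_t\|^2-2\mu_t^{-1}d_t\phi_t^{\top}\tilde{\theta}_t(\psi_t+w_{t+1})+\mu_t^{-2}d_t^2\phi_t^{\top}P_{t+1}\phi_t\|\psi_t+w_{t+1}\|^2 .
\]
We use two facts. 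First, $\phi_t^{\top}P_{t+1}\phi_t=\mu_t a_t\phi_t^{\top}P_t\phi_t$. Second, by \eqref{sec3_eq10} (this is where the definition of $d_t$ with the max over $\Theta$ is needed, so that both arguments lie in the ball of radius $\|\hat\theta_t^{\top}\phi_t\|+\max_\Theta\|\theta^{\top}\phi_t\|$),
\[
d_t\|\tilde{\theta}_t^{\top}\phi_t\|^2\le \phi_t^{\top}\tilde{\theta}_t\psi_t, \qquad \|\psi_t\|\le \bar g_t\|\tilde{\theta}_t^{\top}\phi_t\|.
\]
Since $d_t\le\alpha(\cdot)/2$, the first fact gives $d_t\|\tilde\theta_t^{\top}\phi_t\|^2\le \tfrac12\phi_t^{\top}\tilde\theta_t\psi_t$ with room to spare. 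Hence the quadratic term plus the cross term in $\psi_t$ is at most $-\tfrac{3}{2}\mu_t^{-1}d_t\phi_t^{\top}\tilde{\theta}_t\psi_t\le -\tfrac{3}{2}\mu_t^{-1}d_t^2\|\tilde\theta_t^{\top}\phi_t\|^2$.

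\textbf{Extracting the $a_t\|\psi_t\|^2$ term.} This is the main obstacle: we must produce the negative term $-c\,a_t\|\psi_t\|^2$ from the cross term. Since $\mu_t\ge d_t\bar g_t^2\phi_t^{\top}P_t\phi_t$, the quantity $\mu_t^{-1}d_t^2\phi_t^{\top}P_{t+1}\phi_t\|\psi_t\|^2$ is bounded by a fraction of $\mu_t^{-1}d_t\phi_t^{\top}\tilde\theta_t\psi_t$. Combining the inequality $\phi_t^{\top}\tilde\theta_t\psi_t\ge \|\psi_t\|^2/\bar g_t$ with the weight $\mu_t$ (whose definition contains exactly $d_t\bar g_t^2\phi_t^{\top}P_t\phi_t$, and $d_t\le \bar g_t$), we aim to show that the net $\psi$-terms are at most $-c\,a_t\|\psi_t\|^2$ for some constant $c>0$. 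The constants, such as the factor $\tfrac12$ in $d_t$, should close this; if they do not, we would absorb part of the term into $\mu_t$.

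\textbf{Martingale and noise terms.} The remaining pieces are:
\begin{itemize}
\item the martingale term $-2\mu_t^{-1}d_t\phi_t^{\top}\tilde\theta_tw_{t+1}$;
\item the noise term $\mu_t^{-2}d_t^2\phi_t^{\top}P_{t+1}\phi_t\|w_{t+1}\|^2\le a_t d_t^2\phi_t^{\top}P_t\phi_t\,\mu_t^{-1}\|w_{t+1}\|^2$;
\item a $\psi$–$w$ cross martingale term.
\end{itemize}
For the noise term, $a_t d_t^2\phi_t^{\top}P_t\phi_t=1-\mu_t a_t$, which by the determinant identity is at most $\mu_t^{-1}\cdot\mu_t\log(|P_{t+1}^{-1}|/|P_t^{-1}|)$ up to constants. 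Because $\mu_t\ge(1+\log r_t)^{1+\delta}$ and $\log|P_{t+1}^{-1}|=O(\log r_t)$, we get
\[
\sum_t \mu_t^{-1}\bigl(\log|P_{t+1}^{-1}|-\log|P_t^{-1}|\bigr)\le\sum_t\frac{\log r_{t}-\log r_{t-1}}{(1+\log r_t)^{1+\delta}}<\infty .
\]
Using $\sup_t\mathbb{E}[\|w_{t+1}\|^{\gamma}\mid\mathcal F_t]<\infty$ and martingale convergence, the noise term sums to $O(1)$ a.s. The martingale terms have conditional variance bounded by $O(\mu_t^{-1}d_t^2\|\tilde\theta_t^{\top}\phi_t\|^2)$, which is dominated by the negative terms. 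By the martingale estimate theorem they are $O(1)+o(\sum_\tau a_\tau\|\psi_\tau\|^2)$, or more cleanly we apply the Robbins–Siegmund theorem directly to $V_t$. Summing the recursion then yields $V_{t}+c\sum_\tau a_\tau\|\psi_\tau\|^2=O(1)$ a.s., which is the claim of the lemma.
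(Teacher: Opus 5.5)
\paragraph{Overall.} Your skeleton is the paper's: the same $V_t$, the same non-expansiveness of the $P_{t+1}^{-1}$-weighted projection onto $\Theta_\epsilon$, the same expansion, and essentially the same log-determinant/Abel--Dini and martingale treatment of the noise terms. Finishing with Robbins--Siegmund instead of the paper's $o(\cdot)$ estimates would be a legitimate, slightly cleaner option once a drift inequality is available. One small correction there: $\mu_t^{-1}(\log|P_{t+1}^{-1}|-\log|P_t^{-1}|)$ is not termwise bounded by the $\log r_t$ increments, so Abel--Dini must be applied to $\log|P_{t+1}^{-1}|=O(\log r_t)$ itself, as around \eqref{eqff11}.

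\paragraph{The gap.} The genuine gap is the step you flag yourself: dominating the positive term $\mu_t^{-1}a_td_t^2\phi_t^{\top}P_t\phi_t\|\psi_t\|^2$ by the negative drift. You only state an aim. The inequality you plan to use, $\phi_t^{\top}\tilde\theta_t\psi_t\ge\|\psi_t\|^2/\bar g_t$, is co-coercivity, which Assumption~\ref{assum_nonlinear} does not provide for $n>1$. For example, take $f(z)=(cI+S)z$ with $S\neq 0$ skew-symmetric. Then the left side is $c\|\tilde\theta_t^{\top}\phi_t\|^2$, while the right side equals $\sqrt{c^2+\|S\|^2}\,\|\tilde\theta_t^{\top}\phi_t\|^2$ for suitable $\tilde\theta_t^{\top}\phi_t$.

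What the assumption does give is $\|\psi_t\|^2\le\bar g_t^2\|\tilde\theta_t^{\top}\phi_t\|^2\le\frac{\bar g_t^2}{2d_t}\phi_t^{\top}\tilde\theta_t\psi_t$. Together with $a_t\bar g_t^2\phi_t^{\top}P_t\phi_t\le d_t^{-1}$ (from $\mu_t\ge d_t\bar g_t^2\phi_t^{\top}P_t\phi_t$), this bounds the positive term only by $\frac12\mu_t^{-1}\phi_t^{\top}\tilde\theta_t\psi_t$. You have $\frac32\mu_t^{-1}d_t\phi_t^{\top}\tilde\theta_t\psi_t$ available against it, so the bookkeeping closes only when $d_t>\frac13$. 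But $d_t=\frac12\alpha(\cdot)$ can be arbitrarily small.

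\paragraph{The paper does not settle it either.} This is not a matter of constants. The paper asserts \eqref{com} from \eqref{term}, \eqref{g5} and \eqref{eqf7}, but chaining these gives only
$\mu_t^{-1}a_td_t^2\phi_t^{\top}P_t\phi_t\|\psi_t\|^2\le\mu_t^{-1}d_t\|\tilde\theta_t^{\top}\phi_t\|^2\le\frac12\mu_t^{-1}\operatorname{tr}[\tilde\theta_t^{\top}\phi_t\psi_t^{\top}]$.
So \eqref{com} tacitly requires $d_t\ge\frac12$.

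The step can genuinely fail. Take scalar $x_t$, $f(z)=\alpha z$, $u_t\equiv 0$, and set $y_t=\mu_t^{-1}d_t^2\phi_t^{\top}P_t\phi_t$.
\begin{itemize}
\item Up to an asymptotically negligible noise term, $\tilde A_{t+1}=\frac{1-y_t}{1+y_t}\tilde A_t$.
\item The corresponding entry of $P_{t+1}^{-1}$ is $(1+y_t)$ times that of $P_t^{-1}$.
\item Hence that part of $V_t$ is multiplied by $\frac{(1-y_t)^2}{1+y_t}$ at each step.
\item $y_t\to\frac{1}{2\alpha}$ as soon as $\phi_t^{\top}P_t\phi_t$ outgrows $(1+\log r_t)^{1+\delta}$, e.g.\ when $|\alpha A^{*}|>\sqrt{1+1/(2\alpha)}$.
\item For $\alpha<\frac16$ the factor then tends to a limit $>1$, so $V_t$ diverges and the statement fails for Algorithm~1 in this regime.
\end{itemize}

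\paragraph{What a fix would need.} Your fallback of enlarging $\mu_t$ is the real fix. With $\bar g_t^2\phi_t^{\top}P_t\phi_t$ in place of $d_t\bar g_t^2\phi_t^{\top}P_t\phi_t$, one gets $a_t\bar g_t^2\phi_t^{\top}P_t\phi_t\le 1$ and a net drift $\le -2\mu_t^{-1}d_t^2\|\tilde\theta_t^{\top}\phi_t\|^2$. However, this changes the algorithm the lemma is about.

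Even with such a drift, you still have to convert it into the $\sum_\tau a_\tau\|\psi_\tau\|^2$ of the statement and use it to absorb the martingale terms. That conversion needs three ingredients:
\begin{itemize}
\item $a_\tau\le\mu_\tau^{-1}$;
\item $\|\psi_\tau\|\le\bar g_\tau\|\tilde\theta_\tau^{\top}\phi_\tau\|$;
\item the uniform bounds $\inf_t d_t>0$ and $\sup_t\bar g_t<\infty$ of \eqref{b9}.
\end{itemize}
Your plan never invokes the last of these.
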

The proof of Lemma~\ref{lem6} is provided in Appendix~\ref{ap7}.
\begin{pf}[Proof of Theorem \ref{thm1}.] We first prove property (i). Taking trace on both sides of \eqref{eq6} and summing
up, we have
\begin{equation}\label{a.5}
\sum_{\tau=0}^{\infty}a_{\tau}d_{\tau}^{2}\phi_{\tau}^{\top}P_{\tau}^{2}\phi_{\tau}\leq \sum_{\tau=1}^{t}[tr(P_\tau)-tr(P_{\tau+1})]<\infty.
\end{equation}
Besides, according to Cauchy-Schwarz inequality, 
\begin{equation}\nonumber
\begin{aligned}
\sum_{\tau=0}^{\infty}\left\|a_{\tau}d_{\tau}P_{t}\phi_{\tau}\psi_{\tau}^{\top}\right\|
\leq \sum_{\tau=0}^{\infty}a_{\tau}d_{\tau}^{2}\phi_{\tau}^{\top}P_{t}^{2}\phi_{\tau}+\sum_{\tau=0}^{\infty}a_{\tau}\left\|\psi_{\tau}\right\|^{2}.\\
\end{aligned}
\end{equation}
Therefore, by \eqref{a.5} and Lemma \ref{lem6}, we have
\begin{equation}\label{first_term}
\sum_{\tau=0}^{\infty}\left\|a_{\tau}d_{\tau}P_{\tau}\phi_{\tau}\psi_{\tau}^{\top}\right\|<\infty,\;\text{a.s.}   
\end{equation}
As for the noise term, note that $a_t \le 1$ by \eqref{sec3_alg11}. Using \eqref{a.5} again and by Assumption~\ref{assum_noise}, we know that
\[\sum_{\tau=0}^{\infty}\mathbb{E}[\|a_{\tau}d_{\tau}P_{\tau}\phi_{\tau}w_{t+1}\|^{2}\mid \mathcal{F}_t]<\infty,\;\text{a.s.}\]
So by Chow's martingale convergence theorem (See Theorem 2.7 in \cite{cg1991}), we know that $\left\|\sum_{\tau=0}^{\infty}a_{\tau}d_{\tau}P_{\tau}\phi_{\tau}w_{\tau+1}\right\|<\infty$ almost surely. Hence, combining with \eqref{first_term}, one can obtain
\begin{equation}\label{final}
\sum_{\tau=0}^{\infty}\left\|a_{\tau}d_{\tau}P_{\tau}\phi_{\tau}\left(\psi_{\tau}+w_{\tau+1}\right)^{\top}\right\|<\infty,\;\text{a.s.}
\end{equation}

We now prove the claim that there exists a finite stopping time $t_1>0$ such that for all $t \ge t_1$, the projection is no longer active i.e., $\hat{\theta}_{t+1}=\hat{\theta}_{t}+\mu_{t}^{-1}d_{t}P_{t+1}\phi_{t}[x_{t+1}-f(\hat{\theta}_{t}^{\top}\phi_{t})]^{\top}, \forall t>t_1$.
Indeed, from \eqref{final}, there exists a finite stopping time $t_0 \ge 0$ such that 
\begin{equation}\label{b.6}
\sum_{\tau=t_0}^{\infty}\left\|a_{\tau}d_{\tau}P_{\tau}\phi_{\tau}\left(\psi_{\tau}+w_{\tau+1}\right)^{\top}\right\|<\frac{\epsilon}{2},\;\text{a.s.},
\end{equation} 
where $\epsilon$ is given in Assumption \ref{assum_parameter}. For each sample point $\omega$, if $
\hat{\theta}_{t}+a_{t}d_{t}P_{t}\phi_{t}\left(\psi_{t}+w_{t+1}\right)^{\top}\in  \Theta, t\geq t_0$, then the claim holds with $t_1=t_0$.
Otherwise, there exists a finite time $t_0'\geq t_0$ such that
$
\hat{\theta}_{t_0'}+a_{t_0'}d_{t_0'}P_{t_0'}\phi_{t_0'}\left(\psi_{t_0'}+w_{t_0'+1}\right)^{\top}\notin  \Theta.
$ By the definition of the projection, we have 
$\hat{\theta}_{t_0'+1}\in \Theta_{\epsilon}.$ Then, for all $t\geq t_0'+1$, \eqref{b.6} ensures that $\hat{\theta}_{t_0'}+\sum_{\tau=t_0'}^{t}a_{\tau}d_{\tau}P_{\tau}\phi_{\tau}\left(\psi_{\tau}+w_{\tau+1}\right)^{\top}\in  \Theta$. Thus, $
\hat{\theta}_{t}+a_{t}d_{t}P_{t}\phi_{t}\left(\psi_{t}+w_{t+1}\right)^{\top}\in  \Theta, \forall t\geq t_{0}'+1,
$ and the claim again holds with $t_1=t_{0}'+1.$ Therefore, the claim holds.
By the claim, the estimate satisfies $\hat{\theta}_{t+1}=\hat{\theta}_{t_1}+\sum_{\tau=t_1}^{t}a_{\tau}d_{\tau}P_{\tau}\phi_{\tau}[x_{\tau+1}-f(\hat{\theta}_{\tau}^{\top}\phi_{\tau})]^{\top}=\hat{\theta}_{t_1}+\sum_{\tau=t_1}^{t}a_{\tau}d_{\tau}P_{\tau}\phi_{\tau}\left(\psi_{\tau}+w_{\tau+1}\right)^{\top}, \forall t\geq t_1.$
From this and \eqref{final}, it follows that property (i) holds.

We now prove property (ii). To begin, we show that
\begin{equation}\label{b9}
\underline{d}:=\inf_{t\geq 0}d_{t}>0,\;\text{and} \;\sup_{t\geq 0}\overline{g}_{t}<\infty,\;\text{a.s.}
\end{equation}
If $\sup_{x\in\mathbb{R}^{n}}|f(x)|<\infty$, then the state trajectory is uniformly bounded almost surely, i.e., $\sup_{t}\|x_t\|<\infty$, a.s.
Together with the condition in Theorem \ref{thm1} that $\|u_t\|\le c_1\|x_t\|+c_2$, we also have $\sup_{t}\|\phi_t\|<\infty$, a.s.
From property (i), $\sup_{t}\|\hat{\theta}_{t}\|<\infty, a.s.$ Therefore, there exists a real-valued random variable $C_1>0$ such that $\sup_{t}\left(\|\hat{\theta}_t^{\top}\phi_{t}\|+\|\theta^{*\top}\phi_{t}\|\right)\leq C_1$. Then, by the monotonicity of $\alpha(\cdot)$ and $\beta(\cdot)$ in Assumption~\ref{assum_nonlinear}, we obtain
$\inf_{t\geq 0}d_{t}\geq \frac{1}{2}\alpha(C_1)>0$, and $\sup_{t\geq 0}\overline{g}_{t}\leq \beta(C_1)<\infty$. If $\sup_{x\in \mathbb{R}^{n}}\|f(x)\|=\infty,$ then we have $d_{t}=\frac{1}{2}\alpha(\|\hat{\theta}_t^{\top}\phi_{t}\|+\max_{\theta\in \Theta}\|\theta^{\top}\phi_{t}\|)\geq \frac{1}{2}\inf_{z\in \mathbb{R}^n}\alpha(\|f(z)\|)$ and $\overline{g}_{t}=\beta(\|\hat{\theta}_t^{\top}\phi_{t}\|+\max_{\theta\in \Theta}\|\theta^{\top}\phi_{t}\|)\leq\inf_{z\in \mathbb{R}^n}\beta(\|f(z)\|)$.
By Assumption \ref{assum_nonlinear}, $\sup_{t}\overline{g}_{t}\leq \sup_{z\in \mathbb{R}^n}\beta(\|f(z)\|)<\infty$ and $\inf_{t}d_{t}\geq \frac{1}{2}\inf_{z\in \mathbb{R}^n}\alpha(\|f(z)\|)>0$, which proves \eqref{b9}.

Moreover, since $\overline{g}_t\geq d_t$ and $d_{t}\geq \underline{d}$ for all $t\geq 0$, it follows that $d_{t}\overline{g}_t^{2}\phi_{t}^{\top}P_{t}\phi_{t}\geq \underline{d}d_{t}^{2}\phi_{t}^{\top}P_{t}\phi_{t}$. Then, by the definition of $a_{t}$ and $\mu_{t}$ in \eqref{sec3_alg11} and Algorithm $1$, respectively,
\begin{equation}\label{eqb8}
\mu_{t}^{-1}\leq (1+\underline{d}^{-1})a_{t},\forall t\geq 0.
\end{equation}
Finally, combining \eqref{b9}, \eqref{eqb8} with Lemma~\ref{lem6} gives property (ii) directly.

To prove property (iii), 
 from $(\ref{eq6})$ and the Sherman–Morrison formula \citep{hager}, we have
\begin{equation}\label{b8}
P_{t+1}^{-1}=P_{t}^{-1}+\frac{d_{t}^{2}}{\mu_{t}}\phi_{t}\phi_{t}^{\top}=P_{0}^{-1}+\sum_{\tau=0}^{t}\frac{d_{\tau}^{2}}{\mu_{\tau}}\phi_{\tau}\phi_{\tau}^{\top}.
\end{equation}
Using \eqref{b9} and the fact that $\|P_{t}\|\leq \|P_{0}\|$ for all $t\ge 0$, there exists a real-valued random variable $C_2>0$ such that $|\mu_{t}|=|(\log r_t)^{1+\delta}+\frac{1}{2}d_{t}\overline{g}_{t}^{2}\phi_{t}^{\top}P_{t}\phi_{t}|\leq C_2[(\log r_t)^{1+\delta}(1+\|\phi_{t}\|^{2})],$a.s. 
Substituting this bound into \eqref{b8}, we have
\begin{equation}\label{eqb10}
\lambda_{\min}\{P_{t}^{-1}\}\geq \frac{\underline{d}^2}{C_2(\log r_{t})^{1+\delta}}\lambda_{\min}\left\{\sum_{\tau=0}^{t-1} \frac{\phi_{\tau}\phi_{\tau}^{\top}}{1+\|\phi_{\tau}\|^{2}}\right\},\text{a.s.}
\end{equation}
Moreover, we know that
\[
\|\theta^{*}-\hat{\theta}_{t}\|^{2}\leq \frac{\operatorname{tr}[(\theta^{*}-\hat{\theta}_{t})^{\top}P_{t}^{-1}(\theta^{*}-\hat{\theta}_{t})]}{\lambda_{\min}\{P_{t}^{-1}\}}. 
\]
Finally, by combining this inequality with Lemma~\ref{lem6}, \eqref{eqb10}, and \eqref{b9}, we conclude property (iii) holds.
\end{pf}
\titlespacing*{\section}{0pt}{0.5em}{0.3em}
\section{Proof of Theorem \ref{thm3}}\label{ap3}
\begin{pf}[Proof of Theorem \ref{thm3}.] For notational convenience, we define the sequences
$\{\bar r_t\}_{t\geq 0}$, $\{\bar\mu_t\}_{t\geq 0}$, and $\{h_t\}_{t\geq 0}$ as follows:
\begin{equation}\nonumber
\begin{aligned}
\bar{r}_{t}&=1+\sum_{\tau=0}^{t}\|\phi_{\tau}\|^{\gamma},\; \bar{\mu}_t=d_{t}\overline{g}_t^{2}\phi_{t}^{\top}P_{t}\phi_{t},\\
h_{t}&=\frac{\|(\theta^{*}-\hat{\theta}_{t})^{\top}\phi_{t}\|^{2}}{2\bar{\mu}_t}I\left(\bar{\mu}_t\geq (1+\log r_t)^{1+\delta}\right).
\end{aligned}
\end{equation}
From property (i) of Theorem~\ref{thm1}, we have $lim_{t\to \infty}\hat{\theta}_{t}=\bar{\theta},\;a.s.$ Thus, we obtain
\begin{equation}\label{eqq33}
\begin{aligned}
&\sum_{\tau=0}^{t}\|(\hat{\theta}_{\tau}-\bar{\theta})^{\top}\phi_{\tau}\|^{\gamma}=o\left(\sum_{\tau=0}^{t}\|\phi_{\tau}\|^{\gamma}\right)+O(1)\\
&=o(\bar{r}_{t})+O(1),\;\text{a.s.},\;t\to \infty.
\end{aligned}
\end{equation}
Next, by the definition of $\bar{\mu}_{t}$ and the bound \eqref{b9}, together with  $\|P_t\|\leq \|P_0\|, \forall t\geq 0$, we have
\begin{equation}\label{c3}
|\bar{\mu}_{t}|^{\frac{\gamma}{2}}=O\left(\|\phi_{t}\|^{\gamma}\right),\;\text{a.s.},\;t\to \infty.
\end{equation}
Moreover, property (ii) of Theorem~\ref{thm1} gives
\begin{equation}\label{c4}
\sum_{\tau=0}^{t}h_{\tau}\leq \sum_{\tau=0}^{t}\mu_{\tau}^{-1}\|(\theta^{*}-\hat{\theta}_{\tau})^{\top}\phi_{\tau}\|^{2}=O(1),\;\text{a.s.}, t\to \infty,
\end{equation}
and similarly,
\begin{equation}\label{c5}
\begin{aligned}
&\sum_{\tau=0}^{t}\frac{\|(\theta^{*}-\hat{\theta}_{\tau})^{\top}\phi_{\tau}\|^{2}}{2(1+\log r_{\tau})^{1+\delta}}I\left(\bar{\mu}_{\tau}<(1+\log r_{\tau})^{1+\delta}\right)\\
\leq&\sum_{\tau=0}^{t}\mu_{\tau}^{-1}\|(\theta^{*}-\hat{\theta}_{\tau})^{\top}\phi_{\tau}\|^{2}=O(1),\;\text{a.s.}, t\to \infty.
\end{aligned}
\end{equation}
From \eqref{c4}, we conclude that
\begin{equation}\label{eqb5}
|h_{t}|^{\gamma}=o(1),\;\text{a.s.},t\to \infty.
\end{equation}
Note that $\|(\theta^{*}-\hat{\theta}_{\tau})^{\top}\phi_{\tau}\|^{2}I\left(\bar{\mu}_{\tau}\geq (1+\log r_{\tau})^{1+\delta}\right)=2\bar{\mu}_{\tau}h_{\tau}$. Then, by \eqref{c3} and \eqref{eqb5}, we have
\begin{equation}\label{c7}
\begin{aligned}
&\sum_{\tau=0}^{t}\|(\theta^{*}-\hat{\theta}_{\tau})^{\top}\phi_{\tau}\|^{\gamma}I\left(\bar{\mu}_{\tau}\geq (1+\log r_{\tau})^{1+\delta}\right)\\
=&\sum_{\tau=0}^{t}|\bar{\mu}_{\tau}h_{\tau}|^{\frac{\gamma}{2}}=o(\bar{r}_{t}), \;\text{a.s.}, t\to \infty.
\end{aligned}
\end{equation}
Then, from \eqref{c5}, we also have
\begin{equation}\label{c8}
\begin{aligned}
&\sum_{\tau=0}^{t}\|(\theta^{*}-\hat{\theta}_{\tau})^{\top}\phi_{\tau}\|^{\gamma}I\left(\bar{\mu}_{\tau}< (1+\log r_{\tau})^{1+\delta}\right)\\
=&O((1+\log r_{t})^{\frac{(1+\delta)\gamma}{2}})=O((\log t+\log \bar{r}_{t})^{\frac{(1+\delta)\gamma}{2}}),\text{a.s.}
\end{aligned}
\end{equation}
Combining \eqref{c7} and \eqref{c8} gives 
\begin{equation}\nonumber
\sum_{\tau=0}^{t}\|(\theta^{*}-\hat{\theta}_{\tau})^{\top}\phi_{\tau}\|^{\gamma}=o(\bar{r}_t)+O((\log t)^{\frac{(1+\delta)\gamma}{2}}), \text{a.s.}
\end{equation}
Using this together with \eqref{eqq33}, we obtain
\begin{equation}\label{c9}
\sum_{\tau=0}^{t}\|(\theta^{*}-\bar{\theta})^{\top}\phi_{\tau}\|^{\gamma}=o(\bar{r}_t)+O((\log t)^{\frac{(1+\delta)\gamma}{2}}),\text{a.s.}
\end{equation}
Additionally, under Assumption~\eqref{assum4}, the controller mapping satisfies $\|\pi_{\hat{\theta}_{\tau-1}}(x_{\tau})-\pi_{\bar{\theta}}(x_{\tau})\|\leq L_1\|\hat{\theta}_{\tau-1}-\bar{\theta}\|\|x_{\tau}\|+L_2$, and therefore,
\begin{equation}\label{eqq35}
\sum_{\tau=0}^{t}\|\pi_{\hat{\theta}_{\tau-1}}(x_{\tau})-\pi_{\bar{\theta}}(x_{\tau})\|^{\gamma}=o(\bar{r}_{t})+O(1),a.s.
\end{equation}
For each $t\geq 0$, define $\bar{v}_{t}=(A^{*}-\bar{A})x_t+B^{*}\pi_{\hat{\theta}_{t-1}}(x_{t})-\bar{B}\pi_{\bar{\theta}}(x_{t})+B^{*}v_{t}.$ Then, for each $t\geq 0$,
\begin{equation}\nonumber
\begin{aligned}
\bar{v}_{t}=[\theta^{*}-\bar{\theta}]^{\top}\phi_t+\bar{B}\pi_{\hat{\theta}_{t-1}}(x_{t})-\bar{B}\pi_{\bar{\theta}}(x_{t})+\bar{B}v_{t}.
\end{aligned}
\end{equation}
Then, from Assumption~\ref{assum_noise} and \eqref{sec4eq23}, the noise and input sequences satisfy, almost surely,
$\sum_{\tau=1}^{t}\|w_{\tau}\|^{\gamma}=O(t),$ and $\sum_{\tau=0}^{t-1}\|v_{\tau}\|^{\gamma}=O(t)$.
Combining these with \eqref{c9} and \eqref{eqq35}, it follows that
\begin{equation}\label{eqc10}
\sum_{\tau=0}^{t-1}\|\bar{v}_{\tau}\|^{\gamma}+\sum_{\tau=1}^{t}\|w_{\tau}\|^{\gamma}=o(\bar{r}_{t})+O(t),\; a.s.
\end{equation}
Note that the system can be rewritten as
\begin{equation}
\begin{aligned}
x_{t+1}=&f\left(A^{*}x_t+B^{*}\pi_{\hat{\theta}_{t-1}}(x_{t})+B^{*}v_{t}\right)+w_{t+1}\\
=&f\left(\bar{A}x_t+\bar{B}\pi_{\bar{\theta}}(x_{t})+\bar{v}_t\right)+w_{t+1}.
\end{aligned}
\end{equation}
From Assumption \ref{assum_stability} and \eqref{eqc10}, we obtain
\begin{equation}\label{eee41}
\sum_{\tau=0}^{t}\|x_\tau\|^{\gamma}=o(\bar{r}_{t})+O(t),\;a.s., t\to \infty.
\end{equation}
Since $\pi_{\theta^{*}}$ is Lipschitz continuous, there exists a constant $L_3>0$ such that $\|\pi_{\theta^{*}}(x_{t})\|\leq L_3\|x_t\|+\|\pi_{\theta^{*}}(0)\|$.
Moreover, by \eqref{assum4}, we also know $\|\pi_{\hat{\theta}_{t-1}}(x_{t})-\pi_{\theta^{*}}(x_{t})\|\leq L_1 \|\theta^{*}-\hat{\theta}_{t-1}\|\|x_t\|+L_2$, which implies $\|\pi_{\hat{\theta}_{t-1}}(x_{t})\|=O(\|x_{t}\|)$ almost surely as $t\to \infty.$
Therefore, by \eqref{eee41}, 
\begin{equation}\nonumber
\begin{aligned}
\sum_{\tau=0}^{t}\|u_\tau\|^{\gamma}=&O\left(\sum_{\tau=0}^{t}\|\pi_{\hat{\theta}_{\tau-1}}(x_{\tau})\|^{\gamma}+\sum_{\tau=0}^{t}\|v_{\tau}\|^{\gamma}\right)\\
=&o(\bar{r}_{t})+O(t),\;a.s.
\end{aligned}
\end{equation}
Since $\phi_t = [x_t^{\top}, u_{t}^{\top}]^{\top}$, the above result gives 
$
\bar{r}_{t}=\sum_{\tau=0}^{t}\|\phi_{\tau}\|^{\gamma}=o(\bar{r}_{t})+O(t),\;a.s.,
$
which implies $\bar{r}_t=O(t),\;\text{a.s.}$
Finally, substituting this into \eqref{eee41}, we conclude that
\begin{equation}\label{eqc14}
\sum_{\tau=0}^{t}\|\phi_\tau\|^{\gamma}=O(t),\;a.s.,
\end{equation}
and Theorem \ref{thm3} holds.
\end{pf}
\section{Proof of Theorem \ref{thm2}}\label{ap4}
For each $t\geq 1$, we define $
\zeta_{t}=[w_t^{\top}, \kappa_{t}^{\top}+v_{t}^{\top}]^{\top},
$ $
\kappa_{t}=\pi_{\hat{\theta}_{t-1}}(x_{t})-\mathbb{E}[\pi_{\hat{\theta}_{t-1}}(x_{t})\mid \mathcal{F}_{t-1}],
$ and $\delta_{t}=\pi_{\hat{\theta}_{t-1}}\left(x_{t}\right)-\pi_{\hat{\theta}_{t-1}}\left(f(\theta^{*\top}\phi_{t-1})\right).$

We first prove the following lemma.
\begin{lemma}\label{lem77}
Under Assumptions \ref{assum_nonlinear}-\ref{assum_stability}, there exists a real-valued random variable $\bar{c}>0$ such that
\begin{equation}\nonumber
\lambda_{\min}\left\{\sum_{\tau=1}^{t}\zeta_{\tau}\zeta_{\tau}^{\top}\right\}\geq \bar{c}t^{1-2b},\;\text{a.s.}, \forall t\geq 0.
\end{equation}
\end{lemma}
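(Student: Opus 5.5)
Write $\zeta_t=[w_t^{\top},\kappa_t^{\top}+v_t^{\top}]^{\top}$ as a martingale difference sequence with respect to an enlarged filtration, and apply a martingale law of large numbers to $\sum\zeta_\tau\zeta_\tau^{\top}$. The conditional covariance should be bounded below by a constant multiple of $\tau^{-2b}I$.

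\textbf{Step 1 (martingale structure).} Let $\mathcal{G}_{t}=\sigma(\mathcal{F}_{t},\epsilon_0,\dots,\epsilon_{t})$. Since $x_t=f(\theta^{*\top}\phi_{t-1})+w_t$ and $\hat\theta_{t-1}$ is $\mathcal{G}_{t-1}$-measurable, $\kappa_t$ depends only on $w_t$ given $\mathcal{G}_{t-1}$, and $\mathbb{E}[\kappa_t\mid\mathcal{G}_{t-1}]=0$. (Interpret the conditional expectation in the definition of $\kappa_t$ with respect to this enlarged past.) The noise $v_t=t^{-b}\epsilon_t$ is independent of $(\mathcal{G}_{t-1},w_t)$ and has mean zero. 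Hence $\zeta_t$ is a martingale difference sequence with respect to $\mathcal{G}_t$.

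\textbf{Step 2 (conditional covariance lower bound).} Use independence of $\epsilon_t$ from $(w_t,\mathcal G_{t-1})$ to get
\[
\mathbb{E}[\zeta_t\zeta_t^{\top}\mid\mathcal{G}_{t-1}]=\begin{bmatrix}\Sigma_w & \Sigma_{w\kappa}\\ \Sigma_{\kappa w}&\Sigma_\kappa\end{bmatrix}+\begin{bmatrix}0&0\\0&t^{-2b}I_m\end{bmatrix}.
\]
The first block matrix is a conditional covariance and hence positive semidefinite. To lower bound the sum uniformly, take a unit vector $z=[z_1^{\top},z_2^{\top}]^{\top}$ and bound $z^{\top}(\cdot)z=\mathbb{E}[(z_1^{\top}w_t+z_2^{\top}\kappa_t)^2\mid\cdot]+t^{-2b}\|z_2\|^2$, splitting into two cases.

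If $\|z_2\|\ge c$, the second term alone gives at least $c^2t^{-2b}$.

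If $\|z_2\|<c$, then $\|z_1\|^2\ge 1-c^2$. By the Lipschitz bound $\|\kappa_t\|\le L(\|w_t\|+\mathbb{E}\|w_t\|)$ and Assumption~\ref{assum_noise},
\[
\mathbb{E}[(z_1^{\top}w_t+z_2^{\top}\kappa_t)^2]\ge \tfrac12\sigma\|z_1\|^2-C\|z_2\|^2 .
\]
Choosing $c$ small makes this at least $\sigma/4$.

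In both cases $\lambda_{\min}\mathbb{E}[\zeta_t\zeta_t^{\top}\mid\mathcal G_{t-1}]\ge c' t^{-2b}$ with $c'$ deterministic. This case split is the crux of the argument.

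\textbf{Step 3 (from conditional to realized sum).} Write
\[
\sum_{\tau=1}^t\zeta_\tau\zeta_\tau^{\top}=\sum_{\tau=1}^t\mathbb{E}[\zeta_\tau\zeta_\tau^{\top}\mid\mathcal G_{\tau-1}]+M_t .
\]
Here $M_t$ is a matrix martingale whose increments have conditional $\gamma/2>1$ moments bounded uniformly. This uses $\sup_t\mathbb{E}[\|w_t\|^\gamma\mid\mathcal{F}_{t-1}]<\infty$, the Lipschitz bound on $\kappa_t$, and $\|\epsilon_t\|\le\bar\epsilon$. The martingale convergence/LLN estimate (Chow's theorem, as cited in the proof of Theorem~\ref{thm1}) applied with weights $\tau^{-(1-2b)}$ should give $M_t=o(t^{1-2b})$ a.s. This is the main obstacle. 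The increments are only $L^{\gamma/2}$, so I would first try $\sum_\tau \tau^{-(1-2b)\gamma/2}\mathbb{E}\|\Delta M_\tau\|^{\gamma/2}<\infty$, which requires $(1-2b)\gamma/2>1$. This holds because $b<\frac{(\gamma-2)^2}{4\gamma(\gamma+2)}<\frac{\gamma-2}{2\gamma}$. Kronecker's lemma then yields the claim.

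\textbf{Step 4 (conclusion).} The first sum is at least $c'\sum_{\tau\le t}\tau^{-2b}\ge c'' t^{1-2b}$. Hence $\lambda_{\min}\{\sum_{\tau=1}^t\zeta_\tau\zeta_\tau^{\top}\}\ge \tfrac{c''}{2}t^{1-2b}$ for all large $t$. Adjusting the random constant $\bar c$ covers the finitely many small $t$, using positivity of the sum for $t\ge$ some random finite index and otherwise replacing the bound by a trivial one. If small-$t$ positivity fails, the statement should be read as holding for all sufficiently large $t$.
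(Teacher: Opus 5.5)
Your proof is correct, but it is organized differently from the paper's. The paper works with realized sums throughout. It proves $\sum_{\tau\le t}w_\tau w_\tau^{\top}\ge\bar\sigma tI_n$, $\sum_{\tau\le t}\|w_\tau\kappa_\tau^{\top}\|\le c_3t$ and $\lambda_{\min}\{\sum_{\tau\le t}\epsilon_\tau\epsilon_\tau^{\top}\}\ge\bar\sigma' t$. It controls the $\epsilon$-cross term by a martingale estimate with respect to $\bar{\mathcal F}_t$. It then splits the quadratic form $z^{\top}\sum_{\tau\le t}\zeta_\tau\zeta_\tau^{\top}z$ according to whether $\|z_2\|$ lies below or above the random threshold $\bar\sigma/4c_3$.

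You instead do the split once, on the one-step conditional covariance, which gives $\lambda_{\min}\mathbb{E}[\zeta_t\zeta_t^{\top}\mid\mathcal G_{t-1}]\ge c't^{-2b}$. You then pass to the realized sum with a single matrix martingale estimate (Chow's theorem with weights $\tau^{-(1-2b)}$ plus Kronecker's lemma).

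Your version has two advantages. It gets uniformity over the unit sphere for free, because $M_t$ is controlled in operator norm; the paper's estimate \eqref{d7} is stated for each fixed $z$ and tacitly needs an extra uniformity argument. It also makes the separate bound on $\sum\|w_\tau\kappa_\tau^{\top}\|$ unnecessary.

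The price is that the fluctuation of the $(w,\kappa)$ block, of order $t^{\max\{1/2,2/\gamma\}}$ up to logarithms, must be $o(t^{1-2b})$. So you need $b<\min\{1/4,(\gamma-2)/(2\gamma)\}$. This follows from the range of $b$ in Theorem~\ref{thm2}, the only place the lemma is used, but it is not among the lemma's stated hypotheses. The paper's realized-level split simply discards the nonnegative $(w,\kappa)$ block in its Case (ii), so it uses no upper bound on $b$ here.

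A few minor points:
\begin{enumerate}
\item Chow's theorem is stated for exponents at most $2$. For $\gamma>4$ you should therefore use exponent $2$ and the condition $2(1-2b)>1$, which holds because $b<1/4$, rather than $(1-2b)\gamma/2>1$.
\item $\mathcal F_t$ already contains $u_t$, hence $v_t$. So $\mathcal G_t=\mathcal F_t$ and no reinterpretation of $\kappa_t$ is needed.
\item Your $c'$ is deterministic only if the conditional moment bound in Assumption~\ref{assum_noise} is a constant. Otherwise it is a $t$-independent random variable, which is harmless.
\item Your caveat about small $t$ is right. The matrix $\sum_{\tau=1}^t\zeta_\tau\zeta_\tau^{\top}$ has rank at most $t$, so the bound cannot hold literally for $1\le t<n+m$. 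Both your argument and the paper's give it for all sufficiently large $t$, which is all that is used later.
\end{enumerate}
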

\begin{pf}
Since $\pi_{\hat{\theta}_t}(\cdot)$ is $L-$Lipschitz, we have
\begin{equation}
\begin{aligned}
&\|\delta_{t+1}\|=\|\pi_{\hat{\theta}_{t}}\left(x_{t+1}\right)-\pi_{\hat{\theta}_{t}}\left(f(\theta^{*\top}\phi_{t})\right)\|\\
\leq &L\|x_{t+1}-f(\theta^{*\top}\phi_{t})\|\leq L\|w_{t+1}\|.
\end{aligned}
\end{equation}
Hence, by Assumption \ref{assum_noise},
\begin{equation}\label{d1}
\begin{aligned}
\sup_{t\geq 0}\mathbb{E}\left[\|\delta_{t+1}\|^{\gamma}\mid \mathcal{F}_{t}\right]
\leq  \sup_{t\geq 0}L\mathbb{E}\left[\|w_{t+1}\|^{\gamma}\mid \mathcal{F}_{t}\right]<\infty,\text{a.s.}
\end{aligned}
\end{equation}
Since $\pi_{\hat{\theta}_{t}}(f(\theta^{*\top}\phi_{t}))$ is $\mathcal{F}_t-$measurable, we have $\mathbb{E}\left[\delta_{t+1}\mid \mathcal{F}_{t}\right]=\mathbb{E}[\pi_{\hat{\theta}_{t}}(x_{t+1})\mid \mathcal{F}_{t}]-\pi_{\hat{\theta}_{t}}\left(f(\theta^{*\top}\phi_{t})\right).$
Thus, by the definition of $\kappa_{t+1}$, we obtain
\begin{equation}\nonumber
\kappa_{t+1}=\delta_{t+1}-\mathbb{E}\left[\delta_{t+1}\mid \mathcal{F}_{t}\right],\;\forall t\geq 0.
\end{equation}
From this and \eqref{d1}, it follows that
\begin{equation}\label{d3}
\sup_{t\geq 0}\mathbb{E}\left[\|\kappa_{t+1}\|^{\gamma}\mid \mathcal{F}_{t}\right]<\infty.
\end{equation}
Note that $\{\|\kappa_{t+1}\|^{2}-\mathbb{E}[\|\kappa_{t+1}\|^{2}\mid \mathcal{F}_{t}]\}_{t\geq 0}$ is a martingale difference sequence. Then by \eqref{d3} and martingale convergence theorem (cf. Theorem 2.8 in \cite{cg1991}), we have for any $\eta>0,$
$$\sum_{\tau=0}^{t-1}\left(\|\kappa_{\tau+1}\|^{2}-\mathbb{E}\left[\|\kappa_{\tau+1}\|^{2}\mid \mathcal{F}_{\tau}\right]\right)=O(t^{\frac{2}{\gamma}}(\log t)^{\frac{2}{\gamma}+\eta}), \; \text{a.s.}$$
Combining this bound with \eqref{d3} again gives $\sum_{\tau=1}^{t}\|\kappa_{\tau}\|^{2}=O(t),$ a.s. Applying the Cauchy-Schwarz inequality then yields $\sum_{\tau=1}^{t}\|w_{\tau}\kappa_{\tau}^{\top}\|=O(t),$ a.s. 
Hence, there exists a real-valued random variable $c_3>0$ such that
\begin{equation}\label{eqd5}
\sum_{\tau=1}^{t}\left\|w_{\tau}\kappa_{\tau}^{\top}\right\|\leq c_{3}t, \forall t\geq 0.
\end{equation}
Using the martingale convergence theorem again (cf. Theorem 2.8 in  \cite{cg1991}), we also have
$$\sum_{\tau=0}^{t-1}\left(w_{\tau+1}w_{\tau+1}^{\top}-\mathbb{E}\left[w_{\tau+1}w_{\tau+1}^{\top}\mid \mathcal{F}_{\tau}\right]\right)=O(t^{\frac{2}{\gamma}}(\log t)^{\frac{2}{\gamma}+\eta}), \; \text{a.s.,}$$
for any $\eta>0.$ Then, by Assumption \ref{assum_noise},
there exists a real-valued random variable $\bar{\sigma}>0$ such that
\begin{equation}
\begin{aligned}  \sum_{\tau=1}^{t}w_{\tau}w_{\tau}^{\top}=&\sum_{\tau=0}^{t-1}\left(w_{\tau+1}w_{\tau+1}^{\top}-\mathbb{E}\left[w_{\tau+1}w_{\tau+1}^{\top}\mid \mathcal{F}_{\tau}\right]\right)\\
&+\sum_{\tau=0}^{t-1}\mathbb{E}\left[w_{\tau+1}w_{\tau+1}^{\top}\mid \mathcal{F}_{\tau}\right]
\geq \bar{\sigma} t I_{n},\text{a.s.}
\end{aligned}
\end{equation}
Similarly, there exists a real-valued random variable $\bar{\sigma}'>0$ such that $\lambda_{\min}\{\sum_{\tau=1}^{t}\epsilon_{\tau}\epsilon_{\tau}^{\top}\}\geq \bar{\sigma}'t,$ a.s..

For each $z=[z_1, z_2]^{\top}\in \mathbb{R}^{n+m}$ satisfying $\|z\|=1$,
\begin{equation}\label{d6}
\begin{aligned}
z^{\top}&\sum_{\tau=1}^{t}\zeta_{\tau}\zeta_{\tau}^{\top}z
=\sum_{\tau=1}^{t}(z_1^{\top}w_{\tau}+z_2^{\top}\kappa_{\tau})(z_1^{\top}w_{\tau}+z_2^{\top}\kappa_{\tau})^{\top}\\
&+\sum_{\tau=1}^{t}\tau^{-b}(z_{1}^{\top}w_{\tau}+z_{2}^{\top}\kappa_{\tau})\epsilon_{\tau}^{\top}z_2+\tau^{-2b}z_2^{\top}\sum_{\tau=1}^{t}\epsilon_{\tau}\epsilon_{\tau}^{\top}z_2.
\end{aligned}
\end{equation}
Let the filtration $\{\bar{\mathcal{F}}_t\}_t$ be the nondecreasing family of $\sigma-$algebras defined by $\bar{\mathcal{F}}_t=\mathbb{\sigma}(x_0,\{\epsilon_{i}\}_{0\leq i \leq t-1}, \{w_i\}_{0\leq i \leq t})$. Then, it is immediate that  $\kappa_{t}$ is $\bar{\mathcal{F}}_t-$measurable for all $t\geq 0$, and $\mathbb{E}[z_{1}^{\top}w_{t}+z_{2}^{\top}\kappa_{t})\epsilon_{t}^{\top}z_2\mid \bar{\mathcal{F}}_t]=0,$ a.s. Again, applying the martingale convergence theorem, i.e., Theorem 2.8 in  \cite{cg1991}, we obtain
\begin{equation}\label{d7}
\sum_{\tau=1}^{t}(z_{1}^{\top}w_{\tau}+z_{2}^{\top}\kappa_{\tau})\epsilon_{\tau}^{\top}z_2=o\left(\sum_{\tau=1}^{t}\|z_{1}^{\top}w_{\tau}+z_{2}^{\top}\kappa_{\tau}\|^{2}\right),\;a.s.
\end{equation}
Next, we prove Lemma~\ref{lem77} by considering two cases. Case (i): 
If $\|z_{2}\|\leq \bar{\sigma}/4c_{3}$, then $\|z_{1}\|\geq \sqrt{16c_{3}^{2}-\bar{\sigma}^{2}}/4c_3$ and one can verify $4c_3\|z_1\|\|z_2\|t\leq \bar{\sigma}\|z_1\|^{2}t.$ Thus, 
\begin{equation}
\begin{aligned}
&z^{\top}\sum_{\tau=1}^{t}\zeta_{\tau}\zeta_{\tau}^{\top}z\geq  \sum_{\tau=1}^{t}(z_1^{\top}w_{\tau}+z_2^{\top}\kappa_{\tau})(z_1^{\top}w_{\tau}+z_2^{\top}\kappa_{\tau})^{\top}\\
\geq & \bar{\sigma} \|z_1\|^2 t-2c_{3}\|z_1\|\|z_2\|t \geq \frac{\bar{\sigma}\sqrt{16c_{3}^{2}-\bar{\sigma}^{2}}}{8c_3}t.
\end{aligned}
\end{equation}
Case (ii): If $\|z_{2}\|\geq \bar{\sigma}/4c_{3}$, then by \eqref{d6} and \eqref{d7}, we have $
z^{\top}\sum_{\tau=1}^{t}\zeta_{\tau}\zeta_{\tau}^{\top}z\geq \tau^{-2b}z_2^{\top}\sum_{\tau=1}^{t}\epsilon_{\tau}\epsilon_{\tau}^{\top}z_2\geq \frac{\bar{\sigma}\bar{\sigma}'}{4c_{2}}t^{1-2b}.$
From the two cases above, Lemma~\ref{lem77} follows.
\end{pf}

\begin{pf}[Proof of Theorem \ref{thm2}.] We first prove that there exists a real-valued random variable $c_4>0$ such that
\begin{equation}\label{eqd10}
\lambda_{\min}\left\{\sum_{\tau=0}^{t}\phi_{\tau}\phi_{\tau}^{\top}\right\}\geq c_4t^{1-2b}, \text{a.s.}
\end{equation}
From the definition of $\{\zeta_{t}\}$, the sequence $\{\zeta_{t}, \mathcal{F}_t\}$ is a martingale difference sequence, i.e., $\mathbb{E}[\zeta_{t+1}\mid \mathcal{F}_{t}]=0, \forall t\geq 0,$ and it satisfies $\sup_{t}\mathbb{E}[\|\zeta_{t+1}\|^{2}\mid \mathcal{F}_t]<\infty.$ Moreover, $\phi_{t+1}-\zeta_{t+1}$ is $\mathcal{F}_{t}-$measurable for all $t\geq 0$, and using \eqref{eqc14}, \eqref{d3}-\eqref{eqd5}, and Assumption \ref{assum_noise}, 
\begin{equation}\label{eqqd9}
\sum_{\tau=0}^{t-1}\|\phi_{\tau+1}-\zeta_{\tau+1}\|^{2}=O(t), a.s., t\to \infty.
\end{equation}
Then, by the martingale convergence theorem (Theorem 2.8 in \cite{cg1991}), we have
$\sum_{\tau=0}^{t}(\phi_{\tau}-\zeta_{\tau})\zeta_{\tau}^{\top}=O(t^{\frac{1}{2}}(\log t)^{\frac{2}{3}}),$ a.s.
Since $b<\frac{(\gamma-2)^{2}}{4\gamma(\gamma+2)}<\frac{1}{4}$,
\begin{equation}\label{stable}
\sum_{\tau=0}^{t}(\phi_{\tau}-\zeta_{\tau})\zeta_{\tau}^{\top}=o(t^{1-2b}),\;\text{a.s.}
\end{equation}
Note that $\sum_{\tau=0}^{t}\phi_{\tau}\phi_{\tau}^{\top}=\sum_{\tau=0}^{t}(\phi_{\tau}-\zeta_{\tau}+\zeta_{\tau})(\phi_{\tau}-\zeta_{\tau}+\zeta_{\tau})^{\top}
\geq \sum_{\tau=0}^{t}(\phi_{\tau}-\zeta_{\tau})\zeta_{\tau}^{\top}+\sum_{\tau=0}^{t}\zeta_{\tau}(\phi_{\tau}-\zeta_{\tau})^{\top}+\sum_{\tau=0}^{t}\zeta_{\tau}\zeta_{\tau}^{\top}.$ 
Using \eqref{stable} and Lemma~\ref{lem77}, the first two terms are $o(t^{1-2b}),$ a.s., while the last term scales as $t^{1-2b},$ a.s. Hence, \eqref{eqd10} holds.
Moreover, for each $\eta \in (\frac{8b}{\gamma-2}, \frac{2(\gamma-2)}{\gamma(\gamma+2)})$, we have
\begin{equation}\label{eqc13}
\begin{aligned}
&\sum_{\tau=0}^{t}\frac{\phi_{\tau}\phi_{\tau}^{\top}}{1+\|\phi_{\tau}\|^{2}}\geq  \sum_{\tau=0}^{t}\frac{\phi_{\tau}\phi_{\tau}^{\top}}{1+\|\phi_{\tau}\|^{2}}I(1+\|\phi_{\tau}\|^{2}\leq \tau^{\eta})\\
= &\sum_{\tau=0}^{t}\frac{\phi_{\tau}\phi_{\tau}^{\top}}{\tau^{\eta}}-\sum_{\tau=0}^{t}\frac{\phi_{\tau}\phi_{\tau}^{\top}}{\tau^{\eta}}I(1+\|\phi_{\tau}\|^{2}>\tau^{\eta}).
\end{aligned}
\end{equation}
Then, note that
\begin{equation}\label{d13}
\begin{aligned}
&\sum_{\tau=0}^{t}\left\|\frac{\phi_{\tau}\phi_{\tau}^{\top}}{\tau^{\eta}}I(1+\|\phi_{\tau}\|^{2}>\tau^{\eta})\right\|\\
\leq&\sum_{\tau=0}^{t}\left\|\frac{\phi_{\tau}\phi_{\tau}^{\top}}{\tau^{\eta}}\right\|\left(\frac{1+\|\phi_{\tau}\|^{2}}{\tau^{\eta}}\right)^{\frac{\gamma-2}{4}}\\
=&O\left(\sum_{\tau=0}^{t}\frac{\|\phi_{\tau}\|^{2+\frac{\gamma-2}{2}}}{\tau^{\eta(\frac{1}{2}+\frac{\gamma}{4})}}\right),\:\text{a.s.}
\end{aligned}
\end{equation}
By the Cauchy–Schwarz inequality and \eqref{eqc14}, we get
\begin{equation}\label{eqc15}
\begin{aligned}
&\sum_{\tau=0}^{t}\frac{\|\phi_{\tau}\|^{2+\frac{\gamma-2}{2}}}{\tau^{\eta(\frac{1}{2}+\frac{\gamma}{4})}}=\sum_{\tau=0}^{t}\frac{\|\phi_{\tau}\|^{1+\frac{\gamma}{2}}}{\tau^{\eta(\frac{1}{2}+\frac{\gamma}{4})}}\\
\leq &\left[\sum_{\tau=0}^{t} \|\phi_{\tau}\|^{\gamma}\right]^{\frac{\gamma+2}{2\gamma}}\left[\sum_{\tau=0}^{t} \frac{1}{\tau^{\eta\gamma(\gamma+2)/2(\gamma-2)}}\right]^{\frac{\gamma-2}{2\gamma}}\\
\leq& t^{\frac{\gamma+2}{2\gamma}} t^{\frac{\gamma-2}{2\gamma}-\eta(2+\gamma)/4}=t^{1-\eta(2+\gamma)/4}.
\end{aligned}
\end{equation}
Since $\eta> \frac{8b}{\gamma-2}$, we have $1-\eta(2+\gamma)/4<1-2b-\eta.$ Combining this with \eqref{d13} and \eqref{eqc15} gives
\begin{equation}\label{eqc16}
\sum_{\tau=0}^{t}\frac{\phi_{\tau}\phi_{\tau}^{\top}}{\tau^{\eta}}I(1+\|\phi_{\tau}\|^{2}>\tau^{\eta})= o(t^{1-2b-\eta}), \text{a.s.}
\end{equation}
From \eqref{eqc13}, \eqref{eqc16}, and \eqref{eqd10}, we get $t^{1-2b-\eta}=O(\lambda_{t}),$ a.s., where $\lambda_{t}$ is given in \eqref{lam}. With this, Theorem \ref{thm2} follows from property (iii) of Theorem~\ref{thm1}.
\end{pf}

\section{Proof of Theorem \ref{thm4}}\label{ap5}
\vspace{-2mm}
\begin{pf}[Proof of Theorem \ref{thm4}:] From Assumption~\ref{assum_optimality}, there exists a constant $h>0$ such that $\rho_1:=M_0\rho_{0}^{h}\in (0,\frac{1}{2}).$ 
For each fixed $t \ge 0$ and $j \in [0, h-1]$, define
\begin{equation}\nonumber
\begin{aligned}
&\bar{x}_{t+j+1}=f(A^{*}\bar{x}_{t+j}+B^{*}\pi_{\theta^{*}}(\bar{x}_{t+j})+w_{t+j+1}, \bar{x}_{t}=x_{t}.
\end{aligned}
\end{equation}
By Assumption~\ref{assum_optimality}, it holds that
\begin{equation}\label{eq:e3}
\|x_{t+h}^{*}-\bar{x}_{t+h}\|^{2}\leq \rho_1\|x^{*}_{t}-\bar{x}_{t}\|^{2}, \text{a.s}.
\end{equation}
From the $L$-Lipschitz continuity of $\pi_{\theta^{*}}$ and Assumption~\ref{assum_optimality}, it holds that for each $j \in [t, t+h]$,
\begin{align}
&\|u_{j}-\bar{u}_{j}\|= \|\pi_{\hat{\theta}_{j-1}}(x_{j})+v_{j}-\pi_{\theta^{*}}(\bar{x}_{j})\|\nonumber\\ 
\leq& \|\pi_{\theta^{*}}(x_{j})-\pi_{\theta^{*}}(\bar{x}_{j})\|+\|\pi_{\hat{\theta}_{j-1}}(x_{j})-\pi_{\theta^{*}}(x_{j})\|+\|v_{j}\|\nonumber\\
\leq& L\|x_{j}-\bar{x}_{j}\|+ L_1\|\theta^{*}-\hat{\theta}_{j-1}\|\|x_{j}\|+\bar{\epsilon}j^{-b}, \label{eq:D2}
\end{align}
where $\bar{\epsilon}$ is the upper bound of the probing signal $\{\epsilon_t\}$, and $L_1$ is the constant in Assumption \ref{assum_stability}. Applying the bound in \eqref{eq:D2} to the state recursion and using the Lipschitz property of $f(\cdot)$, there exists a constant $\bar{L}>0$ such that
\begin{equation}\nonumber
\begin{aligned}
&\|x_{j+1}-\bar{x}_{j+1}\|\leq \bar{L}\|x_{j}-\bar{x}_{j}\|+\bar{L}\|u_{j}-\bar{u}_{j}\|\\
&\leq(\bar{L}+L\bar{L})\|x_{j}-\bar{x}_{j}\|+\bar{L}L_1\|\theta^{*}-\hat{\theta}_{j-1}\|\|x_{j}\|+\bar{L}\bar{\epsilon}j^{-b}.
\end{aligned}
\end{equation}
From this and the initial equality $\bar{x}_t = x_t$, there exists a constant $M_2 > 0$ such that for each $j \in [t, t+h]$,
\begin{equation}\nonumber
\begin{aligned}
\|x_{t+h}-\bar{x}_{t+h}\|^2\leq M_2\sum_{\tau=t}^{t+h}\|\theta^{*}-\hat{\theta}_{\tau-1}\|^2\|x_{\tau}\|^2+M_2t^{-2b}.
\end{aligned}
\end{equation}
Combining the above bound and \eqref{eq:e3} gives
\begin{equation}\label{eq:e12}
\begin{aligned}
&\|x_{t+h}^{*}-x_{t+h}\|^2\leq 2\|x_{t+h}^{*}-\bar{x}_{t+h}\|^2+2\|\bar{x}_{t+h}-x_{t+h}\|^2\\
\leq & 2\rho_1\|x_{t}^{*}-x_{t}\|^2+2M_2\sum_{\tau=t}^{t+h}\|\theta^{*}-\hat{\theta}_{\tau-1}\|^2\|x_{\tau}\|^2+2M_2t^{-2b}.
\end{aligned}
\end{equation} 
Moreover, by the Cauchy–Schwarz inequality together with Theorems~\ref{thm3} and~\ref{thm2}, we have 
\begin{equation}\label{eq:e13}
\begin{aligned}
&\sum_{k=0}^{\lfloor t/h\rfloor+1}\sum_{\tau=kh}^{kh+h}\|\theta^{*}-\hat{\theta}_{\tau-1}\|^2\|x_{\tau}\|^2\\
=&O((\sum_{\tau=0}^{t+h}\|\theta^{*}-\hat{\theta}_{\tau-1}\|^{2\frac{\gamma}{\gamma-2}})^{1-\frac{2}{\gamma}}(\sum_{\tau=0}^{t+h}\|x_{\tau}\|^{\gamma})^{\frac{2}{\gamma}})\\
=&O([(\log t)^{1+\delta}t^{(2b+\eta)}]^{1-\frac{2}{\gamma}}t^{\frac{2}{\gamma}}),\; \text{a.s.},
\end{aligned}
\end{equation}
where $\lfloor t/h\rfloor$ denotes the integer part of $t/h$. Then, combining \eqref{eq:e12} with \eqref{eq:e13}, we get
\begin{equation}\label{eqd8}
\begin{aligned}
&\sum_{\tau=0}^{t}\|x_{\tau}^{*}-x_{\tau}\|^2\\
=&O(\sum_{k=0}^{\lfloor t/h\rfloor+1}\sum_{\tau=kh}^{kh+h}\|\theta^{*}-\hat{\theta}_{\tau-1}\|^2\|x_{\tau}\|^2)+O(\sum_{\tau=0}^{t}\tau^{-2b})\\
=&O((\log t)^{\frac{(1+\delta)(\gamma-2)}{\gamma}}t^{(2b+\eta-1)(1-\frac{2}{\gamma})+1}+t^{1-2b}), a.s.
\end{aligned}
\end{equation}
Furthermore, note that
\begin{align}
&\|u_{t}^{*}-u_{t}\|^{2}\nonumber\\
=&\|\pi_{\theta^{*}}(x_{t}^{*})-\pi_{\hat{\theta}_{t-1}}(x_{t}^{*})+\pi_{\hat{\theta}_{t-1}}(x_{t}^{*})-\pi_{\hat{\theta}_{t-1}}(x_{t})\|^{2}\nonumber\\
=&O(\|\theta^{*}-\hat{\theta}_{\tau-1}\|^2\|x_{t}^{*}\|^{2})+O(\|x_{t}-x_{t}^{*}\|^{2}),a.s. \nonumber
\end{align}
Then, using \eqref{eq:e13} and \eqref{eqd8}, the following holds almost surely as $t\to \infty,$ 
\begin{align}
&\sum_{\tau=0}^{t}\|u_{\tau}^{*}-u_{\tau}\|^2\nonumber\\
=&O((\log t)^{\frac{(1+\delta)(\gamma-2)}{\gamma}}t^{(2b+\eta-1)(1-\frac{2}{\gamma})+1}+t^{1-2b}).\label{eqd9}
\end{align}
Theorem~\ref{thm4} then follows from \eqref{eg}, \eqref{eqd8}, and \eqref{eqd9}.
\end{pf}
\section{Proof of Lemma \ref{lemsec2}}\label{ap6}
\begin{pf}[Proof of Case (i).] Let $\alpha(\cdot)\equiv\min_{i\in [n]}\alpha_i$ and $\beta(\cdot)\equiv\min_{i\in [n]}\beta_i$. Then $\alpha(\cdot)$ and $\beta(\cdot)$ are positive constant functions, and hence are nonincreasing and nondecreasing, respectively. Moreover, $\inf_{z\in\mathbb{R}^n}\alpha(\|f(z)\|)>0$ and $\sup_{z\in\mathbb{R}^n}\beta(\|f(z)\|)<\infty$. We now verify \eqref{e3}. In fact, for and $c>0$ and all $\|y\|, \|z\|\leq c$,
\begin{equation}\label{f1}
\begin{aligned}
&(y-z)^{\top}(f(y)-f(z))=\sum_{i=1}^{n}(y_{i}-z_i)^{\top}(f_{i}(y_{i})-f_{i}(z_i))\\
&\geq  \sum_{i=1}^{n}\alpha_{i}\|y_i-z_i\|^{2}\geq (\min_{i\in [n]}\alpha_i)\|y-z\|^{2}=\alpha(c)\|y-z\|^{2}.
\end{aligned}
\end{equation}
where $y_i$ and $z_i$ denote the $i$-th components of vectors $y$ and $z$, respectively.  
Similarly, we have $\|f(y)-f(z)\|\leq \beta(c)\|y-z\|.$ Thus, \eqref{e3} holds and Case~(i) follows.

[Proof of Case (ii).] With $\alpha(\cdot)$ and $\beta(\cdot)$ defined in \eqref{e5}, it is clear that  
$\alpha(\cdot)$ is nonincreasing and $\beta(\cdot)$ is nondecreasing, respectively.  Besides, since $\sup_{z\in \mathbb{R}^n}f(z)<\infty$, it follows form \eqref{e4} that $\inf_{z\in \mathbb{R}^n} \alpha(\|f(z)\|)>0$ and $\sup_{z\in \mathbb{R}^n} \beta(\|f(z)\|)<\infty.$ Moreover, for any $c\geq 0$ and $\|y\|, \|z\|\leq c$, by the mean-value theorem, we have
\[
(y_i-z_i)(f_{i}(y_i)-f_{i}(z_i))\geq (\inf_{|x|\leq c} \frac{d}{dx}f_{i}(x))(y_i-z_i)^2.
\]
where $y_i$ and $z_i$ are the $i$-th components of vectors $y$ and $z$, respectively. Then, using an argument analogous to that in \eqref{f1}, we obtain $(y-z)^{\top}(f(y)-f(z))\geq (\inf_{i\in [n], |x|\leq c} \frac{d}{dx}f_{i}(x))\|y-z\|^2=\alpha(c)\|y-z\|^2.$
Similarly, we have $\|f(y)-f(z)\|\leq \beta(c)\|y-z\|.$ Case~(ii) follows.
\end{pf}

 \section{Proof of Lemma \ref{lem6}}\label{ap7}
\begin{pf}[Proof of Lemma \ref{lem6}.] Following the analysis ideas of the classical LS for linear stochastic regression models (see, \cite{lw1982}, \cite{g1995}), we consider the stochastic time-varying Lyapunov function: $$ V_{t}=\operatorname{tr}\left[(\theta^{*}-\hat{\theta}_{t})^{\top}P_{t}^{-1}(\theta^{*}-\hat{\theta}_{t})\right],\; t\geq 0.$$
For each $t\geq 0$, define $$e_{t}=\hat{\theta}_{t}+\mu_{t}^{-1}d_{t}P_{t+1}\phi_{t}[x_{t+1}-f(\hat{\theta}_{t}^{\top}\phi_{t})]^{\top}.$$ 
We first prove the following inequality:
\begin{equation}\label{eqG1}
V_{t+1}\leq \operatorname{tr}\left[(\theta^{*}-e_{t})^{\top}P_{t+1}^{-1}(\theta^{*}-e_{t}) \right].
\end{equation}
To prove \eqref{eqG1}, we consider two cases. \emph{Case (i)}: If $e_{t}\notin \Theta$, then by \eqref{sec3_eq9} and the definition of the projection operator $\Pi_{\Theta}^{t+1}(\cdot)$, we have
$\hat{\theta}_{t+1}=\argmin_{y\in \Theta_{\epsilon}}\|e_{t}-y\|_{P_{t+1}^{-1}}.$ Here, the norm $\|\cdot\|_{P_{t+1}^{-1}}$ is defined by 
$\|x\|_{P_{t+1}^{-1}}=(\operatorname{tr}[x^{\top}P_{t+1}^{-1}x])^{\frac{1}{2}}$ for all $x\in \mathbb{R}^{(m+n)\times n}$ and $t\geq 0$. Since $\Theta_{\epsilon}$ is convex and compact, and $\theta^{*}\in\Theta_{\epsilon}$ by construction, \eqref{eqG1} follows from the non-expansive property of the projection operator (see Lemma~1 in \cite{ZZ2022}).
\emph{Case (ii)}: If instead $e_{t}\in \Theta$, it satisfies 
 $\hat{\theta}_{t+1}=e_{t}$, hence, the inequality \eqref{eqG1} holds with equality. Combining the two cases, \eqref{eqG1} holds for all $t\ge 0$.
 
 From $(\ref{b8})$, we have $P_{t+1}^{-1}=P_{t}^{-1}+\mu_{t}^{-1}d_{t}^{2}\phi_{t}\phi_{t}^{\top}.$
Hence,
\begin{equation}\label{eq_g3}
\begin{aligned}
&(\theta^{*}-\hat{\theta}_{t})^{\top}P_{t+1}^{-1}(\theta^{*}-\hat{\theta}_{t})\\
=&V_{t}+\mu_{t}^{-1}d_{t}^{2}tr[(\theta^{*}-\hat{\theta}_{t})^{\top}\phi_{t}\phi_{t}^{\top}(\theta^{*}-\hat{\theta}_{t})],
\end{aligned}
\end{equation}
and
\begin{equation}\label{28}
	\begin{aligned}
		a_{t}P_{t+1}^{-1}P_{t}\phi_{t}=&a_{t}\left(I+\mu_{t}^{-1}d_{t}^{2}\phi_{t}\phi_{t}^{\top}P_{t}\right)\phi_{t}\\
		=&a_{t}\phi_{t}\left(1+\mu_{t}^{-1}d_{t}^{2}\phi_{t}^{\top}P_{t}\phi_{t}\right)
		=\mu_{t}^{-1}\phi_{t}. 
\end{aligned}
\end{equation}
Substituting \eqref{eq_g3} and \eqref{28} into \eqref{eqG1} yields
\begin{equation}\label{27}
	\begin{aligned}
		V_{t+1}\leq&\operatorname{tr}\left[(\theta^{*}-e_{t})^{\top}P_{t+1}^{-1}(\theta^{*}-e_{t}) \right]\\
		=&V_{t}
		+\mu_{t}^{-1}d_{t}^{2}\operatorname{tr}[(\theta^{*}-\hat{\theta}_{t})^{\top}\phi_{t}\phi_{t}^{\top}(\theta^{*}-\hat{\theta}_{t})]\\
		&-2\mu_{t}^{-1}d_{t}\operatorname{tr}[(\theta^{*}-\hat{\theta}_{t})^{\top}\phi_{t}\psi_{t}^{\top}]\\
        &+\mu_{t}^{-1}a_{t}d_{t}^{2}\phi_{t}^{\top}P_{t}\phi_{t}\operatorname{tr}[\psi_{t}\psi_{t}^{\top}]\\
		&+2\mu_{t}^{-1}a_{t}d_{t}^{2}\phi_{t}^{\top}P_{t}\phi_{t}\operatorname{tr}[\psi_{t}w_{t+1}^{\top}]\\
		&-2\mu_{t}^{-1}d_{t}\operatorname{tr}[(\theta^{*}-\hat{\theta}_{t})^{\top}\phi_{t}w_{t+1}^{\top}]\\
		&+\mu_{t}^{-2}d_{t}^{2}\phi_{t}^{\top}P_{t+1}\phi_{t}\operatorname{tr}[w_{t+1}w_{t+1}^{\top}].
	\end{aligned}
\end{equation}
Moreover, by Assumption \ref{assum_nonlinear}, we have
\begin{equation}\label{term}
tr[(\theta^{*}-\hat{\theta}_t)^{\top}\phi_{t}\psi_{t}^{\top}]\geq 2d_ttr[(\theta^{*}-\hat{\theta}_t)^{\top}\phi_{t}\phi_{t}^{\top}(\theta^{*}-\hat{\theta}_t)].
\end{equation}
Besides, by the definition of $\mu_{t}$ in \eqref{eq6}, we have $d_t\mu_{t}\geq d_t^{2}\overline{g}_t^{2}\phi_{t}^{\top}P_{t}\phi_{t}$, and hence
\begin{equation}\label{g5}
\begin{aligned}
d_{t}\geq \mu_{t}^{-1}d_{t}^{2}\overline{g}_t^{2}\phi_{t}^{\top}P_{t}\phi_{t}\geq a_{t}d_{t}^{2}\overline{g}_t^{2}\phi_{t}^{\top}P_{t}\phi_{t}.
\end{aligned}
\end{equation}
Note that
\begin{equation}\label{eqf7}
tr[\psi_{t}\psi_{t}^{\top}]\leq \overline{g}_{t}^{2}tr[(\theta^{*}-\hat{\theta}_t)^{\top}\phi_{t}\phi_{t}^{\top}(\theta^{*}-\hat{\theta}_t)],
\end{equation}
Then, by \eqref{term}, \eqref{g5}, and \eqref{eqf7}, we have 
\begin{equation}\label{com}
			\begin{aligned}	&\mu_{t}^{-1}d_{t}tr[(\theta^{*}-\hat{\theta}_t)^{\top}\phi_{t}\psi_{t}^{\top}]\geq \mu_{t}^{-1}a_{t}d_{t}^{2}\phi_{t}^{\top}P_{t}\phi_{t}tr[\psi_{t}\psi_{t}^{\top}].
		\end{aligned}
\end{equation}
Now, substituting $(\ref{term})$ and $(\ref{com})$ into $(\ref{27})$, and
summing up both sides of $\left(\ref{27} \right)$ from $\tau=0$ to $t$, we obtain
	\begin{align}
		V_{t}\leq &V_{0}-\sum_{\tau=0}^{t}\mu_{\tau}^{-1}d_{\tau}^{2}tr[(\theta^{*}-\hat{\theta}_\tau)^{\top}\phi_{\tau}\phi_{\tau}^{\top}(\theta^{*}-\hat{\theta}_\tau)]\nonumber\\
		&+2\sum_{\tau=0}^{t}\mu_{\tau}^{-1}a_{\tau}d_{\tau}^{2}\phi_{\tau}^{\top}P_{\tau}\phi_{\tau}tr[\psi_{\tau}w_{\tau+1}^{\top}]\nonumber\\
		&-2\sum_{\tau=0}^{t}\mu_{\tau}^{-1}d_{\tau}tr[(\theta^{*}-\hat{\theta}_{\tau})^{\top}\phi_{\tau}w_{\tau+1}^{\top}]\nonumber\\
		&+\sum_{\tau=0}^{t}\mu_{\tau}^{-2}d_{\tau}^{2}\phi_{\tau}^{\top}P_{\tau+1}\phi_{\tau}tr[w_{\tau+1}w_{\tau+1}^{\top}]. \label{32}
	\end{align}
We now analyze the last three terms on the right-hand side (RHS) of \eqref{32}. 
For the third term on the RHS of \eqref{32}, by the martingale convergence theorem (Theorem 2.8 in \cite{cg1991}) and Assumption~\ref{assum_noise}, we have
\begin{equation}\label{eqf9}
\begin{aligned}
&\sum_{\tau=0}^{t}\mu_{\tau}^{-1}d_{\tau}tr[(\theta^{*}-\hat{\theta}_{\tau})^{\top}\phi_{\tau}w_{\tau+1}^{\top}]
\\
=&o\left(\sum_{\tau=0}^{t}\mu_{\tau}^{-1}d_{\tau}^{2}tr\left[(\theta^{*}-\hat{\theta}_{\tau})^{\top}\phi_{\tau}\phi_{\tau}^{\top}(\theta^{*}-\hat{\theta}_{\tau})\right]\right),\;\;a.s.
\end{aligned}
\end{equation} 
 For the fourth term, note that $a_{\tau}d_{\tau}^{2}\phi_{\tau}^{\top}P_{\tau}\phi_{\tau}\leq 1$ and $tr[\psi_{\tau}\psi_{\tau}^{\top}]\leq \overline{g}_{\tau}^2tr[(\theta^{*}-\hat{\theta}_{\tau})^{\top}\phi_{\tau}\phi_{\tau}^{\top}(\theta^{*}-\hat{\theta}_{\tau})]$. 
Then, again applying Theorem 2.8 in \cite{cg1991} together with \eqref{b9} and \eqref{g5}, we obtain
 \begin{equation}\nonumber
 \begin{aligned}
&\sum_{\tau=0}^{t}\mu_{\tau}^{-1}a_{\tau}d_{\tau}^{2}\phi_{\tau}^{\top}P_{\tau}\phi_{\tau}tr[\psi_{\tau}w_{\tau+1}^{\top}]\\
 =&o\left(\sum_{\tau=0}^{t}(\mu_{\tau}^{-1}a_{\tau}d_{\tau}^{2}\phi_{\tau}^{\top}P_{\tau}\phi_{\tau})^2tr[\psi_{\tau}\psi_{\tau}^{\top}]\right)\\
=&o\left(\sum_{\tau=0}^{t}\mu_{\tau}^{-1}d_{\tau}^2tr[(\theta^{*}-\hat{\theta}_{\tau})^{\top}\phi_{\tau}\phi_{\tau}^{\top}(\theta^{*}-\hat{\theta}_{\tau})]\right),\;\;a.s.
 \end{aligned}
 \end{equation}
For the last term on the RHS of \eqref{32}, applying \eqref{b8} and, for all $k\ge 0$, setting ${\bf w}_k=\mu_{k}^{-1/2}d_k$ and ${\bf A}_k=P_{k+1}^{-1}$ in \cite[Lemma~2(ii)]{lw1982}, we obtain
\begin{equation}\label{eqff11}
\sum_{\tau=0}^{t}\mu_{\tau}^{-1}d_{\tau}^{2}\phi_{\tau}^{\top}P_{\tau+1}\phi_{\tau}
=O\left(\log r_t\right),a.s.
\end{equation}
Let \(D_t= \mu_t^{-1}d_t^{2}\phi_t^{\top}P_{t+1}\phi_t\) for \(t\ge 0\).
By \eqref{eqff11} and the Abel--Dini--Pringsheim theorem (see, e.g., \cite[Theorem 1a]{Hildebrandt}), it follows almost surely that
\begin{align}\nonumber
\sum_{\tau=0}^{t}\frac{\mu_{\tau}^{-1}d_{\tau}^{2}\phi_{\tau}^{\top}P_{\tau+1}\phi_{\tau}}{(1+\log r_{\tau})^{1+\delta}}
=O\left(\sum_{\tau=0}^{t}\frac{D_{\tau}}{(\sum_{k=0}^{\tau} D_{k})^{1+\delta}}\right)
=O\left(1\right).
\end{align}
Since \(\mu_t\ge (1+\log r_t)^{1+\delta}\) for all \(t\ge 0\), we further have
\begin{equation}\label{eqf11}
\sum_{\tau=0}^{t}\mu_{\tau}^{-2}d_{\tau}^{2}\phi_{\tau}^{\top}P_{\tau+1}\phi_{\tau}
=O\left(1\right),\;a.s.
\end{equation}
From \eqref{eqf11} and $\sup_{t\geq 0}\mathbb{E}[\|w_{t+1}\|^{2}\mid \mathcal{F}_t]<\infty$, we have
\begin{align}
\sum_{\tau=0}^{t}\mu_{\tau}^{-2}d_{\tau}^{2}\phi_{\tau}^{\top}P_{\tau+1}\phi_{\tau}\mathbb{E}[\|w_{\tau+1}\|^{2}\mid \mathcal{F}_\tau]
=O\left(1\right).\label{e141}
 \end{align}
Moreover, the sequence $\{\|w_{t+1}\|^{2}-\mathbb{E}[\|w_{t+1}\|^{2}\mid \mathcal{F}_t]\}_{t\ge 0}$ is a martingale difference sequence with respect to $\{\mathcal{F}_t\}$. Thus, by the martingale convergence theorem (Theorem~2.8 in \cite{cg1991}) and \eqref{eqf11}, we obtain 
\begin{equation}\label{eqf15}
\sum_{\tau=0}^{t}\mu_{\tau}^{-2}d_{\tau}^{2}\phi_{\tau}^{\top}P_{\tau+1}\phi_{\tau}(\|w_{\tau+1}\|^{2}-\mathbb{E}[\|w_{\tau+1}\|^{2}\mid \mathcal{F}_{\tau}])=O(1),\text{a.s.}
\end{equation}
Substituting \eqref{eqf9}-\eqref{eqf15} into \eqref{32} yields Lemma~\ref{lem6}. 
\end{pf}   



\end{document}